\documentclass[runningheads]{llncs}
\usepackage[T1]{fontenc}
\usepackage{graphicx}
\usepackage{amsmath}
\usepackage{amsfonts}
\usepackage{amssymb}
\usepackage{mathtools}
\usepackage{dsfont}
\usepackage{hyperref}
\usepackage[nameinlink,noabbrev]{cleveref}
\usepackage{complexity}
\usepackage{tabularx}
\usepackage{booktabs}
\usepackage[disable]{todonotes}

\newtheorem{ruleenv}{Rule}
\Crefname{ruleenv}{Rule}{Rules}

\usepackage{thmtools}
\usepackage{thm-restate}

\usepackage{environ} 

\newif\iflong
\newif\ifshort

\ifshort
\longfalse
\else
\longtrue
\fi

\newif\ifhideproofs

\ifshort
\hideproofstrue 
\fi

\makeatletter
\let\proof@orig\proof
\let\endproof@orig\endproof

\newenvironment{keepproof}[1][]%
  {\proof@orig[#1]}{\endproof@orig}

\ifhideproofs
  \RenewEnviron{proof}[1][]{%
  }
\else
  \RenewEnviron{proof}[1][]{\proof@orig[#1]\BODY\endproof@orig}
\fi
\makeatother

\DeclareMathOperator{\anc}{anc}
\DeclareMathOperator{\desc}{desc}

\usepackage{xparse}

\makeatletter
\newcommand\saveTheoremNumber[1]{%
  \expandafter\xdef\csname saved@thm@#1\endcsname{\number\value{theorem}}%
  \@ifundefined{theHtheorem}{}{%
    \expandafter\xdef\csname saved@Hthm@#1\endcsname{\theHtheorem}%
  }%
}

\NewDocumentEnvironment{theoremreused}{m o}{%
  \begingroup
  \@ifundefined{saved@thm@#1}{%
    \PackageError{theoremreused}{No saved theorem number for `#1'. Did you call \string\saveTheoremNumber{#1}?}{}%
  }{}%
  \setcounter{theorem}{\numexpr\csname saved@thm@#1\endcsname-1\relax}%
  \@ifundefined{theHtheorem}{}{%
    \edef\theHtheorem{\csname saved@Hthm@#1\endcsname R}%
  }%
  \IfNoValueTF{#2}{\begin{theorem}}{\begin{theorem}[#2]}%
}{%
  \end{theorem}%
  \endgroup
}
\makeatother

\begin{document}

\title{The Parameterized Complexity of Geometric 1-Planarity\thanks{The author acknowledges support from the Vienna Science and Technology Fund (WWTF)
[10.47379/ICT22029] and the Austrian Science Fund (FWF) [10.55776/Y1329], and thanks Robert
Ganian for helpful discussions.}}

\author{Alexander Firbas\orcidID{0009-0007-2049-2144}}
\authorrunning{A. Firbas}
\institute{TU Wien, Austria\\
\email{afirbas@ac.tuwien.ac.at}}

\maketitle

\begin{abstract}
A graph is \emph{geometric 1-planar} if it admits a straight-line drawing where each edge is crossed at most once.  
We provide the first systematic study of the parameterized complexity of recognizing geometric 1-planar graphs.
By substantially extending a technique of Bannister, Cabello, and Eppstein, combined with Thomassen’s characterization of 1-planar embeddings that can be straightened, we show that the problem is fixed-parameter tractable when parameterized by treedepth.  
Furthermore, we obtain a kernel for \textsc{Geometric 1-Planarity} parameterized by the feedback edge number $\ell$.
As a by-product, we improve the best known kernel size of $\mathcal{O}((3\ell)!)$ for \textsc{1-Planarity}~\cite{1planar_parameterized} and \textsc{$k$-Planarity}~\cite{k_planarity_gd} under the same parameterization to $\mathcal{O}(\ell \cdot 8^{\ell})$.  
Our approach naturally extends to \textsc{Geometric $k$-Planarity}, yielding a kernelization under the same parameterization, albeit with a larger kernel.
Complementing these results, we provide matching lower bounds: \textsc{Geometric 1-Planarity} remains \NP-complete even for graphs of bounded pathwidth, bounded feedback vertex number, and bounded bandwidth.

\keywords{geometric 1-planarity \and parameterized complexity \and graph drawing}
\end{abstract}

\section{Introduction}

A graph is \emph{1-planar} if it admits a drawing where each edge is crossed at most once, and \emph{geometric 1-planar} if it admits such a drawing with straight-line edges. Recognizing 1-planar graphs is \NP-complete under various restrictions; see the survey by Kobourov, Liotta, and Montecchiani~\cite{survey}. Despite this hardness, the parameterized complexity of \textsc{1-Planarity} is comparatively well understood. In their influential work, Bannister, Cabello, and Eppstein analyzed structural parameterizations, giving an essentially tight classification of which parameters yield fixed-parameter tractability (e.g., treedepth, feedback edge number) and which retain \NP-completeness (e.g., bandwidth, pathwidth)~\cite{1planar_parameterized}. This has been recently extended to \textsc{$k$-Planarity}~\cite{k_planarity_gd}, and there is complementary work that studies orthogonal parameterizations, such as completion from partially predrawn instances~\cite{extending1planar} and the total number of crossings~\cite{PartiallyPredrawnCrossingNumber}.

The topological and geometric settings exhibit different combinatorial and algorithmic behavior.
Geometric 1-planar graphs on \(n\) vertices have at most \(4n-9\) edges (tight for infinitely many \(n\ge 8\))~\cite{density_geometric_1_planar}. By comparison, 1-planar graphs admit up to \(4n-8\) edges (tight for \(n\ge 12\))~\cite{survey}. Hence, not every 1-planar graph is geometric 1-planar.

From a complexity perspective, there is strong evidence that the geometric variant is strictly harder: for every fixed \(k \ge 1\), \textsc{Geometric \(k\)-Planarity} is \NP-hard~\cite{geometric_1_planarity_np_hard,geometric_k_planar_complexity}, and for \(k=867\) it is \(\exists\mathbb{R}\)-complete~\cite{geometric_k_planar_complexity}, implying that, for this fixed \(k\), the problem is not even in \NP{} unless \(\exists\mathbb{R}=\NP\). In contrast, the topological counterpart \textsc{\(k\)-Planarity} is trivially in \NP{} for every fixed \(k\).

\begin{figure}[t]
    \centering
    \includegraphics[page=7]{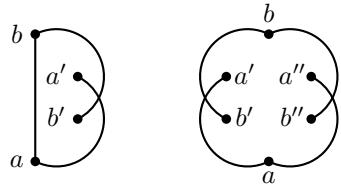}
\caption{Thomassen’s $B$ (left) and $W$ (right) configurations~\cite{straightening_characterization}; see \Cref{definition:bw_configs} for their formal definition.}
    \label{figure:bw_configs}
\end{figure}

Central to understanding geometric 1-planarity is Thomassen’s straightening characterization: a 1-planar embedding can be straightened if and only if it contains no \(B\)- or \(W\)-configurations (see \Cref{figure:bw_configs}). This result was first conjectured by Eggleton~\cite{Eggleton1986}, proved by Thomassen in 1988~\cite{straightening_characterization}, and later rediscovered by Hong et al.~\cite{rediscovered}, who also showed the \(B/W\) configurations can be detected in linear time in a given embedding, yielding an \NP-certificate for \textsc{Geometric 1-Planarity}. No straightening characterization is known for 2-planar embeddings and moreover, any such characterization would have to be infinite~\cite{straightening_characterization}. For 3-planar embeddings, in a certain natural sense, no such characterization exists~\cite{nagamochi2013straight}. If one relaxes topological equivalence and only preserves the set of crossing pairs in a 1-planar embedding, a different characterization is available~\cite{reembedding_crossing_pairs}.

Two algorithmic applications enabled by Thomassen’s characterization are worth noting. First, every triconnected 1-planar graph admits a drawing that is straight-line except for one bent edge on the outer face~\cite{grid_drawings}. Second, IC-planar graphs (1-planar graphs admitting an embedding in which only independent edges cross) can be drawn straight-line in linear time~\cite{ic_planar}.

Beyond recognition and concrete drawing algorithms, there is also a quantitative perspective on straight-line drawings. The \emph{rectilinear local crossing number} \(\overline{\textup{lcr}}(\cdot)\) measures how many crossings per edge are required to draw a graph with straight-line edges~\cite{schaefer2012graph}. It is known exactly for all complete graphs~\cite{rectilinear_crossing_number_complete} and for most complete bipartite graphs~\cite{rectilinear_crossing_number_bipartite}.

Finally, in recent parallel and independent work, Cabello et al.~\cite{crossing_types}
study \textsc{1-Planarity} (and its geometric variant) restricted to a subset of allowed crossing types,
where a crossing type is defined by the subgraph induced by the endpoints of the two crossing
edges. They characterize for which subsets the problem admits a fixed-parameter
algorithm parameterized by treewidth and for which subsets it remains \NP-hard even for graphs of
constant pathwidth. In particular, for the unrestricted problem \textsc{Geometric 1-Planarity}
considered here (i.e., allowing all crossing types), their results imply hardness.
This naturally raises the question of how the problem behaves for structural parameters that are
more restrictive than treewidth, such
as treedepth.

\subsection{Contributions}

As our first result, we obtain:
\begin{restatable}{theorem}{fpttreedepth}\label{theorem:fpt_treedepth}
Let $G$ be a graph on $n$ vertices with treedepth at most $d$.
Then \textsc{Geometric 1-Planarity} can be decided in time
\[
\mathcal{O}\!\bigl(2^{\,2^{\,2^{\,2^{\mathcal{O}(d)}}}}\cdot n^{\mathcal{O}(1)}\bigr).
\]
\end{restatable}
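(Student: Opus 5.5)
We follow the strategy of Bannister, Cabello, and Eppstein for \textsc{1-Planarity} parameterized by treedepth: repeatedly shrink the instance to a kernel whose size depends only on $d$, then solve the kernel by brute force. The brute force uses Thomassen's characterization. For a kernel on $N$ vertices, we enumerate all 1-planar embeddings; there are $2^{\mathcal{O}(N\log N)}$ of them, viewed as planarizations of choices of crossing pairs. For each, we check in linear time for $B$- and $W$-configurations (Hong et al.). This certifies geometric 1-planarity exactly, with no real-number computation. The question is therefore how to obtain a kernel of size bounded by a function of $d$ that is four-fold exponential, since the final running time is the exponential of the kernel size.

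\textbf{Setup.} Fix a treedepth decomposition (elimination forest) of depth $d$; it is computable in FPT time. Each vertex $v$ has at most $d$ ancestors. For each $v$, its subtree $T_v$ induces a subgraph attached to $\anc(v)$. Define the \emph{type} of $T_v$ bottom-up as the isomorphism class of $G[T_v]$ together with its adjacencies to the ordered ancestor list. Equivalently, it is the multiset of child types plus the adjacency pattern of $v$ to its ancestors. Types are counted bottom-up: at depth $d$ there are $2^{\mathcal{O}(d)}$ types. Each higher level takes multisets with bounded multiplicity, so the count grows by one exponential per level. To keep the tower height independent of $d$, we instead argue a cleaner bound via \emph{reduction}.

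\textbf{Reduction rule.} If a vertex $v$ has more than $c$ children of the same type, delete one such child subtree. Here $c$ depends only on $d$ and on the number of types present. After this rule applies exhaustively, the total size is bounded. Soundness in one direction is trivial, since geometric 1-planarity is closed under taking subgraphs. The converse is the main obstacle. Given a geometric 1-planar drawing of the reduced graph, we must reinsert a copy of a subtree that has many isomorphic siblings. The topological argument of Bannister et al. works as follows. Among the many copies attached to the same $\le d$ ancestors, pigeonhole yields copies drawn "alike" relative to the ancestors. A fresh copy is then placed parallel to an existing one without new crossings. This is possible because each copy's edges to ancestors are routed in a thin neighbourhood of an existing copy, and 1-planarity leaves each edge with at most one crossing. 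Geometrically, we would instead argue at the level of embeddings. We take the 1-planar embedding of the reduced graph, which is $B/W$-free, and insert the duplicate copy topologically as a near-parallel clone of a representative copy. We choose the representative to be a copy that is crossing-free with respect to the ancestor edges, or whose crossings are "internal". We then verify that the cloning creates no $B$- or $W$-configuration, and finish by applying Thomassen's theorem. The key step is showing that, among $c$ copies, pigeonhole over the bounded number of crossing patterns with the at most $d$ ancestors forces two copies (and hence a cloneable one) whose crossing structure is locally confined. A $B$ or $W$ configuration, which involves a crossing pair plus the endpoint arrangement, could then only arise if it already existed in the original copy. Handling the case where the copies cross ancestor edges or each other is the hardest part. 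There I would first try to show that with $\Omega(d)$ copies, some copy is entirely crossing-free toward the rest of the drawing, using that each ancestor edge has at most one crossing.

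\textbf{Counting.} The number of types at each level is at most exponential in the bound from the level below, times the $2^d$ adjacency patterns. Tracking this through $d$ levels naively gives a tower of height $d$. So I would redo the type count on the kernel size directly. If the threshold $c$ and the branching at each level are $2^{2^{\mathcal{O}(d)}}$-bounded, the kernel has $2^{2^{2^{\mathcal{O}(d)}}}$ vertices. Brute force over embeddings is then $2^{\mathcal{O}(N\log N)}$, which gives the stated four-fold exponential bound times $n^{\mathcal{O}(1)}$ for computing the decomposition and applying the reductions.
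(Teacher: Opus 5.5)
There is a genuine gap exactly where you say the main obstacle is. You never show that a deleted copy can be reinserted geometrically, and the route you suggest does not work.

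\textbf{Reinsertion.} Your key claim is that among $\Omega(d)$ isomorphic copies some copy is crossing-free toward the rest of a given drawing. This is false. Take $2k$ copies of the path $a\,x_i\,b$ (resp.\ $a\,y_i\,b$) and place $a=(0,1)$, $b=(0,-1)$, $x_i=(4^i,-\tfrac12)$, $y_i=(4^i,\tfrac12)$. This drawing is geometric 1-planar, its only crossings are $ax_i$ with $by_i$, and so every copy is crossed by another copy.

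Even a copy whose crossings are all internal cannot in general be cloned near-parallel: a piece with a vertex of degree at least three cannot live in a thin neighbourhood of itself without crossing it. The clone must instead be placed as a whole into some region. That requires three things you do not supply:
\begin{itemize}
\item[(i)] a region shared by all attachment vertices exists;
\item[(ii)] the piece can be drawn with its attachments on its \emph{outer} region;
\item[(iii)] the clone's $(a,b)$-crossing pairs sit at the right places in the rotation at $a$. Otherwise $B$- or $W$-configurations appear with existing pairs or with $ab$, since the pattern $L^{*}[M]R^{*}$ of \Cref{lemma:clockwise_characterization} must be kept.
\end{itemize}
These are exactly the paper's ingredients:
\begin{itemize}
\item \Cref{lemma:parallel} gives (i) once more than $2^{d}$ siblings share an attachment pair.
\item The lasso argument (\Cref{lemma:bounded_non_ab_outerplanar,lemma:bounded_non_a_outerplanar}) shows that in a yes-instance only $\mathcal{O}(m)$ siblings can violate (ii).
\item The fusion lemmas (\Cref{lemma:2fusion,lemma:1fusion}) secure (iii). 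They split an outer drawing into its $L$-, $M$-, $R$-parts, splice these into the matching intervals, and only then invoke Thomassen.
\item For attachment sets of size at least three no reinsertion is needed, since \Cref{lemma:bounded_claws} makes many such siblings a rejection.
\end{itemize}

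\textbf{Counting.} Reducing by isomorphism type inherently gives a tower of height $d$, because the number of types at one level is exponential in the number at the level below. You notice this, but your assumption that thresholds and branching are $2^{2^{\mathcal{O}(d)}}$-bounded is not supported by anything in the proposal.

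The paper gets its bound by dropping types altogether. For each attachment pair it keeps $2^{d}+1$ baseline children. It then deletes \emph{every} overflow child that is $(a,b)$-outer geometric 1-planar, whatever its type, deciding this by brute force on a bounded-size subgraph. The remaining children are bounded by $2m+2$ via the lasso argument. This gives $N_\ell\le 2^{\mathcal{O}(\ell+d)}N_{\ell-1}^{3}$, i.e.\ blocks with $2^{\mathcal{O}(d3^{d})}$ vertices.

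Single-vertex attachments are handled with the same outer/non-outer split on the block--cut tree. Its height is $\mathcal{O}(2^{d})$ because paths have length less than $2^{d}$. This yields a triple-exponential kernel and hence the four-fold exponential running time.
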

\saveTheoremNumber{theorem:fpt_treedepth}

When deriving a treedepth-\FPT{} algorithm, one essentially aims, given some set of vertices whose removal disconnects the graph (sometimes called a \emph{modulator}) to bound the number of components that connect to the remaining graph only via this set by a function depending only on the treedepth.

\looseness=-1
For topological 1-planarity, the algorithm of Bannister, Cabello, and Eppstein~\cite{1planar_parameterized} essentially considers two cases: First, if the modulator has at least three vertices, each component induces a ``claw'' graph \(K_{1,3}\) connected to the modulator. Since bounded treedepth implies paths are of bounded length, each claw has bounded size. Hence, too many such components will create too many crossings, contradicting 1-planarity. Second, if the modulator has size two, again from the fact that the path length is bounded, if there are more than a certain number of such components, one can show that the modulator vertices need to be drawn in a shared region, i.e., one can draw an uncrossed line between them. Thus, the instance decomposes into independent instances: if there is a way to draw each component with the two modulator vertices on the same region, one can ``glue'' them back together. The case of modulator size one does not need to be considered, since in the topological case, via a simple ``gluing'' argument, one can treat each biconnected component (i.e., each \emph{block}) separately.

\looseness=-1
The correctness of gluing hinges on one important assumption: for a topological 1-planar drawing, the choice of outer region is immaterial. That is, two 1-planar drawings can always be glued at a shared vertex without creating new crossings by selecting the outer regions of the subdrawings accordingly. For geometric 1-planarity, this freedom of choice of the outer region vanishes. Indeed, the straightening characterization of Thomassen~\cite{straightening_characterization} is sensitive to the choice of outer region.

As we cannot assume biconnectivity, our algorithm works in two stages.

(i) \emph{Inside blocks}, we process the graph along a treedepth decomposition. The \emph{3-modulator case} works as in the topological setting. For \emph{2-modulators}, we process bottom–up at each node, grouping children by their common two-vertex attachment. We retain only a small baseline number of such children so that any valid solution must place the two attachment vertices on a shared region. Among the remaining children we decide by brute force which ones admit a drawing with those two vertices on the \emph{outer} region; these pieces are “glueable’’—by a nontrivial application of Thomassen’s characterization they can always be reinserted without creating new crossings—and we discard them. If there are too many of the remaining (non-outer) pieces, some edge would be forced to receive more than one crossing, contradicting 1-planarity; otherwise only a bounded number remain. This bottom–up filtering bounds, as a function of the treedepth, the number of vertices inside each block.

\looseness=-1
(ii) \emph{Across blocks}, we work on the block–cut tree, again bottom–up. At a cut vertex we examine each child subgraph (the union of blocks below that child) and, by brute force, decide which ones admit a drawing with the cut vertex on the \emph{outer} region; these are glueable at the cut vertex and can be safely deleted. If too many non-outer children remain, a 1-planar drawing would be impossible, so we reject.
This processing allows us to bound the maximum degree of the block-cut tree.
Since paths are of bounded length, the height of the block-cut tree is bounded as well.
In total, this allows us to bound the number of blocks (which in turn have only a bounded number of vertices) solely in terms of the treedepth.

\medskip
For our next set of results, we consider a parameter incomparable with treedepth, the feedback edge number, also referred to as the cyclomatic number.
First, we consider topological 1-planarity.
The technique of Bannister, Cabello, and Eppstein for their feedback-edge-number kernel yields a kernel of size $\mathcal{O}((3\ell)!)$, where $\ell$ is the feedback edge number.
Essentially, they decompose the input graph into $\mathcal{O}(\ell)$ degree-2 paths and show that, in any hypothetical solution, once a certain threshold of crossings is exceeded, one finds a \emph{local} configuration of consecutive crossings that can be redrawn to reduce the number of crossings.
Thus one obtains a kernel by shortening the degree-2 paths so they do not exceed the threshold.
They further show that this kernel size is optimal under this strategy \cite[Lemma 10]{1planar_parameterized}.

We break this barrier using a \emph{global}, rather than \emph{local}, redrawing argument.
We also decompose the input graph into degree-2 paths. Then, we order them by length, and observe a qualitative shift once there is a sufficiently large gap: the \emph{long} paths are long enough to interact arbitrarily with the \emph{short} paths, and any two long paths can be redrawn to cross at most once by applying \emph{Reidemeister moves}, a foundational tool in knot theory \cite{reidemeister}.
We then obtain our kernel by shortening the long paths to the fewest edges that still qualify them as “long.”
Using a recursive bound and an observation that allows us to assume the shortest path has bounded length, we obtain an $\mathcal{O}(\ell \cdot 8^{\ell})$-edge kernel for \textsc{1-Planarity}.

Moreover, at the cost of a higher base in the exponent, applying Thomassen’s characterization, we extend this kernel to the geometric case.

\begin{restatable}{theorem}{fenfpt}\label{theorem:fen_fpt}
\textsc{1-Planarity}, parameterized by the feedback edge number~$\ell$,
admits a kernel with $\mathcal{O}(\ell \cdot 8^\ell)$ edges.
\textsc{Geometric 1-Planarity}, under the same parameterization,
admits a kernel with $\mathcal{O}(\ell \cdot 27^\ell)$ edges.
\end{restatable}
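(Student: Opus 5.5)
Our approach follows the global redrawing strategy sketched in the introduction.

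\emph{Standard reduction.} First we exhaustively delete degree-1 vertices. Such vertices never affect (geometric) 1-planarity: a pendant edge can be drawn as a very short segment inside a face incident to its attachment vertex. We also discard acyclic components. Every remaining vertex has degree at least $2$. The vertices of degree at least $3$ number $\mathcal{O}(\ell)$, and the graph decomposes into $m=\mathcal{O}(\ell)$ maximal degree-2 paths. More precisely, $m\le 3\ell$: the kernel graph after contracting degree-2 vertices has cyclomatic number $\ell$ and minimum degree $3$. It therefore remains to bound the length of each path, since the kernel size is $m$ times the maximum path length.

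\emph{Gap argument.} Sort the paths by length, $L_1\le L_2\le\dots\le L_m$. Fix thresholds $t_1<t_2<\dots$ defined recursively, say $t_{i+1}=c\cdot(t_1+\dots+t_i)+c'$ with $c$ a constant. The intent is that a path of length at least $t_{i+1}$ is "long" relative to all paths of length below $t_i$. Let $i$ be the first index with $L_{i+1}>t_{i+1}$, i.e.\ the first large gap. The short paths $P_1,\dots,P_i$ carry at most $\sum_{j\le i}L_j$ edges, and hence at most that many crossings.

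Now take any 1-planar drawing and look at the long paths. We redraw so that:
\begin{itemize}
\item every long path crosses each short-path edge at most once and never crosses itself;
\item any two long paths cross at most once, by removing bigons and monogons via Reidemeister-type moves (type I removes self-crossings, type II removes double crossings between two paths).
\end{itemize}
These moves do not increase the number of crossings on any edge. We then reroute the crossings along each long path so that consecutive crossings use distinct edges of that path. After this, a long path needs only (number of short edges) $+$ (number of other long paths) $+\mathcal{O}(1)$ edges to realise all its crossings with one crossing per edge. So shortening every long path to exactly $t_{i+1}$ edges preserves 1-planarity in both directions; the reverse direction is trivial, since subdividing keeps a drawing 1-planar.

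We can also assume $L_1$ is bounded, because the shortest path can be assumed short in some sense; we would argue that the first threshold $t_1$ is a constant. Solving the recursion with $c$ chosen appropriately gives $t_m=\mathcal{O}(8^{m/3})$. With $m\le 3\ell$ this yields $\mathcal{O}(\ell\cdot 8^\ell)$ edges in total.

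\emph{Geometric case.} The main obstacle is that Reidemeister moves act on topological drawings, while straightness must be certified afterwards. We work with the 1-planar embedding and use Thomassen's characterization: an embedding is straightenable iff it has no $B$- or $W$-configuration. We must ensure that the redrawing of long paths, and the distribution of their crossings onto distinct edges, creates no $B$ or $W$ configuration.

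The plan is to leave enough slack that between any two consecutive crossings on a long path there is an uncrossed edge, or a bounded number of them. Then the endpoints of any crossing pair on long paths are degree-2 vertices with no shared neighbours. This rules out the $B$ and $W$ patterns, since both require the crossing edges' endpoints to be joined by extra edges that enclose them.

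The slack factor raises the per-level constant, with the recursion becoming $t_{i+1}\approx 3(\ldots)$ per contracted edge. This gives base $27$ instead of $8$, so the kernel has $\mathcal{O}(\ell\cdot 27^\ell)$ edges. Verifying that the Reidemeister simplifications themselves never introduce $B$/$W$ configurations on short-path edges, which are untouched, is the step I expect to require the most care.
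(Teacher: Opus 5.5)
Your outline follows the paper's strategy: sort the maximal degree-2 paths, cut at the first large gap, simplify the long paths by Reidemeister~I/II moves, resubdivide, and in the geometric case add slack and apply Thomassen. Two steps do not work as written.

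\emph{Geometric case.} Slack only \emph{between consecutive} crossings leaves the ends of a long path uncontrolled. The first crossed edge $v_0w_1$ may be incident to the branch vertex $v_0$, so your degree-2 argument does not apply. Moreover, collapsing the original polygonal route from $v_0$ to the crossing $c$ into one edge can create a $B$-configuration. A polygon $v_0\ldots c\,z$ (closed by a short edge $v_0z$ and the crossing edge $zz'$) is legal in straight-line form when it is reflex at $c$ with $z'$ and $w_1$ inside; after collapsing it becomes a triangle with those vertices inside. The paper avoids this by putting a subdivision vertex on \emph{each} side of every crossing, so both endpoints of every crossed long-path edge are internal vertices with one uncrossed edge, and then checks the configurations locally. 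The step you flag as hardest is actually immediate: a $B$/$W$-configuration using only short-path edges lies in the short-path subdrawing, which is literally the original straight-line one.

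Be aware that even one uncrossed edge at each end is not quite enough. If $u_1w_1$ is crossed by an edge $v_0b'$ at the path's own endpoint, then $v_0u_1$, $u_1w_1$, $v_0b'$ form a $B$-configuration whenever the route from $v_0$ to $c$, closed by $cv_0$, is reflex at $c$. This can happen without any Reidemeister move. With two uncrossed edges before the first and after the last crossing the argument closes. For a crossed long-path edge $aa'$, the other neighbour $b$ of $a$ has as second edge either an uncrossed edge or the next crossed edge of the same path, which cannot cross $aa'$ after Reidemeister~I. A $W$ needs two crossed edges at both $a$ and $b$, which is impossible for internal vertices. This costs $\mathcal{O}(1)$ edges per path and keeps the base $27$.

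\emph{Counting.} Neither $t_1$ nor the additive term $c'$ can be constant, because a long path may have to cross up to $m-i-1$ other long paths. You need $c=1$, $t_1\ge m-1$, $c'\ge m-1$ (resp.\ $c=2$, $c'=2(m-1)+\mathcal{O}(1)$ geometrically), which gives $t_{i+1}=2^{i}(m-1)$, not $\mathcal{O}(8^{m/3})$. With this, ``$m$ times the maximum length'' only yields $\mathcal{O}(\ell^2 8^\ell)$. The claimed bound needs the finer count $T_i+(m-i)\,t_{i+1}$, using $T_i=(2^i-1)(m-1)$ and $(m-i)2^i\le 2^m$; compare the paper's recurrence $S_i=2S_{i-1}+p-2$. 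Finally, ``we can assume $L_1$ is bounded'' needs a reason. Either it is the $i=0$ case of your gap argument (all paths long, shortened to $t_1\ge m-1$), or, as in the paper: if every path has length at least $m-1$, place the branch vertices in convex position and draw each path along a straight segment, so the graph is geometric 1-planar outright.
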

\saveTheoremNumber{theorem:fen_fpt}

As an immediate corollary (\Cref{corollary:fen_k_planarity}), we obtain an improved kernel for \textsc{$k$-Planarity} under the same parametrization with $\mathcal{O}(\ell \cdot 8^{\ell})$ edges, thereby improving upon the previous best-known kernel of size $\mathcal{O}((3\ell)!)$ due to Gima, Kobayashi, and Okada~\cite{k_planarity_gd}.

Our technique also extends to \textsc{Geometric $k$-Planarity}, at the expense of a larger kernel size.
Due to the lack of a straightening characterization for \textsc{$k$-Planarity} with $k>1$, we provide a direct redrawing argument: we triangulate the graph with respect to the short paths and redraw the long paths inside each triangle with few crossings.
Thus we obtain:
\begin{restatable}{theorem}{geomkplanarfpt}\label{theorem:geom_k_planar_fen_fpt}
\textsc{Geometric $k$-Planarity}, parameterized by the feedback edge number~$\ell$,
admits a kernel with $\mathcal{O}\big(2^{\mathcal{O}(3^{\ell}\log \ell)}\big)$ edges.
\end{restatable}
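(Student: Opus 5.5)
We follow the same template as for \Cref{theorem:fen_fpt}, replacing the appeal to Thomassen's characterization by a direct geometric redrawing. First apply the standard reductions: repeatedly delete degree-1 vertices, which is safe for geometric $k$-planarity since a pendant vertex can be placed very close to its neighbour inside an incident face without new crossings. Then suppress degree-2 vertices, so the remaining graph $H$ has minimum degree at least $3$, $\mathcal{O}(\ell)$ vertices and $\mathcal{O}(\ell)$ edges. Each edge of $H$ stands for a degree-2 path of $G$, so $G$ is determined by $H$ together with $m=\mathcal{O}(\ell)$ path lengths. The kernel is obtained by capping these lengths.

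Sort the paths by length $L_1\le L_2\le\dots\le L_m$ and look for a gap index $i$ with $L_{i+1} > f(L_1,\dots,L_i)$. The paths $P_1,\dots,P_i$ are \emph{short} and the remaining ones are \emph{long}; the threshold $f$ is fixed below. The key lemma states that every path longer than $f$ can be shortened to exactly $f+1$ edges without changing the answer. One direction is immediate: subdividing an edge of a straight-line $k$-planar drawing preserves geometric $k$-planarity, so lengthening a path is always safe. For the other direction, suppose $G$ has a straight-line $k$-planar drawing. Keep the short paths and the branch vertices fixed, which yields a plane-like skeleton after planarizing the crossings among short paths. Triangulate this arrangement, for instance by a constrained triangulation of the drawn segments together with a bounding triangle. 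Each long path, viewed as a curve, now passes through a sequence of triangles. Replace it by a polygonal route that crosses the triangulation edges only where needed, and within each triangle use straight segments that cross each short-path segment at most once. Two long paths can be rerouted within a triangle so that they cross each other at most $\mathcal{O}(1)$ times per triangle. Placing a bend vertex at every triangle boundary crossing and distributing further bends so that each straight piece receives at most $k$ crossings turns the route into a drawing of the shortened path. The required number of bends is at most the number of triangles visited times a constant, which is bounded by the complexity of the triangulation. That complexity is polynomial in $\sum_{j\le i}L_j$ and in the number of crossings among short paths, which is at most $k\sum_{j\le i} L_j$. This gives $f(L_1,\dots,L_i)=\mathrm{poly}(\ell,\sum_{j\le i} L_j)$.

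The main obstacle is this redrawing step. Long paths must not gain too many crossings with one another or with short-path edges, and a route through a triangle sequence may revisit triangles. I would resolve this by taking the route to be a shortest path in the dual of the triangulation, using homotopy-preserving shortcuts (as in the Reidemeister argument) so that each triangle is visited $\mathcal{O}(1)$ times. In addition, each short edge must end up with at most $k$ crossings. This can be guaranteed by arranging that the long path crosses each short segment no more often than in the original drawing, which a shortest homotopic route ensures.

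Finally, the size bound. If no gap exists, then $L_{j+1}\le f(L_1,\dots,L_j)$ for all $j$. Applying this with a polynomial $f$ of degree $3$ over $m=\mathcal{O}(\ell)$ paths produces a tower of squarings, $L_m \le L_1^{3^{m}}\cdot \ell^{\mathcal{O}(3^{m})}$. We may assume that $L_1$ is bounded by $\mathrm{poly}(\ell)$, by the same observation as in the proof of \Cref{theorem:fen_fpt} that the shortest path can be capped. This gives total size $2^{\mathcal{O}(3^\ell\log\ell)}$. If a gap exists, shortening all long paths to $f+1$ reduces to the no-gap case on the short prefix with the same bound.
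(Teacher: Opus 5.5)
Your outline matches the paper's: first large gap in the sorted path lengths, a constrained triangulation of the planarized short paths plus a bounding triangle, polynomially many triangles, and a cubic recursion. The redrawing step, however, breaks exactly at the obstacle you flag. A shortest \emph{homotopic} route need not visit each triangle $\mathcal{O}(1)$ times, and no homotopy-preserving rerouting can give a threshold depending only on $L_1,\dots,L_i$ and $\ell$.

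Here is a concrete failure. Suppose the drawing of the short paths is disconnected, with a component $A$ lying in a face of the rest. Let a long path run from a vertex outside $A$ to a vertex $y$ of $A$, spiralling $n$ times around $A$ without a single crossing; it has enough vertices to do this. Take any route with the same endpoints that is homotopic to it while avoiding the triangulation vertices. Such a route sweeps an angle of roughly $2\pi n$ around a fixed vertex $z\neq y$ of $A$. A straight segment sweeps less than $\pi$, so every such polygonal route needs $\Omega(n)$ bends. In particular, the shortest homotopic route in the dual graph revisits the triangles around $A$ about $n$ times. Since $n$ is limited only by the length of the long path itself, capping at $f+1$ fails. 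The Reidemeister argument does not support homotopy preservation either: the paper's loop-removal move (Rule~I of the 1-planarity kernel) deletes a loop regardless of what it encloses, and a crossing-free spiral is untouched by it anyway.

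The missing idea is that homotopy is irrelevant. All you must preserve is that no short edge gains crossings. Since the open triangles contain no short edges, any change confined to one triangle can only delete such crossings.

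The paper proceeds as follows. It shrinks each triangle $T$ to $T'$ so that $T\setminus T'$ contains no vertex or crossing. It then replaces the part of each long path between its \emph{first} and \emph{last} intersection with $T'$ by a straight segment. This deliberately changes the homotopy class; it removes all turns of the spiral at once. Afterwards every long path meets every $T'$ in a single segment, so two long paths cross at most once per triangle. Each long path then needs at most $2+(r-1)$ internal vertices per triangle, where $r$ is the number of long paths. By Euler's formula there are at most $s^2+3s+1$ triangles, with $s=\sum_{j\le i}L_j$. Placing the bends on $\partial T'$ also keeps them off the short edges, which a bend exactly at a triangle-boundary crossing would not.

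Two smaller points:
\begin{itemize}
\item Bound the crossings among short paths by $\binom{s}{2}$, since two straight segments cross at most once, rather than by $k\sum_j L_j$. Otherwise your threshold and kernel depend on $k$.
\item With $m\le 3\ell-3$ paths, cubing $m$ times gives $2^{\mathcal{O}(3^{m}\log\ell)}\subseteq 2^{\mathcal{O}(27^{\ell}\log\ell)}$. So the final step to $3^{\ell}$ in the exponent is not justified as written; the paper's own calculation makes the same jump.
\end{itemize}
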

\saveTheoremNumber{theorem:geom_k_planar_fen_fpt}

\medskip
Finally, we show that our results are essentially tight within the usual parameter hierarchy:
\textsc{Geometric 1-Planarity} remains \NP-complete even in very restricted settings.\footnote{In parallel and independent work, Cabello et al.~\cite{crossing_types}
also show \NP-hardness for \textsc{Geometric 1-Planarity} on graphs whose pathwidth is bounded by a constant, but without specifying an explicit bound on that constant.}
We provide a novel reduction from \textsc{Bin Packing} and show:
\begin{restatable}{theorem}{npcbyfvs}\label{theorem:npc_by_fvs}
    \textsc{Geometric 1-Planarity} remains \NP-complete
    for instances of pathwidth at most~15 or feedback vertex number at most~48.
\end{restatable}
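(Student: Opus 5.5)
Membership in \NP{} follows from the certificate discussed in the introduction. A certificate is a 1-planar embedding, meaning a rotation system together with the set of crossing pairs and the order of crossings around each vertex. By Thomassen's characterization~\cite{straightening_characterization}, this embedding can be straightened if and only if it contains no $B$- or $W$-configuration, and Hong et al.~\cite{rediscovered} show that these configurations can be detected in linear time. The substance of the proof is therefore \NP-hardness. For this I plan a reduction from \textsc{Bin Packing} in its strongly \NP-hard version, with item sizes encoded in unary. The instance has items $a_1,\dots,a_n$, $k$ bins, and capacity $B$, with $\sum a_i = kB$.

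The construction has two parts. First comes a rigid \emph{frame}: a constant-size gadget repeated in a path-like fashion, consisting of $k$ ``bin cells'' arranged along a long horizontal strip. Each cell is a region bounded by frame edges and has room for exactly $B$ units. Second, each item $a_i$ becomes a gadget that hangs off a small fixed set of hub vertices, one hub set per item or a constant number shared by all items. The gadget is a chain of $a_i$ units, where each unit is a small subgraph (for instance a $K_4$-like tile) that must consume one ``slot'' of crossings inside a cell.

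Bounded feedback vertex number will come from making all cycles pass through a constant number of hub or apex vertices: the frame's apices plus the hubs give at most $48$ vertices whose removal leaves a forest. Bounded pathwidth will come from laying out the frame as a path decomposition of constant width, so that the bins are traversed in order with bags of size $\le 16$. The item chains are attached so that at any point they add only constantly many vertices to a bag. This gives the two separate constructions, or one construction satisfying both bounds.

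Correctness has two directions. For completeness, given a packing, I draw the frame explicitly with straight lines and place each item's units consecutively inside its assigned cell, checking that every edge is crossed at most once. The explicit drawing may be replaced by a 1-planar embedding that is verified to contain no $B$/$W$-configuration. For soundness, I argue from rigidity. The frame, built from pieces like $K_6$ minus edges or the maximally dense geometric 1-planar configurations with $4n-9$ edges, has an essentially unique geometric 1-planar embedding up to reflection. In it, every frame edge is already crossed or blocked, so item units cannot cross cell walls. Each cell can accommodate at most $B$ units because each unit must cross a dedicated ``capacity edge'' of the cell and each edge is crossed at most once. A unit chain cannot be split across two cells, since its internal paths would have to cross a wall. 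Reading off which cell each item lies in then yields a valid packing.

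The main obstacle is this rigidity and soundness argument: proving that in every straight-line 1-planar drawing the frame is drawn as intended and that item gadgets cannot sneak through or around walls. Here the geometric setting helps over the topological one. Thomassen's configurations, in particular the $W$-configuration being forbidden, rule out certain nestings, so the outer face of each rigid tile is forced. I would first establish the uniqueness of the embedding of a single tile by case analysis on the crossing structure, and then propagate this along the chain of tiles. Keeping track of the constants $15$ and $48$ is bookkeeping once the gadget is fixed.
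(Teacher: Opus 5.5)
Your \NP-membership argument is fine, and strongly \NP-hard \textsc{Bin Packing} is the right source problem. The hardness construction you sketch, however, cannot give a bounded feedback vertex number at all, let alone~48.

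\textbf{The gap.} You build the frame from $k$ rigid cells made of tiles such as $K_6$ minus edges, and each item as a chain of $a_i$ $K_4$-like units. Such a graph contains a number of vertex-disjoint cycles that grows with $k$ and with $\sum_i a_i = kB$, so its feedback vertex number is unbounded. Keeping $k$ small does not help: \textsc{Bin Packing} with unary sizes and a constant number of bins is polynomial by dynamic programming over bin loads, so the reduction needs unboundedly many bin walls. Your claim that all cycles pass through a constant set of hubs or apices contradicts the tiles you propose. For the same reason the constants $15$ and $48$ are not bookkeeping; they come from a concrete gadget, which you never fix.

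The soundness mechanism is also left open. Nothing in the sketch forces a unit to cross a ``capacity edge'' rather than another item or nothing at all. Likewise, nothing proves that an item cannot be split across cells. Rigidity of a growing chain of tiles, via case analysis on $B$/$W$-configurations, is exactly the hard part, and you defer it.

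\textbf{The missing idea.} The rigid part should have constant size, and the bin walls and capacities should be enforced by crossing saturation rather than by rigidity. The paper does this as follows.
\begin{itemize}
\item The frame is a fixed triconnected graph on 12 vertices whose 18 edges are replaced by $K_6$ gadgets. By \Cref{lemma:uncrossable_k6}, each frame edge yields a path all of whose edges are already crossed. The frame therefore behaves like a plane triconnected graph with unique faces, and everything else lies in one face $R$. No Thomassen case analysis is needed here.
\item The bins are delimited by $K-1$ \emph{single} purple edges from the frame vertex $s$ to a red path of length $KB$.
\item A second red path of length $|U|+K-1$ is saturated by the purple edges and the edges $sd_u$.
\item Each item is a diamond $K_{2,s(u)}$, one of whose degree-$s(u)$ vertices is identified with the frame vertex $t$.
\end{itemize}
After normalizing to $\sum_u s(u)=KB$ and $\min_u s(u)\ge K+1$, counting alone gives the following: the right red path fully crosses every diamond, the purple edges must cross the left path, and each item's $s(u)$ crossings fall into the length-$B$ section of its bin. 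Every cycle outside the $K_6$ gadgets passes through a frame vertex. Hence deleting the 12 frame vertices leaves only stars, paths and 18 copies of $K_4$, which gives pathwidth at most $12+3$ and feedback vertex number at most $12+2\cdot 18$.
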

\saveTheoremNumber{theorem:npc_by_fvs}

For the parameter bandwidth, we observe that replacing every edge by a constant-size gadget causes only a quadratic blow-up in bandwidth. Combining this with a known reduction of Schaefer~\cite{geometric_1_planarity_np_hard} lets us lift the known bounded-bandwidth hardness for \textsc{1-Planarity}~\cite{1planar_parameterized}:
\begin{restatable}{theorem}{npcbybandwidth}\label{theorem:npc_by_bandwidth}
    \textsc{Geometric 1-Planarity} remains \NP-complete
    even when restricted to instances of bounded bandwidth.
\end{restatable}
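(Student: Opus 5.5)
Membership in \NP{} follows as for the general problem. A certificate is a 1-planar embedding of the input graph (a planarization together with the rotation system). By Thomassen's characterization~\cite{straightening_characterization}, this embedding can be straightened if and only if it contains no $B$- or $W$-configuration, and Hong et al.~\cite{rediscovered} show that these configurations can be detected in linear time. So only \NP-hardness requires work.

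For hardness, the plan is to take a known bounded-bandwidth hardness construction for \textsc{1-Planarity} and transform it into a geometric instance while keeping the bandwidth bounded. Bannister, Cabello, and Eppstein~\cite{1planar_parameterized} show that \textsc{1-Planarity} is \NP-complete on graphs of bounded bandwidth. Schaefer~\cite{geometric_1_planarity_np_hard} gives a reduction from \textsc{1-Planarity} to \textsc{Geometric 1-Planarity} that replaces every edge $uv$ of $G$ by a constant-size gadget $H_{uv}$ attached at $u$ and $v$. I will use the following properties of this gadget, which I take from Schaefer's argument.
\begin{itemize}
\item If the resulting graph $G'$ is geometric 1-planar, then $G$ is 1-planar, because the gadget behaves like a ``thick'' edge that can be crossed only in a controlled way.
\item If $G$ is 1-planar, then $G'$ admits a 1-planar embedding with no $B$- or $W$-configuration, and hence a straight-line 1-planar drawing.
\end{itemize}
I will restate this as a lemma and check that the gadget size is an absolute constant $c$.

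The main new step is the bandwidth bound. Let $\pi$ be a layout of $G$ of bandwidth $b$. Build a layout $\pi'$ of $G'$ as follows.
\begin{itemize}
\item Keep each original vertex $v$ in its relative order.
\item Place the internal vertices of each gadget $H_{uv}$, where $\pi(u)<\pi(v)$, in a block immediately after $u$.
\item Around each original vertex $u$ there are at most $2b$ incident edges whose gadgets are charged to it (each $u$ has at most $2b$ neighbours within distance $b$). So each block between consecutive original vertices contains at most $2bc$ gadget vertices.
\end{itemize}
Now bound the stretch of each edge of $G'$. Every edge of a gadget joins two vertices that lie within the window between $\pi(u)$ and $\pi(v)$, extended by one block. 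That window contains at most $b+1$ original positions and their blocks, that is, $O(b\cdot bc)=O(b^2)$ vertices of $G'$. Hence $G'$ has bandwidth $O(b^2)$, which is bounded whenever $b$ is. This is the ``quadratic blow-up'' stated in the introduction.

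The reduction is polynomial, so the theorem follows. The main obstacle is verifying that Schaefer's gadget really preserves equivalence as a black box. In particular, the forward direction must produce a drawing free of $B$- and $W$-configurations for an arbitrary 1-planar embedding of $G$, including crossings between adjacent edges and crossings on the outer face. If his reduction instead starts from a restricted \NP-hard variant, I would check that the bounded-bandwidth instances of~\cite{1planar_parameterized} lie in that variant, or adapt the gadget. The quadratic bandwidth bound itself is routine.
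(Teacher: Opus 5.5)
Your proposal is correct and matches the paper's proof. You compose the bounded-bandwidth \NP-hardness of \textsc{1-Planarity} by Bannister, Cabello, and Eppstein with Schaefer's constant-size edge-gadget reduction, and you bound the bandwidth with the same column construction (gadget-internal vertices placed right after the lower-indexed endpoint), which yields $(b+1)(1+(t-2)b)$. The paper cites Schaefer's Theorem~2 as a black box for the equivalence, so it does not re-verify the gadget properties you flag.
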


\ifshort
\emph{For results marked with $(\bigstar)$, we refer to the full version of our paper in the appendix, which includes full proofs and additional figures.}
\fi

\section{Preliminaries}

We assume familiarity with standard graph terminology~\cite{diestel} and the basics of parameterized complexity theory~\cite{fptbook}.
For $k \in \mathbb{N}$, $[k]$ denotes the set $\{1, \dots, k\}$.

\paragraph{Graphs and their embeddings.}
We work with finite simple graphs. For a graph $G$, we denote its vertex and edge sets by $V(G)$ and $E(G)$.  
A \emph{Jordan arc} is the image of a continuous injective map $[0,1]\to\mathbb{R}^2$.

An \emph{embedding} of $G$ is a drawing in the plane where vertices are distinct points, each edge is a Jordan arc between its endpoints, and edges intersect only at common endpoints or in proper crossings.  
A \emph{$k$-planar embedding} is an embedding in which every edge is crossed at most $k$ times.
In a \emph{geometric} embedding, each edge is drawn as a straight line. 

The regions of an embedding $\varepsilon$ are the connected components of 
$\mathbb{R}^2$ minus the union of all edge arcs; 
the unbounded one is the \emph{outer region}.
The \emph{planarization} of a 1-planar embedding $\varepsilon$ is the plane graph obtained by subdividing every crossing point into a new degree-4 \emph{dummy} vertex, whose adjacent edges we call \emph{half-edges}. We call non-dummy vertices \emph{real}.
Thus the regions of $\varepsilon$ correspond bijectively to the faces of the planarization.

\begin{definition}
    Let $G$ be a graph and let $a,b$ be distinct vertices of $G$.
    We say $G$ is $(a,b)$-shared geometric 1-planar if there is a geometric 1-planar embedding of $G$ where $a$ and $b$ are in a shared region, i.e., one can draw a Jordan arc between them without crossing any edge.
    If this shared region is unbounded, we say $G$ is $(a,b)$-outer geometric 1-planar.
    Furthermore, we say $G$ is $a$-outer geometric 1-planar if there is a geometric 1-planar embedding of $G$ where $a$ lies on the outer region.
\end{definition}

\paragraph{Crossings and their orientation.}
If two independent edges $aa',bb'\in E(G)$ cross in a 1-planar embedding $\varepsilon$,  
we call $\{aa',bb'\}$ an \emph{$(a,b)$-crossing pair}.  
Let $c$ denote the dummy vertex in the planarization.
The clockwise cyclic order of $c$'s neighbors is either $(b',a',b,a)$ or $(a',b',a,b)$.  
We call the crossing \emph{$(a,b)$-left-crossing} in the former case,
and \emph{$(a,b)$-right-crossing} otherwise.
Next, we formally define Thomassen's \(B\)- and \(W\)-configurations in our terminology.

\begin{definition}\label{definition:bw_configs}
Let $\varepsilon$ be a 1-planar embedding.
\begin{itemize}
    \item A \emph{B-configuration} consists of an  edge $ab$ and an $(a,b)$-crossing pair $aa',bb'$, such that  of the two regions delimited by the arcs (one is the outer region, the other is bounded), the endpoints $a',b'$ both lie in the bounded region. 

    \item A \emph{W-configuration} consists of two $(a,b)$-crossing pairs $aa',bb'$ and $aa'',bb''$, such that of the two regions delimited by the arcs, $a',a'',b',b''$ all lie in the bounded region.
\end{itemize}
\end{definition}
See \Cref{figure:bw_configs} for an illustration.
Thomassen~\cite{straightening_characterization} proved that a 1-planar embedding can be transformed into a geometric 1-planar embedding that is topologically equivalent (i.e., preserves the cyclic orders of edges at all vertices and crossings) if and only if it is free of $B$- and $W$-configurations.

\section{Fixed-Parameter Tractability via Treedepth}
In this section, we derive \Cref{theorem:fpt_treedepth}.
First, in \Cref{section:prelims_phase_1}, we provide the lemmas we will use to show the safety of \Cref{rule:I,rule:II}, constituting Phase~I of our algorithm.
In \Cref{section:phase_1} we define \Cref{rule:I,rule:II}, show they are safe (\Cref{lemma:td_correctness_rule_1_and_2}), and that the maximum block size is bounded after applying the rules exhaustively (\Cref{lemma:td_block_size}).
Then, in \Cref{section:prelims_phase_2}, we provide the lemmas used to show the safety of \Cref{rule:III}, constituting Phase~II of our algorithm.
Finally, in \Cref{section:phase_2} we define \Cref{rule:III}, show it is safe (\Cref{lemma:td_correctness_rule_3}), and that the maximum number of blocks is bounded after applying the rule exhaustively (\Cref{lemma:bounded_block_cut_tree}). Finally, we obtain \Cref{theorem:fpt_treedepth}.

\subsection{Preliminaries for Phase~I}
\label{section:prelims_phase_1}

Bannister, Cabello, and Eppstein proved the following result, albeit in asymptotic form.
To obtain an explicit bound for $k$ to be used in \Cref{rule:I}, we give an alternative argument.
\begin{lemma}[\textnormal{Adapted from \protect\cite[Lemma~6, Case $|S|\geq 3$]{1planar_parameterized}}]\label{lemma:bounded_claws}
    Let $G_1, \ldots, G_k$ be connected graphs with
    pairwise-disjoint vertex sets except that $a,b,c$ belong to every $G_i$, and assume $G_i-\{a,b,c\}$ is connected for each $i$.
    Let $G \coloneqq \bigcup_{i=1}^{k}G_i$.
    If $G$ has treedepth at most $d$ and $k \ge 2^{d+1}+3$, then $G$ is not 1-planar.
\end{lemma}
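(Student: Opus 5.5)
The plan is to find, inside each $G_i$, a small subdivided claw joining $a,b,c$; to show that any three pairwise non-interacting claws would yield a forbidden drawing of a subdivided $K_{3,3}$; and then to use 1-planarity to bound how many claws each claw can interact with. Suppose for contradiction that $G$ has a 1-planar embedding $\varepsilon$.

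\emph{Step 1: bounded claws.} Since $G_i-\{a,b,c\}$ is connected and $G_i$ is connected with $a,b,c\in V(G_i)$, each of $a,b,c$ has a neighbour in $G_i-\{a,b,c\}$. (If some terminal had no such neighbour, its only neighbours in $G_i$ would be other terminals; I will treat this degenerate case separately or rule it out.) Take a minimal subtree $T_i$ of $G_i-\{a,b,c\}$ connecting one neighbour of each terminal, and add the three edges to $a,b,c$. This gives a subdivided claw $C_i$ (possibly degenerate, with the centre on a leg) whose legs meet only at the centre. Any path in $G$ has fewer than $2^d$ vertices, by the standard treedepth bound. The paths $a\ldots b$ and $a\ldots c$ through $C_i$ cover all of $C_i$, so $|E(C_i)|\le m$ for some $m\le 2^{d+1}-c_0$ with a small constant $c_0$. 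I will tune $m$ so that the final inequality matches $2^{d+1}+3$. The claws $C_1,\dots,C_k$ share only $a,b,c$.

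\emph{Step 2: three non-interacting claws are impossible.} Say $C_i$ and $C_j$ \emph{interact} if some edge of $C_i$ crosses some edge of $C_j$ in $\varepsilon$. Suppose $C_i,C_j,C_l$ pairwise do not interact. Then the union $C_i\cup C_j\cup C_l$ is a subdivision of $K_{3,3}$ with sides $\{a,b,c\}$ and the three centres. In the restricted drawing, every crossing is between two edges of the same claw, that is, between subdivided edges of the same star of $K_{3,3}$. Such edges are adjacent in $K_{3,3}$. Suppressing subdivision vertices gives a drawing of $K_{3,3}$ in which every crossing involves adjacent edges. By the weak Hanani--Tutte theorem (or directly, by removing crossings between adjacent edges through local swaps), $K_{3,3}$ would be planar, a contradiction. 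Hence among any three claws, two interact.

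\emph{Step 3: counting.} Since every edge is crossed at most once, the edges of $C_i$ participate in at most $|E(C_i)|\le m$ crossings, so $C_i$ interacts with at most $m$ other claws. The interaction graph $H$ on $\{C_1,\dots,C_k\}$ therefore has maximum degree at most $m$. Its complement has minimum degree at least $k-1-m$. Step 2 says the complement of $H$ is triangle-free. Pick any claw $C_1$. It has at least $k-1-m$ non-neighbours in $H$, and these must form a clique in $H$, for otherwise we obtain a triangle in the complement. So $k-1-m\le m+1$, that is, $k\le 2m+2$. With $m\le 2^d$, this already contradicts $k\ge 2^{d+1}+3$. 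So the real task is to get $|E(C_i)|\le 2^d$, or else to sharpen the counting: an edge crossing counts once for two claws, so the total number of interacting pairs is at most $\tfrac12\sum_i|E(C_i)|$.

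\emph{Main obstacle.} I expect the main difficulty to be getting the explicit constant right, namely bounding $|E(C_i)|$ tightly enough (each of the $k$ claws interacts with at most about $2^d$ others) so that the threshold is exactly $2^{d+1}+3$. The first thing I would try is to bound the number of \emph{crossed} edges of $C_i$ rather than all its edges, and to use the longest-path bound on the two longest legs. A secondary point is justifying Step 2 cleanly. If citing Hanani--Tutte is undesirable, I would instead argue that crossings between two edges of one tree in a 1-planar drawing can be removed by the usual swap without creating crossings with the other two claws.
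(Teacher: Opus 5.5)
Your topological core is right: three pairwise non-crossing claws give a drawing of a subdivided $K_{3,3}$ in which only edges of a common claw cross, and that is impossible. Here you need the \emph{strong} Hanani--Tutte theorem, since adjacent $K_{3,3}$-edges may cross an odd number of times; your in-tree swap also works, because no other claw crosses $C_i$. Your interaction-graph framing also repairs a tacit step in the paper. The paper fixes arbitrary $p,q$ and claims every other claw crosses $\varepsilon_p\cup\varepsilon_q$, which needs $p,q$ to be non-crossing.

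\textbf{The gap.} The genuine gap is quantitative, and the explicit threshold is the whole point of this adaptation (it feeds \Cref{rule:I}). Step~3 gives only $k\le 2m+2$, and Step~1 gives only $m\le 2^{d+1}-4$. Your suggested repairs do not close this. Paths confined to one claw see at most two legs, so they certify at best $|E(C_i)|\le\frac32(2^d-2)$. This is tight for an isolated spider with three legs of length $2^{d-1}-1$, which has treedepth at most $d$. You would therefore get $k\le 3\cdot 2^d-4$, which does not contradict $k\ge 2^{d+1}+3$ once $d\ge 3$. The averaged count (interacting pairs $\le\frac12\sum_i|E(C_i)|$) combined with Mantel's theorem on the triangle-free complement again yields only $k\le 2m+2$. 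Restricting to crossed edges gains nothing in the worst case.

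\textbf{The missing idea.} You need to let a \emph{second} claw enter the length bound. There are two ways to do this.

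\emph{Paper's route.} $C_p\cup C_q$ is the cycle $a\,x_p\,b\,x_q\,a$ plus the path $x_p\,c\,x_q$, so $|E(C_p)|+|E(C_q)|\le(2^d-1)+(2^d-2)$. Choose $C_p,C_q$ non-adjacent in $H$; this is possible, since otherwise $k-1\le m$. Step~2 then makes every other claw adjacent to one of them, so $k-2\le 2^{d+1}-3$.

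\emph{Your route, with a better $m$.} Keep Step~3 verbatim and improve the bound on $m$. Let $C_i$ have legs of lengths $\alpha,\beta,\gamma$ and let $C_j$ be any other claw. Run from $a$ through $x_i$ to $b$, then through $x_j$ to $c$, then back along the $c$-leg of $C_i$ up to the neighbour of $x_i$. This path is simple and has $\alpha+\beta+\gamma-1$ edges of $C_i$ plus at least two edges of $C_j$. Hence $|E(C_i)|\le 2^d-3$, and Step~3 gives $k\le 2^{d+1}-4$.

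\textbf{The degenerate case.} Do not try to handle the degenerate case of Step~1. If some terminal has no neighbour in $G_i-\{a,b,c\}$, the statement is false: with $E(G_i)=\{ab,bx_i,x_ic\}$ the union is planar and has treedepth~$3$. So this case must be excluded by hypothesis. The paper tacitly does so, and the condition holds in \Cref{rule:I} because $\anc(G_c)=X$.
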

\begin{proof}
Towards a contradiction, let $\varepsilon$ be a 1-planar embedding of $G$.
For each $i$, let $\varepsilon_i$ be the subdrawing
induced by an edge-subset minimal graph connecting $a,b,c$ in $G_i$. Observe that this is always a  subdivided $K_{1,3}$.

Fix distinct $p,q$ and set $H:=\varepsilon_p\cup\varepsilon_q$.
Since $\operatorname{td}(G)\le d$, every simple path in $H$ has length less than $2^{d}$.
Observe that $H$ can be decomposed into a cycle and a path. Hence, $H$ consists of at most $2^{d+1}$ edges.

Every other subdrawing $\varepsilon_\ell$, $\ell \not\in \{p,q\}$, crosses $H$ at least once by Kuratowski's theorem.
With $k \ge 2^{d+1}+3$,
this forces at least $2^{d+1}+1$ crossings in $H$, which has at most $2^{d+1}$ edges, a contradiction.
\qed\end{proof}

The following four lemmas (\Cref{lemma:parallel,lemma:clockwise_characterization,lemma:2fusion,lemma:bounded_non_ab_outerplanar})
will enable us to show the safety of \Cref{rule:II}, defined in \Cref{section:phase_1}.
Bannister, Cabello, and Eppstein showed the following for all 1-planar embeddings. Hence it also applies, in particular, to geometric drawings. We restate it as a lemma using our terminology.
More concretely, they showed that if $G$ as stated in the following lemma admits a 1-planar embedding, $a$ and $b$ lie in a shared region.
\begin{lemma}[\textnormal{Adapted from \protect\cite[Lemma~6, Case $|S|=2$]{1planar_parameterized}}]\label{lemma:parallel}
    Let $G_1,\ldots,G_k$ be connected graphs on pairwise-disjoint vertex sets, 
    except that the only vertices allowed to appear in more than one $G_i$ are two
    distinct vertices $a,b$.
    Furthermore, let $G \coloneqq \bigcup_{i=1}^{k} G_i$, and let $d$ be its treedepth.
    If $k > 2^d$ and $G$ is geometric 1-planar,
    then $G$, as well as all $G_i$, are $(a,b)$-shared geometric 1-planar.
\end{lemma}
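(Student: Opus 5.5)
Fix a geometric 1-planar embedding $\varepsilon$ of $G$. The plan is to show directly that $a$ and $b$ lie in a common region of $\varepsilon$. The claim about each $G_i$ then follows by restricting $\varepsilon$ to $G_i$: deleting edges only merges regions, so a region of $\varepsilon$ containing $a$ and $b$ lies inside a region of the restricted drawing. Straight lines and 1-planarity are both preserved under taking subdrawings. Hence every $G_i$, and $G$ itself, is $(a,b)$-shared geometric 1-planar.

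To show that $a,b$ share a region, I follow the counting argument of \Cref{lemma:bounded_claws}.

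\textbf{Step 1: short paths.} Assume without loss of generality that each $G_i$ contains both $a$ and $b$. Summands containing at most one of them are attached to the rest only through a cut vertex, or not at all, and do not affect whether $a,b$ share a region, since their edges separate nothing we use. I therefore argue with the summands that contain both, and I will check that the count $k>2^d$ still applies to them, or else reduce to that case. Because $\operatorname{td}(G)\le d$, every simple path has fewer than $2^d$ vertices. So in each $G_i$ I pick a shortest $a$--$b$ path $P_i$, which has at most $2^d-1$ edges. The paths $P_1,\dots,P_k$ are internally vertex-disjoint.

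\textbf{Step 2: separating curve.} Suppose $a,b$ share no region. Consider any two paths $P_p,P_q$. Their union is a cycle $C$ through $a$ and $b$; it may self-cross, but only at pairwise crossings between its edges. I want to show that, for $a$ and $b$ to be separated, a crossing structure is forced. I take the more robust counting route. Every region incident to $a$ is bounded by edge pieces. Consider a curve from $a$ to $b$ that runs alongside $P_1$. Since $a$ and $b$ share no region, for a suitable choice the curve cannot hug each path without crossing. More directly, I would use the standard fact, as in Bannister et al., that the $k$ internally disjoint $a$--$b$ paths are cyclically ordered around $a$. Two consecutive paths $P_i,P_{i+1}$ bound a "lens". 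If no lens is free of crossings, then each lens has some edge crossing through it from outside or between its two paths.

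\textbf{Step 3: counting.} Each crossing is charged to an edge of some $P_i$. There are $k>2^d$ lenses, and the total number of edges is $\sum|P_i|$. A crossing can invalidate at most two lenses (the two lenses adjacent to the edge that is crossed). 1-planarity allows at most one crossing per edge, so there are at most about $(2^d-1)$ crossings per path. The key point I would need is that a lens invalidated only by crossings among its own two boundary paths actually forces the lens through $P_{i+1}$. The contradiction should then follow as in Bannister et al., since their argument is purely topological and therefore applies to our geometric embedding.

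\textbf{Main obstacle.} The delicate part is Step 3: getting exactly the threshold $k>2^d$ out of the charging scheme, and handling lenses that are pierced only by their own boundary edges. Since the lemma is stated as adapted from \cite[Lemma~6]{1planar_parameterized}, I would primarily invoke their topological argument verbatim, observing that the geometric embedding is in particular a 1-planar embedding. I would then add only the restriction argument of the first paragraph for the statement about each $G_i$.
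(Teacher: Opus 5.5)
Your outline is the same as the paper's. The paper cites Bannister--Cabello--Eppstein as a statement about \emph{every} 1-planar embedding and specializes it to a geometric one; the claim for each $G_i$ then follows by restriction, exactly as in your first paragraph. The genuine gap is the central step. Steps~2--3 are not an argument, as you concede the charging does not close. Worse, the statement they aim at is false: that in an \emph{arbitrary fixed} geometric 1-planar embedding of $G$ the vertices $a,b$ share a region. So neither a charging scheme nor a verbatim appeal to a topological argument can prove it.

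Here is a counterexample. Take $k=17$ and let $G_i$ have edges $ax_i$, $x_ib$, $x_iy_i$. The decomposition $a\to b\to x_i\to y_i$ gives $\operatorname{td}(G)\le 4$, so $k>2^d$. Draw the paths $ax_ib$ pairwise non-crossing, in cyclic order around $a$. Inside the lens between $ax_ib$ and $ax_{i+1}b$ (indices mod $k$), draw $x_iy_i$ crossing $ax_{i+1}$ once, with $y_i$ placed just beyond the crossing.
\begin{itemize}
\item Every edge is crossed at most once.
\item Each lens splits into a face at $a$ (bounded by $ax_i$ and parts of $x_iy_i$ and $ax_{i+1}$) and a face at $b$, so $a,b$ share no region.
\item Take a face at $b$ as the outer region. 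The only $B$-candidates are the edge $ax_i$ with the pair $\{x_iy_i,ax_{i+1}\}$, and there $y_i,x_{i+1}$ lie outside the bounded face at $a$.
\item No two crossings are $(u,v)$-crossing pairs for a common pair $u,v$, so there is no $W$.
\end{itemize}
By Thomassen, this drawing straightens to a geometric 1-planar embedding in which $a,b$ are separated, even though $G$ is planar.

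The example also shows why your counting cannot work. In \Cref{lemma:bounded_claws}, Kuratowski forces every further claw to cross the bounded-size $H$. By contrast, a further $a$--$b$ path need not cross the cycle $P_p\cup P_q$ at all. Each lens can be sealed by a single crossing that involves no fixed bounded set of edges; the example uses $k$ crossings against $3k$ edges, so no contradiction arises.

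What the lemma needs is an \emph{existence} argument. From some geometric 1-planar embedding you must construct one in which $a,b$ share a region. That means redrawing and then re-establishing straightness, e.g., by checking $B/W$-freeness as in \Cref{lemma:2fusion}. A topological redrawing in the style of Bannister et al.\ does not stay geometric for free. Note that the paper's one-line justification rests on the same per-embedding reading of the cited result, which the example above refutes; matching it therefore does not close this gap.

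A minor point: your WLOG in Step~1, discarding summands that miss $a$ or $b$, can push the count below $2^d$. The lemma should be read with $a,b\in V(G_i)$ for all $i$.
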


Before proving \Cref{lemma:2fusion}, we record a concise consequence of Thomassen’s characterization.
In any 1-planar embedding formed from \((a,b)\)-crossing pairs (and possibly the edge \(ab\)), being free of \(B\)- and \(W\)-configurations is equivalent to the following: 
the edges at \(a\) in clockwise order coming from \((a,b)\)-crossing pairs (and possibly \(ab\)) appear around \(a\) in the pattern \(L^{*}[M]R^{*}\) (all left crossings, optional ``middle'' edge \(ab\), then all right crossings); see \Cref{figure:clockwise_characterization}.
\begin{lemma}\label{lemma:clockwise_characterization}
Let $\varepsilon$ be a 1-planar embedding obtained from the union of two isolated vertices $a,b$, optionally edge $ab$, and a sequence of $(a,b)$-crossing pairs. 

Let $\pi_a$ denote the clockwise cyclic order of edges incident to $a$, read from the first edge after the outer region at $a$.  
Encode $\pi_a$ as a word over the alphabet $\{L,R,M\}$ by writing
\begin{itemize}
    \item $L$ for each incident edge from an $(a,b)$-left-crossing pair,
    \item $R$ for each incident edge from an $(a,b)$-right-crossing pair, and
    \item $M$ for edge $ab$.
\end{itemize}

Then $\varepsilon$ is free of $B$- and $W$-configurations if and only if 
all $L$'s in $\pi_a$ appear consecutively, followed (if present) by a single $M$, 
and then all $R$'s consecutively, i.e., $\pi_a$ can be written as $L^*[M]R^*$, where $\cdot^*$ denotes zero or more occurrences, and $[\cdot]$ denotes at most one occurrence.
\end{lemma}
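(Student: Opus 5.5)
We prove the equivalence by reading both configurations as statements about the rotation $\pi_a$ and the local geometry of each crossing pair.

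\textbf{Each crossing pair as a closed curve.} For each $(a,b)$-crossing pair $aa',bb'$ with crossing point $c$, let $\gamma$ be the closed curve formed by the half-edges $ac$, $cb$ and a thin arc from $b$ back to $a$ drawn along the outer region. (If $ab$ is present, we may instead close $\gamma$ with $ab$ itself.) The endpoints $a',b'$ lie on the same side of this curve, because the half-edges $ca'$ and $cb'$ leave $c$ consecutively in the rotation at $c$. The crossing is left or right exactly according to the clockwise order $(b',a',b,a)$ versus $(a',b',a,b)$ at $c$.

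We first show that, in both cases, $a',b'$ are in the bounded region delimited by $ac\cup cb$ together with the reference edge or pair. Which side of $c$ the pair $a',b'$ hangs on is determined by whether the edge $aa'$ appears before or after the ``direction'' towards $b$ in the clockwise sweep at $a$ starting from the outer region. To make this precise, we show that for an $L$-edge, the tail $ca'$ lies to the clockwise-earlier side of the curve $a\to c\to b$, and for an $R$-edge to the later side. Thus a $B$-configuration, formed by $ab$ with a pair $aa',bb'$, has its tips in the bounded region precisely when the $L$/$R$ label of $aa'$ is on the wrong side of $M$, namely $R$ before $M$ or $L$ after $M$.

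For two pairs $e=aa',bb'$ and $f=aa'',bb''$, consider their four half-edges at $a$ and $b$. These bound a region, and all four tips lie in the bounded side exactly when the two edges at $a$ appear in the order $R$ then $L$, a ``wrong'' inversion. If instead the order is $L\dots R$, or the two labels are equal, then at least one pair of tips lies in the unbounded region. This gives the key pairwise criterion: $e,f$ form a $W$-configuration iff $e$ is labeled $R$, $f$ is labeled $L$, and $e$ precedes $f$ in $\pi_a$.

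\textbf{The equivalence.} Given the pairwise criterion, the lemma is immediate. A word over $\{L,R,M\}$ with at most one $M$ avoids all the forbidden patterns ($R$ before $L$, $R$ before $M$, $M$ before $L$) exactly when it lies in $L^*[M]R^*$.

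\textbf{Main obstacle.} The main obstacle is proving the side criterion rigorously. It depends on the choice of reading start (the outer region at $a$) and on the fact that the rest of the drawing, consisting of the other pairs, is nested. I would first handle the case of a single pair plus $ab$ via an explicit Jordan-curve argument on the planarization. Then I would reduce the two-pair case to it by replacing one pair with the curve $a\to c\to b$ acting as a virtual ``$ab$'', checking orientation consistency using the clockwise order at $c$.
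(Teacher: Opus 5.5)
Your proposal is correct and takes essentially the same route as the paper. Both reduce $B$- and $W$-configurations to the (non-contiguous) subwords $RM$, $ML$ and $RL$ of $\pi_a$, exclude $MM$ because $ab$ is a single edge, and finish by pattern avoidance. Your side criterion supplies the justification that the paper only asserts: for $L$ the tips lie left of $a\to c\to b$, for $R$ they lie right, and at $a$ the bounded side is the sector between the two curve edges that avoids the outer angle. That argument is purely local at $a$ and $c$ plus the Jordan curve theorem, so the nesting of the other pairs that you flag as an obstacle plays no role.
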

\begin{proof}
    In the following, a subword is in general not contiguous. Observe that, by construction, each subword of $\pi_a$ of length two corresponds to a bounded region adjacent to $a$ in $\varepsilon$.
    We prove the contrapositive of the statement.

    $(\Rightarrow):$
    Suppose $\varepsilon$ contains a $B$ or $W$ configuration.
    If it is a $B$, $\pi_a$ contains $RM$ or $ML$ as a subword, depending on whether the crossed edges are $(a,b)$-right or $(a,b)$-left crossing.
    Similarly, if it is a $W$, $\pi_a$ contains $RL$.
    In all cases, $\pi_a$ does not adhere to the required pattern.

    $(\Leftarrow):$
    If $\pi_a$ deviates from the pattern $L^*[M]R^*$, it contains as a subword either $RL$, $RM$, $ML$, or $MM$.
    Subword $MM$ is impossible since edge $ab$ appears at most once in $\varepsilon$, $RL$ induces a $W$ configuration in $\varepsilon$, and $RM, ML$ both induce a $B$ configuration in $\varepsilon$.
\qed\end{proof}

    \begin{figure}[t]
    \centering
    \includegraphics[page=9]{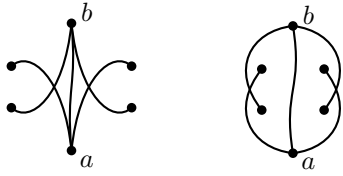}
\caption{Example for \Cref{lemma:clockwise_characterization}.
Left: Embedding with crossing sequence $LMR$, free of $B$- and $W$-configurations.
Right: Embedding with crossing sequence $RML$, containing two $B$- and one $W$-configuration.}
    \label{figure:clockwise_characterization}
\end{figure}

We are now ready to prove \Cref{lemma:2fusion}, which together with \Cref{lemma:parallel} will justify the correctness of the deletion step in \Cref{rule:II}.
\begin{lemma}\label{lemma:2fusion}
Let $G_1, G_2$ be connected graphs with disjoint vertex sets except for two distinct vertices $a,b$ that belong to both $G_1$ and $G_2$. Furthermore, edge $ab$ belongs to at most one of $G_1,G_2$. If $G_1$ is $(a,b)$-shared geometric 1-planar and $G_2$ is $(a,b)$-outer geometric 1-planar, then $G_1\cup G_2$ is $(a,b)$-shared geometric 1-planar.
\end{lemma}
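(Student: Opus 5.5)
We take a geometric 1-planar embedding $\varepsilon_1$ of $G_1$ in which $a,b$ share a region $f$, and a geometric 1-planar embedding $\varepsilon_2$ of $G_2$ in which $a,b$ lie on the outer region. We glue them topologically and then straighten the result with Thomassen's characterization, which is easier than gluing straight-line drawings directly.

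First, in $\varepsilon_2$ we draw a Jordan arc $\gamma$ from $a$ to $b$ through the outer region. Together with a suitable piece of the boundary, $\gamma$ encloses the whole drawing, so $\varepsilon_2$ can be viewed (after a homeomorphism) as a drawing inside a topological disk whose boundary passes through $a$ and $b$, with everything else in the interior. In $\varepsilon_1$, the region $f$ contains an uncrossed arc $\delta$ from $a$ to $b$. We thicken $\delta$ to a thin lens inside $f$ and insert a homeomorphic copy of $\varepsilon_2$ into this lens, identifying its $a,b$ with those of $\varepsilon_1$. This gives a topological 1-planar embedding $\varepsilon$ of $G_1\cup G_2$. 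No new crossings arise because the lens meets no edge of $G_1$, and $ab$ is drawn only once because it lies in at most one $G_i$. The region of $\varepsilon$ adjacent to the lens on one side is still shared by $a$ and $b$, so $\varepsilon$ is $(a,b)$-shared. We also keep the outer region unchanged: if $f$ is the outer region of $\varepsilon_1$, the lens stays bounded and avoids the unbounded part.

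Next we show $\varepsilon$ has no $B$- or $W$-configuration, so by Thomassen's theorem it can be straightened while preserving topology, and in particular while keeping $a,b$ on a common region. Suppose a configuration exists. Its edges are all crossing pairs or the edge $ab$, and crossing edges lie in the same $G_i$, because the two subdrawings are disjoint except at $a,b$.

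If all the edges of the configuration come from one $G_i$, it already appears in $\varepsilon_i$ with the same bounded/unbounded sides. Here we use that $\varepsilon_i$ sits inside $\varepsilon$ with the same outer region: $\varepsilon_1$'s outer region is preserved, and $\varepsilon_2$ lies in a lens, so its outer side faces outward relative to the lens interior. That contradicts the straightness of $\varepsilon_i$.

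Otherwise the configuration mixes parts. This can only be a $W$ formed by an $(a,b)$-crossing pair from $G_1$ and one from $G_2$, or a $B$ formed by the edge $ab$ from one $G_i$ and a crossing pair from the other. In every mixed case both pieces are $(a,b)$-crossing or the edge $ab$. We therefore apply \Cref{lemma:clockwise_characterization} to the sub-embedding consisting of these pairs. Around $a$, the edges of $G_2$ occupy one contiguous angular interval inside the lens. Since $a$ and $b$ are on the outer face of $\varepsilon_2$, the $(a,b)$-crossing pairs of $G_2$ are arranged as $L^*[M]R^*$ with respect to $\varepsilon_2$'s outer side.

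The main obstacle is the orientation bookkeeping. We must check that, read from the outer region of $\varepsilon$, the combined word is still $L^*[M]R^*$. We would try to choose which side of $\delta$ the lens faces, and whether to insert $\varepsilon_2$ or its mirror image, so that $G_2$'s block is placed in the middle position. Concretely, the uncrossed arc $\delta$ in the shared region sits between the $L$'s and $R$'s of $G_1$ at $a$, because the region separates those pairs. The $L^*[M]R^*$ block of $G_2$ then inserts there, and its outer side matches the side of the lens boundary. The possible $ab$ edge must be placed consistently; since it belongs to only one side, this seems achievable, possibly by mirroring $\varepsilon_2$. Once the combined word is verified to have the form $L^*[M]R^*$, \Cref{lemma:clockwise_characterization} rules out the mixed configurations and completes the proof.
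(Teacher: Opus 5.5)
\textbf{There is a genuine gap in the gluing step, exactly at the ``orientation bookkeeping'' you flag.}

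You insert all of $\varepsilon_2$ into one lens, so the edges of $G_2$ at $a$ form a single contiguous interval. That interval's ends lie at the lens boundary, which sits in the outer region of $\varepsilon_2$. Read clockwise inside the lens, the $G_2$-block is therefore $L^p[M]R^q$. Mirroring does not help: it only turns this into $L^q[M]R^p$.

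Now suppose $ab\in E(G_1)$ and $\varepsilon_2$ has both a left and a right $(a,b)$-crossing pair. Then $G_1$'s $M$ lies outside a block that contains $L\cdots R$. Every reading of the cyclic word at $a$ contains $RM$ or $ML$, so $\varepsilon$ has a $B$-configuration, whichever side of $\delta$ or mirror image you choose.

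A concrete instance: let $a=(0,0)$, $b=(2,0)$, $G_1=\{ab\}$, and let $G_2$ have edges $ax_1,by_1,x_1y_1,ax_2,by_2,x_2y_2$ with $x_1=(1.5,1.5)$, $y_1=(0.5,1.5)$, $x_2=(1.5,-1.5)$, $y_2=(0.5,-1.5)$. This $G_2$ has one left pair, one right pair, and $a,b$ on the outer region. Every embedding your construction produces has cyclic word $(M,L,R)$ at $a$. Yet the lemma's conclusion holds: draw $ab$ as the segment between the two crossings, which splits $G_2$'s edges at $a$. No lens insertion can produce that.

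There is a second, smaller error. An arbitrary shared region $f$ need not lie between $G_1$'s $L$'s and $R$'s. For two nested left pairs crossing at $c_1$ and $c_2$, the quadrilateral region $a\,c_1\,b\,c_2$ is shared by $a$ and $b$ but sits between two $L$'s. Inserting anything containing an $R$ or an $M$ there creates a configuration.

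\textbf{The missing idea is to split $\varepsilon_2$, as the paper does.}

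Mark the endpoints of left pairs and of right pairs. Let $\varepsilon_2^L$ (resp.\ $\varepsilon_2^R$) consist of the edges reachable from left-marked (resp.\ right-marked) vertices without passing through $a$ or $b$, and let $\varepsilon_2^M$ be the rest. Because $a,b$ lie on the outer region of $\varepsilon_2$, no part contains both kinds of marked vertices.

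Then, instead of using $f$, insert the parts along \emph{lines of sight} of $\varepsilon_1$. These are arcs hugging $a$--$c$--$b$ immediately after the last left pair's edge at $a$, immediately before the first right pair's edge, or immediately on either side of $ab$. They exist whenever $G_1$ has such pairs or the edge $ab$, and they stay uncrossed in $\varepsilon_1$ because both edges of a crossing pair are already saturated.

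\begin{itemize}
\item If $ab\in E(G_1)$, put $\varepsilon_2^L,\varepsilon_2^M$ just before $G_1$'s $ab$ and $\varepsilon_2^R$ just after it.
\item If $ab\notin E(G_1)$, insert all three parts in order at the $L$/$R$ transition.
\item Use $f$ only when $G_1$ has no $(a,b)$-pairs and no $ab$.
\end{itemize}

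The combined word is then $L^*[M]R^*$. The rest of your argument goes through: pure configurations are excluded by the straightness of each $\varepsilon_i$, and mixed ones by \Cref{lemma:clockwise_characterization}.
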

\ifshort
\begin{keepproof}[Sketch]
Take an \((a,b)\)-shared geometric drawing of \(G_1\) and an \((a,b)\)-outer geometric drawing of \(G_2\).
By \Cref{lemma:clockwise_characterization}, the \((a,b)\)-pairs around \(a\) in the~\(G_1\) drawing follow the cyclic pattern \(L^{*}[M]R^{*}\).
Partition the drawing of \(G_2\) into \(L/M/R\) pieces according to its \((a,b)\)-pairs, and splice these pieces, topologically, into the corresponding \(L\), optional \(M\), and \(R\) intervals of the \(G_1\) cyclic order so that the combined order remains \(L^{*}[M]R^{*}\).
This gluing may lose straightness but preserves 1-planarity and forbids \(B\)- and \(W\)-configurations; by Thomassen’s straightening theorem we regain straight-line edges, with \(a\) and \(b\) still sharing a region.
\qed\end{keepproof}
\fi
\begin{proof}
Let $\varepsilon_1$ be a geometric 1-planar embedding of $G_1$ in which $a$ and $b$ share a region, and let $\varepsilon_2$ be a geometric 1-planar embedding of $G_2$ in which $a$ and $b$ lie on the outer region.

\begin{figure}[t]
    \centering
    \includegraphics[page=8]{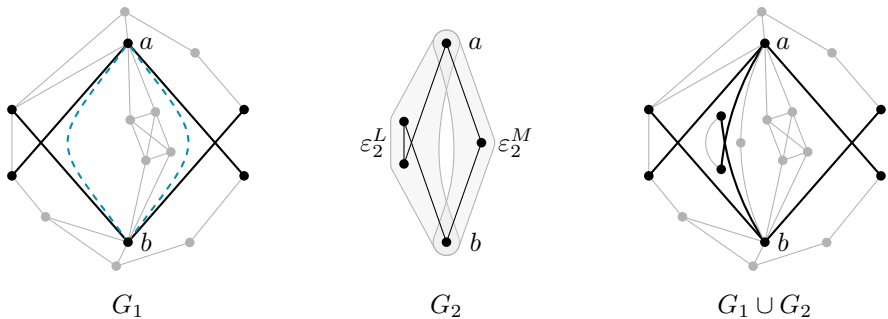}
\caption{Example for \Cref{lemma:2fusion}.
Left: Embedding $\varepsilon_1$ of $G_1$ shown in gray, with subembedding $\varepsilon_1'$ in bold black and lines of sight in dashed blue.
Middle: Embedding $\varepsilon_2$ of $G_2$, with subembeddings $\varepsilon_2^L$ and $\varepsilon_2^M$ indicated by gray regions (subembedding $\varepsilon_2^R$ is empty).
Right: Embedding $\varepsilon$ of $G_1 \cup G_2$, with subembedding $\varepsilon'$ in bold black.}
    \label{figure:fusion}
\end{figure}

Let $\varepsilon_1'$ be the restriction of $\varepsilon_1$ to all edges that belong to an $(a,b)$-crossing pair, together with $ab$ if present. Since $\varepsilon_1$ is geometric, it is
free of $B$- and $W$-configurations by Thomassen~\cite{straightening_characterization}, and so is $\varepsilon_1'$. Thus \Cref{lemma:clockwise_characterization} applies to $\varepsilon_1'$ and the clockwise order $\pi$ of $(a,b)$-pairs around $a$, represented as a word as defined in \Cref{lemma:clockwise_characterization}, has the form $L^{*}[M]R^{*}$.

\medskip
\noindent
\emph{Lines of sight in $\varepsilon_1'$.}
For each $(a,b)$-\emph{left}-crossing pair $\{aa',bb'\}$ with crossing point $c$, there is a line of sight from $a$ to $b$ \emph{immediately clockwise after} $aa'$ at $a$, following the concatenation of the subsegment of $aa'$ from $a$ to $c$ with the subsegment of $bb'$ from $c$ to $b$. For each $(a,b)$-\emph{right}-crossing pair, there is a line of sight \emph{immediately clockwise before} its edge at $a$. If $ab$ is present, there are lines of sight \emph{immediately before} and \emph{immediately after} $ab$. Observe that these lines of sight carry over to $\varepsilon_1$. See the left part of \Cref{figure:fusion}.

We now partition $\varepsilon_2$ into subdrawings $\varepsilon_2^L, \varepsilon_2^M, \varepsilon_2^R$.
Mark each vertex of $(a,b)$-left crossing edges in $\varepsilon_2$ as ``left'', and each vertex of $(a,b)$-right crossing edges as ``right''.
Assign an edge $e$ to subdrawing $\varepsilon_2^L$ if there is a path from an endpoint of $e$ not using $a,b$ to a vertex marked ``left''.
Define subdrawing $\varepsilon_2^R$ symmetrically.
Assign all remaining edges to subdrawing $\varepsilon_2^M$.
This is well-defined,
as an edge assigned to both $\varepsilon_2^L$ and $\varepsilon_2^R$ would imply there is a path from a ``left'' vertex to a ``right'' vertex not using $a,b$, which in turn would imply $a,b$ are not on the outer region in $\varepsilon_2$, violating the precondition that they are. See the middle part of \Cref{figure:fusion}.

\medskip
\noindent\emph{Insertion.}
We now create the (not necessarily geometric) embedding $\varepsilon$ of $G_1 \cup G_2$ by starting from $\varepsilon_1$ and inserting $\varepsilon_2^L, \varepsilon_2^M, \varepsilon_2^R$ at appropriate lines of sight between $a$ and $b$.
Note that when we insert a drawing in the following, we do so in a purely topological manner, i.e., the inserted drawing is in general no longer geometric.

If $\pi$ is empty, we use that $a$ and $b$ share a region in $\varepsilon_1$ and insert $\varepsilon_2$ into this region (we don’t need the decomposed drawing in this case).
If $\pi$ contains $M$, we insert at the line of sight directly \emph{preceding} $M$ $\varepsilon_2^L$, $\varepsilon_2^M$ in order, and at the line of sight \emph{succeeding} $M$ $\varepsilon_2^R$.
Otherwise, we select a line of sight after all $L$ symbols but before all $R$ symbols,
and insert $\varepsilon_2^L, \varepsilon_2^M, \varepsilon_2^R$ in order.
Since $a$ and $b$ lie on the outer region in $\varepsilon_2$ (and thus also in $\varepsilon_2^L, \varepsilon_2^M, \varepsilon_2^R$), this process does not introduce any new crossing. Hence, $\varepsilon$ inherits 1-planarity from $\varepsilon_1, \varepsilon_2$, and $a,b$ still share at least one region. Note also that all edges inserted into the drawing were not already present in $\varepsilon_1$ by precondition. See the right part of \Cref{figure:fusion}.

Let $\varepsilon'$ be the restriction of $\varepsilon$ to all $(a,b)$-crossing pairs together with $ab$ if present. By construction, the clockwise order of $(a,b)$-pairs around $a$ in $\varepsilon'$ in the sense of \Cref{lemma:clockwise_characterization} is again $L^{*}[M]R^{*}$. Hence, by \Cref{lemma:clockwise_characterization}, $\varepsilon'$ is free of $B$- and $W$-configurations.

If $\varepsilon$ contained a $B$ or $W$ configuration, it would have to contain edges from both $\varepsilon_1$ and $\varepsilon_2$, since $\varepsilon_1$ and $\varepsilon_2$ are geometric and thus free of $B$- and $W$-configurations. With $V(G_1)\cap V(G_2)=\{a,b\}$, any mixed $W$ consists of two $(a,b)$-crossing pairs, and any mixed $B$ consists of one such pair together with $ab$; in either case the configuration lies in $\varepsilon'$, contradicting that $\varepsilon'$ is free of $B$- and $W$-configurations. Therefore $\varepsilon$ is free of $B$- and $W$-configurations.

By Thomassen~\cite{straightening_characterization}, $\varepsilon$ is topologically equivalent to a geometric 1-planar embedding (straight-line edges with the same rotations at vertices and crossings). The region shared by $a$ and $b$ persists under this transformation, so $G_1\cup G_2$ is $(a,b)$-shared geometric 1-planar.
\qed\end{proof}

Finally, we derive a lemma to justify the rejection case in \Cref{rule:II}.
\begin{lemma}\label{lemma:bounded_non_ab_outerplanar}
    Let $G_1, \ldots, G_k$ be connected graphs with
    disjoint vertex sets, except two distinct vertices $a,b$ part of each $G_i$.
    Furthermore, each $G_i$ is not $(a,b)$-outer geometric 1-planar and consists of at most $m$ edges.
    Then, if $k \ge 2m+3$, graph $\bigcup_{i=1}^{k} G_i$ is not geometric 1-planar.
\end{lemma}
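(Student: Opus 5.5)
We argue by contradiction. Suppose $\varepsilon$ is a geometric 1-planar embedding of $G \coloneqq \bigcup_{i=1}^{k} G_i$ with $k \ge 2m+3$, and let $\varepsilon_i$ denote the subdrawing of $\varepsilon$ induced by $G_i$. Each $\varepsilon_i$ is itself geometric 1-planar. Since $G_i$ is not $(a,b)$-outer geometric 1-planar, $a$ and $b$ are not both on the outer region of $\varepsilon_i$. The plan is to fix one index $p$ and show that almost every other subdrawing $\varepsilon_j$ must cross $\varepsilon_p$, or be crossed by it. Because $\varepsilon_p$ has at most $m$ edges and each edge can be crossed at most once, this caps how many indices $j$ can interact with $\varepsilon_p$ at $m$. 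Running a slightly more symmetric version of this count gives the contradiction once $k\ge 2m+3$.

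\textbf{Step 1.} Choose $p$ so that $\varepsilon_p$ is \emph{outermost} with respect to $a,b$. Concretely, fix a point of the outer region of $\varepsilon$ and consider the unbounded region. Since $\varepsilon_j$ fails to have both $a$ and $b$ on its outer region, for each $j$ at least one of $a,b$, say $a$, is separated from infinity by a closed curve $C_j$ formed from edges and crossing-halves of $\varepsilon_j$. We take $C_j$ to be the boundary of the outer face of the planarization of $\varepsilon_j$, which avoids $a$.

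\textbf{Step 2.} Show that for $j\neq j'$ with disjoint interiors (apart from $a,b$), a curve $C_j$ enclosing $a$ forces every connected $G_{j'}$ to either lie inside $C_j$ or to cross edges of $\varepsilon_j$. The reason is that $G_{j'}$ contains $a$, so it meets the inside of $C_j$, and $C_j$ passes through only dummy vertices, $b$, and vertices of $G_j-\{a,b\}$, none of which belong to $G_{j'}$ except possibly $b$. Each $G_{j'}$ staying inside without crossing would be nested in $C_j$.

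\textbf{Step 3.} Handle the nesting by choosing $p$ so that $C_p$ encloses no other entire $\varepsilon_{j}$ (an innermost or outermost choice). Then each $j\neq p$ whose subdrawing touches the inside of $C_p$ contributes a crossing with an edge of $\varepsilon_p$ or a half-edge on $C_p$. The indices are split according to whether the separated vertex is $a$ or $b$, and each class is bounded by $m$ via the crossing-capacity of one chosen $\varepsilon_p$. This yields $k \le 2m+2$, contradicting $k \ge 2m+3$.

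The main obstacle is Step 2/3: making rigorous the claim that each other component must cross $\varepsilon_p$. The difficulty is that $b$ may lie on $C_p$, and that crossings on $C_p$ can involve edges of $\varepsilon_p$ crossing each other, leaving no free capacity, which actually helps. I would try to argue via the planarization, with a Jordan-curve argument on $C_p$ and the rule that each edge takes at most one crossing. I would accept the $+3$ slack as absorbing the degenerate cases where $b$ itself sits on the curve.
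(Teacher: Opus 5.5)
\textbf{Genuine gap: leaving $C_p$ through $b$.} You flag this case and then wave it away. Your Step~2 dichotomy (``$\varepsilon_j$ lies inside $C_p$ or crosses $\varepsilon_p$'') is false when $b\in C_p$ and $b$ is a cut vertex of $G_j$. The component of $G_j-b$ containing $a$ can lie inside $C_p$ while another component of $G_j-b$ lies outside, attached only at $b$. Then $\varepsilon_j$ leaves the enclosed region through the shared vertex $b$ and need not cross $\varepsilon_p$ at all.

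The ``$+3$ slack'' does not exist. Your per-class bound $m+1$ gives exactly $k\le 2m+2$, so $k\ge 2m+3$ is precisely what the count needs, and \emph{every} other member of the class must contribute a crossing. The paper's lasso argument does not treat this case explicitly either, since two lassos may merely touch at $b$, but a complete proof must handle it.

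\textbf{A possible fix.} Replace each $G_i$ by a witness subgraph $W_i\subseteq G_i$ in which the non-enclosed shared vertex is not a cut vertex.
Strip off the components of $G_i-b$ avoiding $a$ and the components of $G_i-a$ avoiding $b$. In the remaining core, neither $a$ nor $b$ is a cut vertex.
If some stripped piece $F$ at $b$ is not $b$-outer geometric 1-planar, set $W_i:=F$: it encloses $b$ in every drawing and does not contain $a$. Do the same symmetrically at $a$.
Otherwise, set $W_i$ to be the core. It is still not $(a,b)$-outer, because the stripped pieces could be glued back into the outer region at their cut vertex and straightened exactly as in \Cref{lemma:1fusion}.
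Now the drawing of $W_j$ minus the non-enclosed shared vertex is a connected set containing the enclosed vertex. So it can leave the complement of the outer region of $W_p$ only through a proper crossing with an edge of $W_p$, and your count goes through with $|E(W_i)|\le m$.

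\textbf{Comparison with the paper's route.} Apart from this, you differ from the paper only in how nesting is handled. The paper takes the larger class and fixes an \emph{arbitrary} member $j$. Its lasso is the outer boundary of $\varepsilon_j$ \emph{together with a path from $a$ to it}. A drawing nested entirely inside is then crossed by that tail, so no extremal choice is needed.

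You drop the tail and pick $p$ extremal instead. For your reasoning this must be the \emph{innermost} choice: an outermost $\varepsilon_p$, as in your Step~1, can contain whole drawings that never meet $C_p$. You should also say why an innermost element exists. If each of $\varepsilon_p,\varepsilon_j$ lay in the complement of the other's outer region, the two outer regions would coincide. That would force a nondegenerate piece of an edge of $G_p$ into the drawing of $G_j$, which is impossible for edge-disjoint graphs.

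Your final count (at most $m+1$ per class, summed over both classes) is equivalent to the paper's bound $\lceil k/2\rceil-1\ge m+1$.
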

\ifshort
\begin{keepproof}[Sketch]
Let \(\varepsilon_i\) be the restriction of a hypothetical geometric 1-planar drawing to \(G_i\).
At least \(\lceil k/2\rceil\) of these have (w.l.o.g.) \(a\) not on the outer region; fix one such \(\varepsilon_j\) and, in its planarization, take a path from \(a\) to the outer face plus the outer boundary to obtain a closed curve that uses \(\le m\) segments corresponding to edges uncrossed in \(\varepsilon_j\).
Each other \(\varepsilon_{j'}\) of the same type must intersect this curve at least once, giving \(\ge \lceil k/2\rceil-1>m\) intersections, so some edge is crossed twice---a contradiction.\qed
\end{keepproof}
\fi
\begin{proof}
Assume, for the sake of contradiction, that there exists a geometric 1-planar embedding $\varepsilon$ of $\bigcup_{i=1}^{k} G_i$.

Let $\varepsilon_i$ denote the restriction of $\varepsilon$ to $G_i$, for each $i \in [k]$.  
By assumption, in $\varepsilon_i$, it is not possible for both $a$ and $b$ to lie on the outer region.  
Hence, there are at least $\lceil \tfrac{k}{2} \rceil$ embeddings $\varepsilon_i$ where, without loss of generality, $a$ is not on the outer region.  
Let $I \subseteq [k]$ index this subset.

Fix one such embedding $\varepsilon_j$, and consider the ``lasso'' shape obtained by following, in the planarization of $\varepsilon_j$, a path from $a$ to the outer face and then traversing the cycle around the outer face.

In the lasso shape of $\varepsilon_j$, segments corresponding to ``half-edges'' are already crossed with respect to $\varepsilon_j$.  
There remain at most $m$ segments corresponding to edges that are uncrossed in $\varepsilon_j$.

\begin{figure}[t]
    \centering
    \includegraphics[page=10]{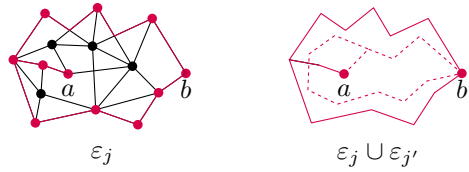}
    \caption{Illustration for \Cref{lemma:bounded_non_ab_outerplanar}. 
    Left: Embedding $\varepsilon_j$ with the lasso shape marked in red. 
    Right: The lasso shapes of $\varepsilon_j$ and $\varepsilon_{j'}$ (drawn dashed) necessarily intersect.}
    \label{figure:bounded_non_outerplanar}
\end{figure}

But observe that the lasso shape formed by each $\varepsilon_{j'}$ for $j' \in I \setminus \{j\}$ must cross the lasso of $\varepsilon_j$ at least once (see \Cref{figure:bounded_non_outerplanar}).  
Since $|I \setminus \{j\}| \ge \lceil k/2\rceil - 1$ and $k \ge 2m+3$, it follows that $\lceil k/2\rceil - 1 \ge m+1 > m$.  
Therefore, some edge of $\varepsilon_j$ is crossed at least twice in $\varepsilon$, contradicting 1-planarity.
\qed
\end{proof}

\subsection{Phase~I: Bounding the Maximum Block Size}
\label{section:phase_1}

Fix a connected graph $G$.
A treedepth decomposition of $G$ is a rooted forest on $V(G)$ in which, for every edge $xy\in E(G)$, one of $x,y$ is an ancestor of the other; the \emph{depth} of any rooted tree (and of any rooted subtree) is the maximum number of vertices on a root–leaf path.
Since $G$ is connected, we use a single rooted tree $T$ of depth $d$ on $V(G)$.
For $v\in V(T)$, let $\anc(v)$ be the set of \emph{ancestors} of $v$ in $T$ (including $v$), and let $\desc(v)$, the \emph{descendants} of $v$, be the vertices in the subtree rooted at $v$ (including $v$).
Write $G_v:=G[\desc(v)]$, and let $\anc(G_v)$ be the vertices outside $\desc(v)$ that are ancestors of at least one vertex of $\desc(v)$.
We may assume without loss of generality that every child subtree of a node has a neighbor in $\anc(v)$ (otherwise lift that child to be a sibling, which does not increase depth) and that each child subtree induces a connected subgraph of $G$ (otherwise split it into separate children, which does not increase depth as well).

\begin{ruleenv}\label{rule:I}
    Let $v\in V(T)$.
    If there exists $X\subseteq \anc(v)$ with $|X|\ge 3$ such that
    \[
      \bigl|\{\,c\text{ child of }v \mid \anc(G_c)=X\,\}\bigr| \;\geq\; 2^{\,d+1}+3,
    \]
    then reject the instance.
\end{ruleenv}

\begin{ruleenv}\label{rule:II}
    Let $v\in V(T)$,
    distinct $a, b \in \anc(v)$ that are part of a shared block in $G$. Set
    \[
      C\;:=\;\{\,c\text{ child of }v \mid \anc(G_c)=\{a,b\}\,\}.
    \]
    If $|C| \leq 2^{d}+1$, do nothing.  
    Otherwise, provided that neither 
    \Cref{rule:I}  nor \Cref{rule:II} applies to any proper descendant of $v$:
    choose an overflow set $O\subseteq C$ with $|C\setminus O|=2^{d}+1$;  
    for each $c\in O$, test whether $G_c$ is $(a,b)$-outer geometric 1-planar (by brute force) and, if so, delete $G_c$ from $G$ and remove the subtree of $T$ induced by $\desc(c)$;  
    let $O'$ denote the remaining elements of $O$. 
    Write $m:=\max_{c\in O'}|E(G_c)|$.  
    If $|O'|\ge 2m+3$, reject the instance.
\end{ruleenv}

From the above lemmas, we directly obtain:
\begin{lemma}\label{lemma:td_correctness_rule_1_and_2}
\Cref{rule:I} and \Cref{rule:II} are safe.
\end{lemma}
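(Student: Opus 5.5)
Safety has three parts: each rejection must be justified, and each deletion in \Cref{rule:II} must preserve the answer in both directions. I treat the rules in turn.

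\emph{\Cref{rule:I}.} Let $c_1,\dots,c_k$ be the children of $v$ with $\anc(G_{c_i})=X$, where $|X|\ge 3$ and $k\ge 2^{d+1}+3$, and pick $a,b,c\in X$. For each $i$ I form a connected subgraph $H_i$ of $G$ consisting of $G_{c_i}$ together with $a,b,c$ and their edges into $G_{c_i}$. The sets $\desc(c_i)$ are pairwise disjoint, each $G_{c_i}$ is connected by our normalization, and each has a neighbor at every vertex of $X$. So $H_i-\{a,b,c\}=G_{c_i}$ is connected, and the $H_i$ meet only in $\{a,b,c\}$. Their union is a subgraph of $G$ and therefore has treedepth at most $d$. \Cref{lemma:bounded_claws} then says this union is not 1-planar, so in particular it is not geometric 1-planar. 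Since geometric 1-planarity is closed under subgraphs, $G$ is not geometric 1-planar, and rejecting is correct.

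\emph{\Cref{rule:II}, deletion.} Let $C$ be as in the rule, with $|C|>2^d+1$. For $c\in C$ let $H_c$ be $G_c$ together with $a,b$ and the edges from $a,b$ into it; it meets everything else only in $\{a,b\}$. Write $G=G^\ast\cup H_c$, where $G^\ast$ is $G$ minus $\desc(c)$; the edge $ab$, if present, lies in $G^\ast$ only.

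Deleting $H_c$ cannot turn a no-instance into a yes-instance, because $G^\ast$ is a subgraph of $G$. For the converse, suppose $G^\ast$ is geometric 1-planar and $H_c$ is $(a,b)$-outer geometric 1-planar. At least $2^d+1$ siblings remain in $C$, and their graphs $H_{c'}$ together with $G^\ast$ minus those siblings give more than $2^d$ pieces meeting only at $a,b$. One might object that the rest of $G^\ast$ is a single piece that could be disconnected; I absorb it into a single $H_{c'}$, and since $a,b$ lie in a common block the result can be taken connected.

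With this decomposition, \Cref{lemma:parallel} gives that $G^\ast$ is $(a,b)$-shared geometric 1-planar. \Cref{lemma:2fusion}, applied with $G_1=G^\ast$ and $G_2=H_c$, then shows that $G$ is geometric 1-planar. Deletions are performed one at a time, and at each step at least $2^d+1$ members of $C$ survive, because only members of $O$ are ever deleted. So the same argument applies to every deletion.

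\emph{\Cref{rule:II}, rejection.} The graphs $H_c$ for $c\in O'$ are connected and pairwise meet only in $a,b$. None of them is $(a,b)$-outer geometric 1-planar, and each has at most $m'$ edges for a suitable $m'$ (the count in the rule should include the attaching edges). If there are at least $2m'+3$ of them, \Cref{lemma:bounded_non_ab_outerplanar} shows their union is not geometric 1-planar, hence neither is $G$, and rejection is correct.

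\emph{Main obstacle.} The delicate point is the connectivity hypotheses of \Cref{lemma:parallel} and \Cref{lemma:2fusion} for the piece $G^\ast$. Here the assumption that $a,b$ share a block, together with the normalization of $T$ (connected child subtrees, each attached to $\anc$), is what I would use. If $G^\ast$ minus the siblings is disconnected, I would absorb its components into some sibling pieces, or apply the lemmas componentwise. Beyond this, the proof is a direct assembly of the preceding lemmas.
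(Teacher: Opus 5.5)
Your proof is correct and follows the same route as the paper. You use \Cref{lemma:bounded_claws} for \Cref{rule:I}; for the deletions you apply \Cref{lemma:parallel} to the reduced graph, using the $2^d+1$ untouched siblings, and then \Cref{lemma:2fusion}; and you use \Cref{lemma:bounded_non_ab_outerplanar} for the rejection. You are more careful than the paper about including the attaching edges in $G_c$, about the edge $ab$, and about connectivity of the pieces. That connectivity holds not because of the block condition but because every component of the remainder contains $a$ or $b$, and any surviving sibling is connected and contains both.

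One small slip: the subgraph argument shows that deleting $H_c$ cannot turn a yes-instance into a no-instance, not the other way around. It is your ``converse'' paragraph that rules out creating a yes-instance.
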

\begin{proof}
If \Cref{rule:I} triggers for some $X\subseteq \anc(v)$ with $|X|\ge3$, then \Cref{lemma:bounded_claws} implies that the graph is not 1-planar and thus rejection is correct.

Next consider \Cref{rule:II}.
Let $C,O,O'$ be as in the rule for the fixed pair $\{a,b\}\subseteq \anc(v)$.
By construction we keep $|C\setminus O|=2^{d}+1$ children with attachment $\{a,b\}$ untouched.
Hence, after any deletions from $O$, there remain strictly more than $2^{d}$ siblings with attachment $\{a,b\}$.
Therefore, if the reduced instance is geometric 1-planar, \Cref{lemma:parallel} applies and forces $a$ and $b$ to share a region in the whole embedding. Given such an embedding, every deleted $(a,b)$-outer geometric 1-planar child can be reinserted by \Cref{lemma:2fusion}. Hence, deleting them was safe.
If, after deletions, $|O'|\ge 2m+3$ with $m=\max_{c\in O'}|E(G_c)|$, then \Cref{lemma:bounded_non_ab_outerplanar} forbids a geometric 1-planar embedding, so rejection is correct.
\qed
\end{proof}

\begin{lemma}\label{lemma:td_block_size}
After exhaustively applying \Cref{rule:I,rule:II}, let $N_\ell$ denote the maximum number of vertices in 
subtrees of the reduced treedepth decomposition of height $\ell$ that induce a biconnected graph. Then
\[
N_\ell \;=\; 2^{\,\mathcal{O}\big((d+\ell)\,3^{\ell}\big)} .
\]
\end{lemma}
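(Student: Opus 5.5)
We prove the bound by induction on $\ell$, deriving a recursion of the form $N_\ell \le 1 + B_\ell \cdot N_{\ell-1}$, where $B_\ell$ bounds the number of children of a node $v$ whose subtree induces a biconnected graph. For $\ell=1$ trivially $N_1=1$. Fix a node $v$ at height $\ell$ such that $G[\desc(v)]$ lies inside a single block. Each child $c$ of $v$ has $|\desc(c)|\le N_{\ell-1}$ under the inductive hypothesis. There is a subtlety here: $G_c$ itself need not be biconnected. We therefore phrase the induction over the vertices of a block $B$ that lie in $\desc(v)$, i.e.\ over the block-restricted subtree. The children relevant to $B$ are those $c$ with $\desc(c)\cap B\neq\emptyset$, and for these the recursion concerns $|\desc(c)\cap B|$.

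The core step is to bound the number of children attached to a block. We classify each child $c$ by its attachment set $X=\anc(G_c)\subseteq\anc(v)$, of which there are at most $2^{d}$ choices.
\begin{itemize}
\item Since $G$ is connected and each child subtree has a neighbour in $\anc(v)$, we have $|X|\ge1$.
\item If $|X|=1$, the attachment vertex is a cut vertex separating $G_c$, so $G_c$ contributes no vertex to $B$ apart from the cut vertex. Such children are irrelevant to the block.
\item If $|X|\ge3$, \Cref{rule:I} has not fired, so there are fewer than $2^{d+1}+3$ children with this attachment set.
\item If $|X|=2$, say $X=\{a,b\}$ with $a,b$ in a shared block (necessarily so if $c$ meets $B$), then exhaustive application of \Cref{rule:II} leaves at most $2^d+1$ untouched children plus $|O'|<2m+3$ further ones.
\end{itemize}

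For the case $|X|=2$ we need a bound on $m=\max_{c\in O'}|E(G_c)|$. Here $|V(G_c)|\le N_{\ell-1}$ (restricted to the block), and edges inside $G_c\cup X$ number at most $\binom{N_{\ell-1}+d}{2}$, or linearly many by 1-planarity density, giving $m=\mathcal{O}(N_{\ell-1}+d)$. Hence the number of children per attachment set is $\mathcal{O}(2^{d}+N_{\ell-1}+d)$, and the total number of relevant children is at most $2^{d}\cdot\mathcal{O}(2^{d}+N_{\ell-1})$.

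This yields
\[
N_\ell \le 1 + 2^{\mathcal{O}(d)}\,(2^{d}+N_{\ell-1})\,N_{\ell-1} \le 2^{\mathcal{O}(d)}N_{\ell-1}^{2}.
\]
Taking logarithms, $\log N_\ell\le \mathcal{O}(d)+2\log N_{\ell-1}$, which gives $\log N_\ell=\mathcal{O}(d\,2^{\ell})$. This is comfortably within the claimed $2^{\mathcal{O}((d+\ell)3^\ell)}$. The slack presumably absorbs the fact that $G_c$ is not biconnected: one may need to bound $|E(G_c)|$ using the whole $G_c$, which might be a cubic-type recursion (three factors of $N_{\ell-1}$ rather than two), producing the exponent $3^\ell$. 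We would aim for a recursion $N_\ell\le 2^{\mathcal{O}(d+\ell)}N_{\ell-1}^3$, which unrolls to exactly $2^{\mathcal{O}((d+\ell)3^\ell)}$.

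The main obstacle is justifying that the children relevant to $B$ really satisfy the size hypothesis, i.e.\ relating $G_c$, which may contain several blocks, to block-restricted sizes. The same issue arises for $m$, which counts edges of all of $G_c$, not just of the block. We would first try to argue that the portion of $G_c$ outside $B$ hangs off cut vertices and can be ignored by measuring only $B$-vertices. If that fails for $m$, we would instead bound the $B$-part of the surviving $O'$ children using only their $(a,b)$-paths and the lasso argument of \Cref{lemma:bounded_non_ab_outerplanar}, which needs only edges of a bounded subgraph. This replacement is what introduces the extra factor behind the $3^\ell$.
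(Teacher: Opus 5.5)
Your skeleton matches the paper's: classify the children of $v$ by attachment set $X$, use \Cref{rule:I} for $|X|\ge 3$ and the $2^{d}+1+(2m+2)$ survivors of \Cref{rule:II} for $|X|=2$, then unroll a logarithmic recursion. The proof is not complete, though. The step you call the main obstacle, bounding $m=\max_{c\in O'}|E(G_c)|$, is left open. In the block-restricted formulation you switch to, it cannot be closed, because that formulation is strictly stronger than the lemma and false after Phase~I.

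\Cref{rule:II} measures $m$ on all of $G_c$. Parts of $G_c$ that hang off a single vertex, i.e.\ children with a one-element attachment set, are never touched by \Cref{rule:I} or \Cref{rule:II}. Concretely, take many children with attachment set $\{a,b\}$. Let each be a fixed small connector that is not $(a,b)$-outer geometric 1-planar, with $M$ pendant triangles at one inner vertex. Then $m\ge 3M$, about $6M$ of these children survive without triggering rejection, and all their $a$--$b$ paths lie in the block of $a$ and $b$. So $|B\cap\desc(v)|$ is not bounded by any function of $d$ and $\ell$.

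Your lasso fallback (\Cref{lemma:bounded_non_ab_outerplanar}) would only show that such instances are no-instances. It does not make \Cref{rule:II}, as defined, reject them, so it would mean changing the rule rather than proving the lemma. Separately, $|X|=1$ children are not always irrelevant to $B$: $B$ may lie entirely inside $\desc(c)\cup\{x\}$. That case is harmless, but it has to be handled.

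The way out is to stay with subtrees $G_v$ that are themselves biconnected, as the statement does. Suppose $G_v$ is biconnected.
\begin{itemize}
\item No node strictly below $v$ has a child whose attachment set is a single vertex; that vertex would be a cut vertex of $G_v$, or $G_v$ would be disconnected.
\item $v$ has such a child only as its sole child.
\item Independently of biconnectivity, a pair $\{a,b\}$ not in a common block is the attachment set of at most one child, so \Cref{rule:II}'s shared-block condition costs nothing.
\end{itemize}
The first property passes to every child subtree. Running the induction over subtrees with this property therefore makes $|V(G_c)|\le N_{\ell-1}$ legitimate (the paper uses this bound without comment) and gives $m\le\binom{N_{\ell-1}+2}{2}$.

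That quadratic bound on $m$, times the $N_{\ell-1}$ vertices per surviving child, is exactly where the paper's recursion $N_\ell\le 2^{\mathcal{O}(\ell+d)}N_{\ell-1}^3$, and hence the $3^\ell$, come from. It is not from the block subtlety you conjecture.

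Your linear bound on $m$ ``by 1-planarity density'' is unavailable, since the reduced instance need not be 1-planar. However, every graph of treedepth $d$ is $(d-1)$-degenerate, so $m\le (d-1)(N_{\ell-1}+2)$. That does give your quadratic recursion, and with it the sharper bound $2^{\mathcal{O}(d\,2^{\ell})}$.
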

\begin{proof}
Let $v$ be a node where $G_v$ is biconnected, and let $A:=\anc(v)$, so $|A|\le \ell+1$.  Each child $c$ of $v$ satisfies $\anc(G_c)=X\subseteq A$ and
has height at most $\ell-1$, hence $|V(G_c)|\le N_{\ell-1}$ and
$|E(G_c)|\le\binom{N_{\ell-1}}{2}$.

\emph{Sets $X$ with $|X|\ge 3$.}
By \Cref{rule:I}, for every such $X$ there are at most $2^{d+1}+2$ children with $\anc(G_c)=X$.
There are at most $2^{\ell+1}$ choices of $X$, so these contribute
$O(2^{\ell+d}\,N_{\ell-1})$ vertices in total.

\emph{Sets $X$ with $|X|=2$.}
There are $\binom{\ell+1}{2}=O(\ell^2)$ pairs $\{a,b\}\subseteq A$.
Fix one pair and let $C:=\{c:\anc(G_c)=\{a,b\}\}$.
If $|C|\le 2^{d}+1$, nothing is removed.
Otherwise \Cref{rule:II} chooses an overflow set $O\subseteq C$ with $|C\setminus O|=2^{d}+1$,
deletes those $G_c$ in $O$ that are $(a,b)$-outer geometric 1-planar, and lets $O'$ be the remainder.
If $|O'|\ge 2m+3$ (where $m:=\max_{c\in O'}|E(G_c)|$), the instance would be rejected; hence $|O'|\le 2m+2$.
Using $m\le\binom{N_{\ell-1}}{2}$, the number of surviving children for this pair is at most
\[
(2^{d}+1)+(2m+2) \;\le\; 2^{d} + \big(N_{\ell-1}^2 - N_{\ell-1} + 3\big).
\]
Each contributes at most $N_{\ell-1}$ vertices, so over all pairs the contribution is
$O\!\big(\ell^2 N_{\ell-1}^3\big) + O\!\big(\ell^2 2^{d} N_{\ell-1}\big)$.

Adding the vertex $v$ and combining both cases,
\[
N_\ell \;\le\; 1 \;+\; O\!\big(\ell^2 N_{\ell-1}^3\big) \;+\; O\!\big(2^{\ell+d} N_{\ell-1}\big)
\;\le\; 2^{\,O(\ell+d)}\,N_{\ell-1}^{3}\quad(\ell\ge1,\ N_{\ell-1}\ge2).
\]
Taking $\log_2$ and unrolling,
\[
\begin{aligned}
\makebox[\linewidth][l]{$\log_2 N_\ell \le 3\log_2 N_{\ell-1} + O(\ell+d) \;\Rightarrow$}\\
\makebox[\linewidth][r]{$\log_2 N_\ell \le \sum_{i=1}^{\ell} 3^{\ell-i}\,O(d+i)
= O\!\big((d+\ell)\,3^{\ell}\big)$}
\end{aligned}
\]
Therefore $N_\ell=2^{\,\mathcal{O}((d+\ell)\,3^{\ell})}$.
\qed
\end{proof}

\subsection{Preliminaries for Phase~II}
\label{section:prelims_phase_2}

The next two lemmas, mirroring \Cref{lemma:2fusion,lemma:bounded_non_ab_outerplanar}, ensure that \Cref{rule:III}, defined in \Cref{section:phase_2}, is safe.
\begin{lemma}\label{lemma:1fusion}
Let $G_1, G_2$ be connected graphs with disjoint vertex sets except for vertex $a$ belonging to both $G_1$ and $G_2$. If $G_1$ is geometric 1-planar and $G_2$ is $a$-outer geometric 1-planar, then $G_1\cup G_2$ is geometric 1-planar.
\end{lemma}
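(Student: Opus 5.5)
Mirroring the proof of \Cref{lemma:2fusion}, we glue the two drawings topologically and then straighten the result using Thomassen's characterization. Let $\varepsilon_1$ be a geometric 1-planar embedding of $G_1$ and $\varepsilon_2$ a geometric 1-planar embedding of $G_2$ with $a$ on the outer region.

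\emph{Insertion.} Pick any region $f$ of $\varepsilon_1$ incident to $a$, i.e., a small angular wedge at $a$ between two clockwise-consecutive edges, or the whole neighborhood of $a$ if $a$ is isolated. Since $a$ lies on the outer region of $\varepsilon_2$, we can apply a homeomorphism of the plane that fixes $a$ and maps $\varepsilon_2$ into an arbitrarily small disk-like neighborhood of $a$ contained in $f$, with $\varepsilon_2$ attached to the rest of the drawing only at $a$. The result is a (possibly non-geometric) drawing $\varepsilon$ of $G_1\cup G_2$. No new crossings arise, because the inserted copy lies entirely inside a region of $\varepsilon_1$ and $\varepsilon_1$ itself is unchanged. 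Hence $\varepsilon$ is 1-planar.

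\emph{Excluding $B$/$W$-configurations.} Both $\varepsilon_1$ and $\varepsilon_2$ are geometric, so by Thomassen~\cite{straightening_characterization} neither contains a $B$- or $W$-configuration. Crossings in $\varepsilon$ occur only between two edges of $G_1$ or between two edges of $G_2$. Moreover, $\varepsilon$ induces on each $G_i$ an embedding topologically equivalent to $\varepsilon_i$, up to the choice of outer region. For $G_2$ this equivalence is exact: the rest of the drawing, namely $\varepsilon_1$, attaches through the outer region of $\varepsilon_2$ at $a$, so the outer region of $\varepsilon_2$ stays outer in $\varepsilon$. For $G_1$ the equivalence is also exact, since $\varepsilon_2$ sits inside a region of $\varepsilon_1$ and nothing of $\varepsilon_1$ moves.

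Any $B$- or $W$-configuration consists of crossing pairs, which are entirely within $G_1$ or entirely within $G_2$, together possibly with an edge $ab$. A $B$-configuration requires the crossing pair and the edge $ab$ to share both endpoints $a$ and $b$. A mixed configuration would therefore need two vertices shared between $G_1$ and $G_2$, but only $a$ is shared. So every configuration lives inside a single $G_i$. Whether it is a $B$ or $W$ depends on which side is bounded, and this is determined by the outer region. Inside $G_2$ the outer region is unchanged. Inside $G_1$ the closed curve of the configuration is exactly as in $\varepsilon_1$, and its bounded side is unaffected because $\varepsilon_2$ is added in a small neighborhood of $a$.

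\emph{Main obstacle.} The one subtle point is verifying that the bounded side of each configuration curve in $G_2$ does not flip. This is where the hypothesis that $a$ is outer is essential, and it is exactly the phenomenon that fails for general gluing. The curve is formed from edges of $G_2$, and $\varepsilon_1$ lies outside every bounded region of $\varepsilon_2$, since it is attached only at $a$, which is on the outer face. Hence the endpoints $a',b'$ (and $a'',b''$) remain on the same side, and no new configuration is created.

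Once $\varepsilon$ is shown to be 1-planar and free of $B$- and $W$-configurations, Thomassen's theorem yields a topologically equivalent geometric 1-planar embedding of $G_1\cup G_2$.
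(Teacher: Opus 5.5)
Your proposal is correct and follows essentially the same route as the paper. Both glue $\varepsilon_2$ topologically at $a$ inside a region of $\varepsilon_1$ without new crossings, argue that no $B$- or $W$-configuration can lie within one side (each is unchanged and B/W-free) or be mixed, and then straighten via Thomassen. Your exclusion of mixed configurations, since they would force two vertices (the configuration's own $a$ and $b$) to lie in both $G_1$ and $G_2$, is a more direct phrasing of the paper's cut-vertex argument. You are also more explicit than the paper on why the bounded side of a $G_2$-configuration curve does not flip: the insertion is a homeomorphism of the plane, which exists precisely because $a$ is outer in $\varepsilon_2$.
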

\begin{proof}
Take a geometric 1-planar embedding $\varepsilon_1$ of $G_1$,
and a geometric 1-planar embedding of $G_2$ where $a$ is on the outer region.
Clearly, one can (whilst losing straightness)
attach embedding $\varepsilon_2$ at $a$ in $\varepsilon_1$, so that both embeddings, in a topological sense, are unchanged with respect to themselves, and no new crossings are introduced.
Call this 1-planar embedding of $G_1 \cup G_2$  $\varepsilon$.
Now, suppose there is a $B$- or $W$-configuration in $\varepsilon$. 
The planarization of the $B$- or $W$-configuration appears as a subgraph in the planarization of $\varepsilon$.
It cannot be contained in $\varepsilon_1$ or $\varepsilon_2$ alone, since both embeddings are free of $B$- and $W$-configurations.
Dummy vertices of the embedded configuration must appear in either $\varepsilon_1$ or $\varepsilon_2$ as we did not create new crossings.
But then we obtain that the cycle occurring in the planarized configuration touches vertices that are unique to $G_1$, and also vertices that are unique to $G_2$.
This is impossible since $a$ is a cut-vertex of $G_1 \cup G_2$.
Hence, by Thomassen~\cite{straightening_characterization}, $\varepsilon$ can be straightened to yield the required embedding. \qed
\end{proof}
\ifshort
\begin{keepproof}[Sketch]
\looseness=-1
Topologically glue the $a$-outer drawing of $G_2$ to the geometric drawing of $G_1$ at the cut vertex $a$.
Any $B$- or $W$-configuration would yield a  cycle in the planarization meeting vertices from both sides, which is impossible with \(a\) as a cut vertex; hence the union is \(B/W\)-free and Thomassen straightens it. \qed
\end{keepproof}
\fi

\begin{lemma}\label{lemma:bounded_non_a_outerplanar}
Let $G_1,\ldots,G_k$ be connected graphs with pairwise-disjoint vertex sets except that $a$ belongs to every $G_i$.
Assume each $G_i$ is not $a$-outer geometric 1-planar and has at most $m$ edges.
If $k\ge m+2$, then $\bigcup_{i=1}^{k}G_i$ is not geometric 1-planar.
\end{lemma}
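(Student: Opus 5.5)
We argue by contradiction, following the proof of \Cref{lemma:bounded_non_ab_outerplanar}, but now with a single attachment vertex, so no halving step is needed. Suppose $\varepsilon$ is a geometric 1-planar embedding of $\bigcup_{i=1}^{k}G_i$, and let $\varepsilon_i$ be its restriction to $G_i$. Each $\varepsilon_i$ is itself a geometric 1-planar embedding of $G_i$. Since $G_i$ is not $a$-outer geometric 1-planar, $a$ does not lie on the outer region of $\varepsilon_i$, for every $i\in[k]$. Hence all $k$ subdrawings are of the same type, which explains the improved threshold $m+2$ compared to $2m+3$.

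Next, fix some $j$ and build the lasso of $\varepsilon_j$. In the planarization of $\varepsilon_j$, take a path from $a$ to the outer face, and append the closed walk along the boundary of the outer face. Segments of the lasso that are half-edges come from edges already crossed within $\varepsilon_j$, so they cannot be crossed again in $\varepsilon$. The remaining segments are whole edges uncrossed in $\varepsilon_j$, and there are at most $m$ of them. To guarantee this count I would choose the lasso so that each edge contributes at most one segment, for instance by taking a shortest path to the outer face and a simple boundary traversal. If that is awkward, it suffices to note that every crossing of the lasso lands on some edge of $G_j$, and each edge of $G_j$ admits at most one crossing overall.

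For each $j'\neq j$, I then show that $\varepsilon_{j'}$ must meet the lasso of $\varepsilon_j$ at a point that is not $a$. Both $a$ and the lasso of $\varepsilon_{j'}$ are enclosed: $a$ lies strictly inside a bounded region of $\varepsilon_{j'}$, and symmetrically for $\varepsilon_j$. The lasso of $\varepsilon_j$ connects $a$ to the unbounded part of the plane, so it must leave the region of $\varepsilon_{j'}$ containing $a$. That region is bounded by edges of $G_{j'}$, so the lasso crosses some edge of $G_{j'}$. Vertex sets are disjoint apart from $a$, so this intersection is a proper crossing between an edge of $G_j$ and an edge of $G_{j'}$.

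Every such crossing involves an edge of $G_j$ that is uncrossed within $\varepsilon_j$, which is one of the at most $m$ such edges. The sets $G_{j'}$ are edge-disjoint, so the $k-1\ge m+1$ indices $j'$ give at least $m+1$ crossings distributed over at most $m$ edges of $G_j$. By pigeonhole some edge of $G_j$ is crossed twice, contradicting 1-planarity.

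The main obstacle is the topological step, namely that the lasso genuinely crosses an edge of $G_{j'}$ rather than passing through a vertex or only touching at $a$. The lasso leaves $a$ along a half-edge or edge of $G_j$, so near $a$ it could pass between edges of $G_{j'}$ without crossing them. Following the lasso to infinity handles this: the region of $\varepsilon_{j'}$ containing $a$ is a bounded face, and a curve escaping a bounded face of a plane graph (the planarization) must cross its boundary at a non-vertex point. Vertices other than $a$ are not shared, and points on the lasso away from $a$ cannot coincide with vertices of $G_{j'}$ in a valid embedding.
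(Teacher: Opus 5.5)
Your strategy is the paper's: every restriction has $a$ off the outer region, so no halving is needed. Your counting is also fine. Your fallback observation already suffices: each $j'\neq j$ contributes a distinct crossing on one of the at most $m$ edges of $G_j$, and this makes the lasso bookkeeping unnecessary.

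The gap is in the one nontrivial step, namely that $\varepsilon_{j'}$ meets the lasso of $\varepsilon_j$ away from $a$. You justify it by saying the lasso ``connects $a$ to the unbounded part of the plane'' and so must leave the region $f'$ of $\varepsilon_{j'}$ that it enters at $a$. But the lasso is a bounded piece of the drawing of $G_j$. It only reaches the boundary $\partial F_j$ of the outer region $F_j$ of $\varepsilon_j$, and nothing in your argument rules out $\partial F_j\subseteq\overline{f'}$. Extending the curve to infinity does not help, because crossings on the extension are not crossings with edges of $G_j$.

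The inference is genuinely invalid. It uses only that $a$ is non-outer in $\varepsilon_{j'}$; your remark ``symmetrically for $\varepsilon_j$'' never enters the deduction. With only that hypothesis the conclusion fails: let $\varepsilon_{j'}$ be a wheel with hub $a$, and let $\varepsilon_j$ be a small triangle at $a$ inside one wedge. Then $f'$ is bounded, yet $\varepsilon_j$, lasso included, meets $\varepsilon_{j'}$ only at $a$. A smaller imprecision: $a$ is a vertex of $\varepsilon_{j'}$ and does not lie strictly inside any region. What holds is that every region of $\varepsilon_{j'}$ incident to $a$ is bounded.

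To close the gap, split into two cases that use both hypotheses.
\begin{itemize}
\item If $\varepsilon_{j'}$ meets $\partial F_j$, it does so at a point other than $a$, since $a\notin\partial F_j$. So it crosses the loop of the lasso.
\item Otherwise, $\varepsilon_{j'}$ is connected, contains $a\notin\overline{F_j}$, and avoids $\partial F_j$, so it is disjoint from $F_j$. Hence the connected unbounded set $F_j$ lies in the outer region $F_{j'}$ of $\varepsilon_{j'}$. Since $\partial F_{j'}\subseteq\varepsilon_{j'}$ avoids $\partial F_j$, even $\partial F_j\subseteq F_{j'}$. The stem of the lasso is a simple path of positive length from $a$ to $\partial F_j$. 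Its initial piece lies in a region of $\varepsilon_{j'}$ incident to $a$, which is bounded because $a$ is non-outer in $\varepsilon_{j'}$. Its endpoint lies in $F_{j'}$. So the stem meets the boundary of that region, and thus $\varepsilon_{j'}$, at a point other than $a$. Because the vertex sets share only $a$, this point is a proper crossing.
\end{itemize}
With this inserted, the rest of your proof goes through. Note that the paper itself only asserts this crossing claim via a figure.
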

\begin{keepproof}
Apply the lasso argument of \Cref{lemma:bounded_non_ab_outerplanar}, but simplified in the sense that we can assume all lassos are attached to $a$.
\end{keepproof}

\subsection{Phase~II: Bounding the Number of Blocks}
\label{section:phase_2}
Consider the block-cut tree of $G$, rooted at an arbitrary cut vertex of $G$, after exhaustively applying \Cref{rule:I} and \Cref{rule:II}.
\begin{ruleenv}\label{rule:III}
    Let $v$ be a cut vertex of the block-cut tree.
    If all descendant cut vertices of $v$ are already processed, the rule is applicable. 
    Let $C$ be the set of children of $v$ in the block-cut tree, and for each $c \in C$, let $T_c$ denote the sub-block-cut tree rooted at $c$. We set $G_c$ to be the induced subgraph of $G$ obtained by taking the union of all blocks of $T_c$.
    For each $c \in C$, using brute force, check whether $G_c$ is $v$-outer geometric 1-planar. If it is, delete $G_c$ from the instance and $c$ from the block-cut tree.
    Let $C' \subseteq C$ denote the set of children not deleted.
    Write $m:=\max_{c\in C'}|E(G_c)|$.  
    If $|C'|\ge m+2$, reject the instance.
\end{ruleenv}
\Cref{lemma:1fusion,lemma:bounded_non_a_outerplanar} directly imply:
\begin{lemma}\label{lemma:td_correctness_rule_3}
\Cref{rule:III} is safe.
\end{lemma}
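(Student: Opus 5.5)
To show that \Cref{rule:III} is safe, I will check the deletion step and the rejection step separately, in the same way as in \Cref{lemma:td_correctness_rule_1_and_2}. Fix the cut vertex $v$ at which the rule is applied. Let $C$ be its set of children, $C'\subseteq C$ the children that survive, and let $G'$ be the reduced graph, obtained from $G$ by deleting every $G_c$ with $c\in C\setminus C'$ except for the vertex $v$ itself.

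\emph{Deletion.} If $G$ is geometric 1-planar, then so is its subgraph $G'$, because restricting a geometric 1-planar embedding to a subgraph keeps it geometric and 1-planar. For the converse, assume $G'$ is geometric 1-planar and put the deleted children back one at a time. Consider a deleted $c$, and let $H$ be the graph obtained so far. By construction $G_c$ and $H$ are connected, and they meet only in $v$: the subtrees $T_c$ of distinct children of $v$ in the block-cut tree contain disjoint sets of blocks, apart from the shared vertex $v$. The rule deleted $c$ only because $G_c$ is $v$-outer geometric 1-planar. So \Cref{lemma:1fusion}, applied with $G_1:=H$, $G_2:=G_c$ and $a:=v$, shows that $H\cup G_c$ is geometric 1-planar. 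Induction on the number of deleted children gives that $G$ is geometric 1-planar.

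\emph{Rejection.} Suppose $|C'|\ge m+2$. The graphs $G_c$ with $c\in C'$ are connected, pairwise share only $v$, each has at most $m$ edges, and none is $v$-outer geometric 1-planar. By \Cref{lemma:bounded_non_a_outerplanar}, their union is not geometric 1-planar. This union is a subgraph of the current instance, so the instance is not geometric 1-planar either, and rejecting is correct.

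The one point that needs care is the intersection hypothesis used in both steps, namely that the $G_c$ meet exactly in $v$. This holds because each $G_c$ is the union of the blocks in $T_c$, and two distinct children subtrees of $v$ in the block-cut tree share no block and no cut vertex other than $v$. Finally, deletions made at earlier cut vertices do not affect this argument: the rule is applied bottom-up, so by induction each earlier deletion preserved geometric 1-planarity in both directions.
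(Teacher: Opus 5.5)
Your proposal is correct and follows the paper's own proof: you justify deletion by reinserting each deleted $G_c$ at $v$ via \Cref{lemma:1fusion}, and rejection by applying \Cref{lemma:bounded_non_a_outerplanar} to the surviving children. The extra details you supply are all sound and are left implicit in the paper: the subgraph direction, reinserting children one at a time while keeping the graph connected, and the block--cut-tree argument that the $G_c$ pairwise meet only in $v$.
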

\begin{proof}
Let $v$ be the processed cut vertex, $C$ its children in the block–cut tree, and $G_c$ the subgraph for $c\in C$ as in \Cref{rule:III}.
If $G_c$ is $v$-outer geometric 1-planar and we delete it, the deletion is safe: whenever the remaining instance is geometric 1-planar, we can reinsert $G_c$ at $v$ by \Cref{lemma:1fusion}.
If, after deletions, a set $C'\subseteq C$ remains with $|C'|\ge m+2$ where $m=\max_{c\in C'}|E(G_c)|$, then the graphs $\{G_c\}_{c\in C'}$ are connected, pairwise vertex-disjoint except for $v$, none is $v$-outer geometric 1-planar, and each has at most $m$ edges; by \Cref{lemma:bounded_non_a_outerplanar} their union is not geometric 1-planar, so rejection is correct.
\qed
\end{proof}

\begin{lemma}\label{lemma:bounded_block_cut_tree}
After exhaustively applying \Cref{rule:I,rule:II,rule:III}, the total number of blocks is at most
\[
\mathcal{O}\left(
\bigl(2\tbinom{N_d}{2}\bigr)^{\,2^{\mathcal{O}(2^{d})}} \right),
\]
where $N_d$ is the bound from \Cref{lemma:td_block_size} for subtrees of the treedepth decomposition inducing a biconnected graph at height $d$.
\end{lemma}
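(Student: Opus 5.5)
We bound the number of blocks by bounding two things for the rooted block-cut tree after exhaustive reduction: its maximum branching at cut vertices and at blocks, and its height. The bound then follows as (branching)$^{\text{height}}$.

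\emph{Step 1: degree at cut vertices.} Take a cut vertex $v$ with remaining children $C'$. By \Cref{rule:III} we have $|C'|\le m+1$, where $m=\max_{c\in C'}|E(G_c)|$. This bound refers to the sizes of the subgraphs $G_c$, which themselves depend on deeper levels, so I would set up a recursion over the height of the block-cut tree. Let $M_h$ be the maximum number of edges of $G_c$ for a child subtree of height $h$ (counting block levels). A block node has at most $N_d$ vertices by \Cref{lemma:td_block_size}, hence at most $\binom{N_d}{2}$ edges and at most $N_d$ cut-vertex children. Each cut-vertex child has at most $M_{h-1}+1$ children of size at most $M_{h-1}$. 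This gives
\[
M_h\le \tbinom{N_d}{2}+N_d(M_{h-1}+1)M_{h-1}.
\]
So $M_h$ is roughly squared per level, up to factors $\mathrm{poly}(N_d)$. Unrolling yields $M_h\le (2\binom{N_d}{2})^{2^{O(h)}}$, and the number of blocks is at most $M_h$ as well, since every block contributes at least one edge.

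\emph{Step 2: height.} Since $\operatorname{td}(G)\le d$, every simple path has fewer than $2^d$ vertices. A root-to-leaf path in the block-cut tree alternates blocks and cut vertices. We can thread a simple path in $G$ through consecutive cut vertices, because each block is connected and consecutive cut vertices along the path lie in a common block. Hence the path visits $h$ distinct cut vertices, and the tree has height $O(2^d)$. Plugging $h=O(2^d)$ into Step 1 gives $(2\binom{N_d}{2})^{2^{O(2^d)}}$, which is the claimed bound.

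\emph{Main obstacle.} The delicate part is Step 1, because \Cref{rule:III} is applied bottom-up and the bound on $|C'|$ is only in terms of the current $G_c$ sizes. I would argue by induction on height that after processing, every $G_c$ satisfies the $M_h$ bound. Rule applications at ancestors only delete whole subtrees and never enlarge descendants, so the bound is stable under later applications. I also need block sizes to stay bounded by $N_d$; this holds because \Cref{rule:III} deletes entire child subgraphs and does not alter any remaining block. Finally, I would take some care with constants so that the base $2\binom{N_d}{2}$, rather than a larger polynomial in $N_d$, absorbs the per-level factors. Since any polynomial in $N_d$ changes only the $O(\cdot)$ inside the double exponent, this costs at most a constant factor in the exponent.
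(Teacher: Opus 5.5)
Your proposal is correct and follows essentially the same route as the paper. The paper uses a recursion over the height of the block--cut tree, in which \Cref{rule:III} bounds the branching at each cut vertex by one plus the edge count of a child subgraph; this squares the bound at each level. It combines this with the $\mathcal{O}(2^d)$ height bound that comes from treedepth $d$ forbidding simple paths with $2^d$ or more vertices.

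The differences are minor. You recurse on edge counts $M_h$ and explicitly include the factor $N_d$ for the cut-vertex children of a block. The paper recurses on the number of blocks $B_\ell$, converting via $\binom{N_d}{2}$ edges per block, and glosses over that block-level branching. Your version handles this more carefully at no cost to the final bound.
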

\begin{proof}
Let $B_\ell$ be the maximum number of blocks in any subtree of the block-cut tree of height $\ell$, with $B_0=1$.
Each child $c$ contributes at most $B_{\ell-1}$ blocks; by \Cref{lemma:td_block_size} every block has at most $N_d$ vertices, so $|E(G_c)|\le \binom{N_d}{2}\,B_{\ell-1}$.
\Cref{rule:III} bounds the number of children by this edge bound plus one, hence
\[
B_\ell \;\le\; \bigl(\tbinom{N_d}{2}\,B_{\ell-1}+1\bigr)\,B_{\ell-1}
\;\le\; \bigl(2\tbinom{N_d}{2}\bigr)\,B_{\ell-1}^2,
\]
and by induction $B_\ell \le \bigl(2\binom{N_d}{2}\bigr)^{2^{\ell}-1}$.
Finally, every path in the block–cut tree of length $k$ corresponds to a simple path in $G$ of length $\Omega(k)$, and treedepth $d$ bounds simple-path length in $G$ by $<2^{d}$; thus the block–cut tree has height  $\mathcal{O}(2^{d})$, and substituting $\ell=\mathcal{O}(2^{d})$ yields the claimed bound.
\qed
\end{proof}

Finally, we have all ingredients to derive \Cref{theorem:fpt_treedepth}. \ifshort Note that the predicates decided by \Cref{rule:II,rule:III} are in \NP\ by Thomassen's characterization and are evaluated only on reduced, bounded-size subinstances. \fi

\begin{theoremreused}{theorem:fpt_treedepth}
Let $G$ be a graph on $n$ vertices with treedepth at most $d$.
Then \textsc{Geometric 1-Planarity} can be decided in time
\[
\mathcal{O}\!\bigl(2^{\,2^{\,2^{\,2^{\mathcal{O}(d)}}}}\cdot n^{\mathcal{O}(1)}\bigr).
\]
\end{theoremreused}

\begin{proof}
We can process each connected component of $G$ independently with polynomial overhead, thus we assume $G$ is connected.
Let $T$ be a treedepth decomposition of $G$ of depth $d$.
Apply \Cref{rule:I} and \Cref{rule:II} exhaustively in a bottom–up traversal of $T$.
By \Cref{lemma:td_block_size}, every block has at most
\[
N_d \;=\; 2^{\mathcal{O}(d\,3^{d})}\;\le\;2^{\,2^{\mathcal{O}(d)}}
\]
vertices.

Build the block–cut tree and apply \Cref{rule:III} exhaustively.
By \Cref{lemma:bounded_block_cut_tree}, the total number of blocks is at most
\[
\bigl(2\textstyle\binom{N_d}{2}\bigr)^{\,2^{\mathcal{O}(2^{d})}}.
\]
Hence the resulting graph has at most
\[
S(d)\;\coloneqq\;N_d\cdot\bigl(2\textstyle\binom{N_d}{2}\bigr)^{\,2^{\mathcal{O}(2^{d})}}
\;\le\;2^{\,2^{\mathcal{O}(2^{d})}}
\]
vertices.

Whenever a reduction rule requires deciding a predicate (namely, ``$(a,b)$-outer geometric 1-planar'' in \Cref{rule:II} and ``$v$-outer geometric 1-planar'' in \Cref{rule:III}), that predicate is in \NP{} via Thomassen’s characterization~\cite{straightening_characterization}, and by $\NP \subseteq \EXP$, it can be decided in time $2^{\,q^{\mathcal{O}(1)}}$ on inputs with $q$ vertices.
Here $q\le S(d)$, and the total number of such predicate evaluations is polynomial in $n$.

After all rules stop, the remaining instance has at most $S(d)$ vertices.
Since \textsc{Geometric 1-Planarity} is in \NP{}, the final instance can be decided in time $2^{\,S(d)^{\mathcal{O}(1)}}$.
Using $S(d)\le 2^{\,2^{\mathcal{O}(2^{d})}}$ yields the claimed overall running time.
\qed
\end{proof}

\section{Kernelization via Feedback Edge Number}
\label{section:fen}
\iflong
We begin this section by deriving \Cref{theorem:fen_fpt} and \Cref{corollary:fen_k_planarity}.
\else
In this section, we derive \Cref{theorem:fen_fpt} and \Cref{corollary:fen_k_planarity}.
We refer to the appendix for the proof of \Cref{theorem:geom_k_planar_fen_fpt}, i.e., the kernel for \textsc{Geometric $k$-Planarity}.
\fi
Let $G$ be a graph with feedback edge number $\ell \geq 1$. (The problem is trivial for $\ell = 0$.) We remark that it is folklore that an optimal feedback edge set of $G$, i.e., a set $F \subseteq E(G)$ with $|F| = \ell$ whose removal makes $G$ acyclic, can be obtained in polynomial time by computing a spanning tree.
Without loss of generality, we may assume that $G$ has no degree-1 vertices, since iteratively deleting such vertices clearly does not affect (geometric) 1-planarity.
In a graph $H$, a \emph{degree-2 path} is a simple path whose internal vertices all have degree two in $H$; note that this also includes paths of length one.
We define the kernel $G'$ for \textsc{1-Planarity} and $\overline{G'}$ for \textsc{Geometric 1-Planarity} as follows.
By \cite[Lemma~8]{1planar_parameterized}, the edge set of $G$ can be decomposed into at most $3\ell-3$ maximal degree-2 paths. 
Sort these paths by increasing length as $P_1, P_2, \dots, P_p$, where $p \le 3\ell-3$.
For $i > 1$, we call $P_i$ \emph{long} if $|E(P_i)| \ge p - 1 + \sum_{j=1}^{i-1} |E(P_j)|$, and \emph{very long} if $|E(P_i)| \ge 2(p - 1 + \sum_{j=1}^{i-1} |E(P_j)|)$.
We first describe how to obtain $G'$, the kernel for \textsc{1-Planarity}.  
If $P_1$ has length at least $p - 1$, set $G' \coloneqq K_2$, a trivial yes-instance.  
If no $P_i$ is long, set $G' \coloneqq G$.  
Otherwise, let $j$ be the smallest index such that $P_j$ is long.  
Then, construct $G'$ as the union of all $P_i$ with $i < j$, together with all $P_i$ for $i \ge j$, shortened to length $|E(P_j)|$.
The kernel $\overline{G'}$ for \textsc{Geometric 1-Planarity} is defined analogously, using the predicate ``very long'' in place of ``long.''

In the following lemma, we show that, by first applying Reidemeister moves of type~I and II~\cite{reidemeister} and then---in the geometric case---applying Thomassen's characterization~\cite{straightening_characterization}, one can transform a solution drawing of the original graph into a solution drawing of the kernel. In particular, a type~I move removes a self-crossing of a (very) long path, while a type~II move removes two crossings between two mutually crossing (very) long paths.
\begin{lemma}\label{lemma:global_crossing_elimination}
Let $G$ be a graph, and let $\varepsilon$ be a 1-planar (resp.\ geometric 1-planar) embedding of $G$.
Partition $E(G)$ into $s$ edges, which we call \emph{static edges}, and $f$ maximal degree-2 paths, which we call \emph{flexible paths}, each of length at least $s + f - 1$ (resp.\ $2\cdot(s + f - 1)$).
Let $G'$ be obtained from $G$ by shortening each flexible path to length $s + f - 1$ (resp.\ $2\cdot(s + f - 1)$) while preserving its endpoints.
Then there exists a 1-planar (resp.\ geometric 1-planar) embedding of $G'$.
\end{lemma}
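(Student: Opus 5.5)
The plan is to modify the drawing $\varepsilon$ topologically so that every flexible path carries few crossings, then contract the uncrossed stretches of each flexible path. In the geometric case, Thomassen's characterization will restore straightness.

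\emph{Step 1 (topological case).} Start from $\varepsilon$ and apply Reidemeister moves of type~I and~II to the flexible paths as long as possible.
\begin{itemize}
\item A type~I move eliminates a self-crossing of a flexible path $P$. The loop of $P$ between the two crossing edges is shortcut along the crossing, and the edges inside the loop are rerouted so that they follow the shortcut.
\item A type~II move eliminates two crossings between two flexible paths $P,Q$ that cross at least twice. Take two crossings of $P$ and $Q$ that are consecutive along $P$; the subarcs of $P$ and $Q$ between them bound a lens, and $P$ is rerouted to run alongside $Q$ just outside that lens.
\end{itemize}
Any edge that previously crossed the rerouted piece of $P$ must now cross the rerouted piece instead. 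To keep 1-planarity, I would choose the lens to be innermost and reassign its crossings to distinct edges of the rerouted path, which has enough edges because flexible paths are long. Alternatively, I would argue purely at the level of curves: first forget the subdivision vertices of the flexible paths and treat each as a single Jordan arc, apply the moves, and only afterwards redistribute subdivision vertices.

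Each move strictly reduces the total number of crossings, so the process terminates. At the end, no flexible path crosses itself and any two flexible paths cross at most once. Moreover, each static edge is crossed at most once, since the moves do not add crossings to static edges. Hence every flexible path $P$ receives at most $s$ crossings with static edges plus at most $f-1$ crossings with other flexible paths, i.e.\ at most $s+f-1$ crossings in total. As curves, these crossing points appear in some order along $P$. Place subdivision vertices on $P$ so that each of its $s+f-1$ edges contains at most one crossing, merging consecutive uncrossed stretches. This yields a 1-planar drawing of $G'$; any surplus edges are simply left uncrossed.

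\emph{Step 2 (geometric case).} Start from a geometric $\varepsilon$, which is therefore free of $B$- and $W$-configurations, and perform Step~1, but now allot two edges of the flexible path per crossing, giving $2(s+f-1)$ edges. The aim is to place subdivision vertices so that between any two consecutive crossings on a flexible path there is a real vertex, and each crossed edge of a flexible path has a subdivision vertex adjacent to the crossing on both sides.

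Then check that no $B$- or $W$-configuration occurs. In such a configuration the two crossing edges $aa',bb'$ share the endpoint-pair $\{a,b\}$, and for a $B$-configuration the edge $ab$ is also present. If one of these edges lies on a flexible path, the freshly inserted degree-2 vertices have no other neighbours, so a suitable choice forces $a'$ or $b'$ to be a fresh degree-2 vertex with its other edge leaving the bounded region. I expect this to require some care, and I would argue it via the planarization cycle as in \Cref{lemma:1fusion}. Configurations using only static edges are inherited from $\varepsilon$ and so are excluded, assuming the moves do not alter the rotation at real static vertices in a harmful way. Thomassen's theorem then straightens the drawing.

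\emph{Main obstacle.} The main obstacle is making the Reidemeister moves rigorous while keeping every edge crossed at most once. Third-party edges crossing the lens must be rerouted, and in the geometric case the moves must not create $B$- or $W$-configurations among static edges. The first thing I would try is the curve abstraction: work with the arrangement of the static edges plus one arc per flexible path, perform the moves only on the arcs, and subdivide at the very end. This way 1-planarity is only checked once, on the final drawing.
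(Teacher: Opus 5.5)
Your overall architecture matches the paper's: forget the subdivision vertices, simplify the flexible arcs by Reidemeister moves, bound the crossings per arc by $s+f-1$, and resubdivide; in the geometric case, allot two edges per crossing and straighten via Thomassen. Your ``curve abstraction'' is also how the paper proceeds.

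The gap is the type~II move itself. Rerouting $P$ alongside $Q[x,y]$ just outside the lens moves the contents of the lens to the other side of $P$. The lens may contain vertices however you pick it, since ``innermost'' only concerns other lenses. A static edge with an endpoint inside the lens that leaves through $Q[x,y]$ is then crossed by $Q$ and by the rerouted $P$, i.e., twice. Rerouting $Q$ instead fails as soon as some vertex inside the lens has one static edge leaving through each side. So your claim that the moves add no crossings to static edges is false for this move. The claim that each move decreases the number of crossings also fails: arcs that cross $Q[x,y]$ twice, or end inside the lens and leave through $Q[x,y]$, gain crossings with $P$. Hence neither termination nor the bound of $s$ static crossings per arc follows.

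The missing idea is to realize the type~II move as an \emph{exchange} of subarcs: $P$ continues along $Q[x,y]$ and $Q$ along $P[x,y]$. At $x$ and $y$ each new arc uses two consecutive directions of the old crossing, so the arcs merely touch there and can be pulled apart. The union of the flexible arcs is unchanged as a point set, so nothing is moved, every static edge keeps exactly its crossing points, and the number of crossings drops by exactly two. Self-crossings created this way are removed by type~I moves, which only delete curve pieces.

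In the geometric case you again follow the paper, but the decisive check is only gestured at, and your mechanism gives no contradiction. The closed curve of a $B$-configuration consists of an uncrossed edge and two half-edges of already crossed edges. A flexible path entering it cannot leave, but it may simply end inside. The usable local facts are that every fresh vertex has exactly one crossed and one uncrossed edge, and every crossed flexible edge has two fresh endpoints. These facts exclude fresh vertices from $W$-configurations. They also reduce a $B$-configuration containing a fresh vertex to a fresh base vertex $a$ whose uncrossed neighbour $b$ has an edge $bb'$ crossing $aa'$ at some point $c$; if $b$ is fresh, this is a self-crossing of $P$.

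The remaining case is that $b$ is an endpoint of $P$ and $bb'$ is a static edge at $b$ crossing the first crossed edge of $P$. This is a genuine $B$-configuration whenever $P$ ends inside the pocket bounded by $P[b,c]$ and $bb'[c,b]$ and $b'$ lies there as well. It can occur even if $\varepsilon$ is geometric and no move is applied. The paper's twelve-case check does not cover it either, since it treats the path neighbour of a fresh vertex as distinct from the endpoints of the edge crossing the path. You need an extra repair, e.g., reroute the initial part of $P$ inside the pocket alongside $bb'[b,c]$, which is crossed only at $c$. This removes the crossing at $c$, adds none, and can be iterated.
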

\begin{proof}
We first prove the statement for the non-geometric case and then lift it to the geometric setting.

\begin{figure}[t]
    \centering
    \includegraphics[page=6]{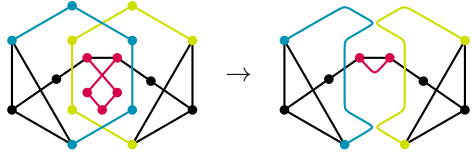}
    \caption{Illustration of the drawing simplification rules from \Cref{lemma:global_crossing_elimination}. The left side shows the original drawing, while on the right, flexible paths are represented as Jordan arcs. Static edges are shown in black, and flexible paths are colored. Rule~I is applied to the red flexible path, and Rule~II to the blue and green flexible paths.}
    \label{figure:global_crossing_elimination}
\end{figure}

\smallskip
\noindent\emph{Non-geometric case.}
We begin by redrawing the flexible paths.
For each flexible path, consider the curve obtained by concatenating the Jordan arcs corresponding to its edges.
We refer to the collection of these arcs as the set of \emph{flexible arcs}.
We apply the following two crossing-elimination rules exhaustively to the flexible arcs to reduce the total number of crossings between them, while preserving, for each static edge, whether it is crossed by a flexible arc.

\begin{enumerate}
    \item[\textbf{Rule~I}] Whenever a flexible arc crosses itself, shortcut the arc by removing the loop.
    \item[\textbf{Rule~II}] Whenever two flexible arcs cross each other twice, interchange the two subarcs enclosed between the crossings.
\end{enumerate}
We remark that Rule~I and Rule~II correspond to the Reidemeister moves of type I and type II, respectively, which are foundational to knot theory \cite{reidemeister}.

See \Cref{figure:global_crossing_elimination} for an example.
Rule~I decreases the total number of crossings by one, and Rule~II by two.
Hence, the process must terminate.
After termination, each flexible arc is simple (i.e., it has no self-crossings) and crosses any other flexible arc at most once.

In the worst case, each flexible arc may intersect every static edge once.
Consequently, each flexible arc participates in at most $s + f - 1$ crossings in total.

We now construct a 1-planar embedding $\varepsilon'$ of $G'$ as follows.
Retain the embedding induced by the static edges in $\varepsilon$.
Insert the simplified flexible arcs into this embedding.
Then, for each flexible arc, traverse it from start to end, and insert a subdivision vertex immediately after each crossing.
If any additional subdivision vertices remain unused, they can be placed arbitrarily.
This yields a valid 1-planar embedding of $G'$, completing the first part of the proof.

\smallskip
\noindent\emph{Geometric case.}
We proceed analogously, now starting from a geometric 1-planar embedding.
When resolving a crossing, we insert two subdivision vertices instead of one—one on each side of the crossing.
Let $\varepsilon'$ denote the resulting 1-planar embedding of $G'$.
We now argue that $\varepsilon'$ is free of $B$ and $W$ configurations, and can hence be straightened by Thomassen's characterization~\cite{straightening_characterization}.
Without loss of generality, assume we did not need to insert any ``leftover'' subdivision vertices, as these can safely be inserted after the straightening step.

It suffices to show that $\varepsilon'$ contains no $B$- or $W$-configuration that touches an internal vertex of a shortened flexible path.
Otherwise, such a configuration would consist entirely of static edges, contradicting the fact that the subdrawing induced by the static edges in the original embedding $\varepsilon$ was geometric 1-planar and therefore free of $B$- and $W$-configurations.

\begin{figure}[!htbp]
    \centering
    \includegraphics[page=3]{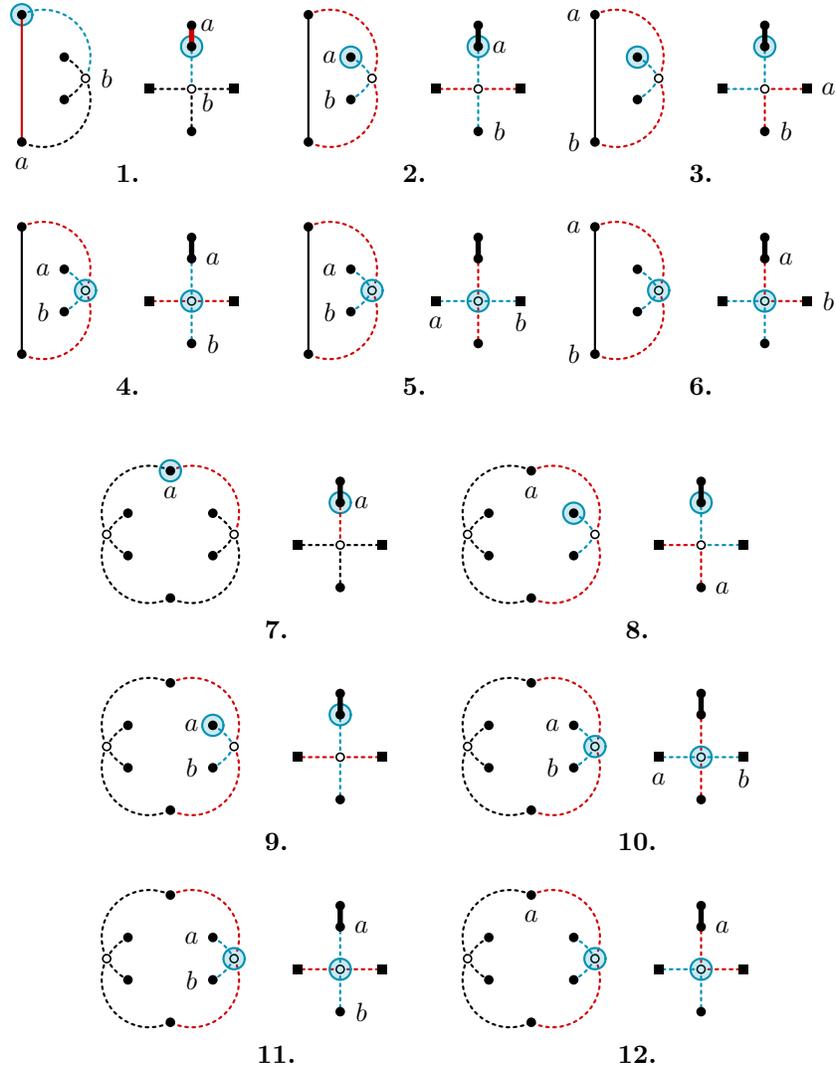}

    \caption{The 12 cases (up to symmetry) of trying to embed a $W/B$ configuration into the planarization of $\varepsilon'$ such that an internal vertex of a flexible path is touched.
    In each case, left is a $B$ or $W$ configuration, right an induced subgraph of the planarization of $\varepsilon'$, consisting of one real internal flexible path vertex adjacent to an uncrossed edge and a crossed edge, which is crossed by another edge not part of the path). The blue circles denote how a real or dummy vertex of a $B$- or $W$-configuration is mapped. Real vertices are drawn as filled disks, dummy vertices unfilled, the endpoints of the edge crossing the path with squares. Half-edges are dashed, uncrossed edges fat. The colors red/blue identify how the respective edges are mapped.}
    \label{figure:fen_no_BW_configs}
\end{figure}

We show this via case analysis; see \Cref{figure:fen_no_BW_configs}.
We consider the planarization of $\varepsilon'$,
and enumerate (up to symmetry) all ways the planarization of a $B$- or $W$-configuration can be embedded into the planarization of $\varepsilon'$ such that an internal vertex of a planarized flexible path is part of the planarized $B$- or $W$-configuration. 
In each of the 12 cases, we map the (dummy or real) vertex encircled in blue of a $B$- or $W$-configuration to one internal vertex of the respective type of a flexible path. Note that each real such vertex is, by construction, adjacent to one uncrossed edge and one half-edge.

In Case~1, the mapping of the blue half-edge adjacent to the blue vertex in the $B$-configuration is forced to one unique option, the blue half-edge with endpoint labeled $b$, as the blue vertex on the flexible path is adjacent to only one half-edge.
This forces the edge marked in red to be mapped to the uncrossed edge with endpoint labeled $a$.
In the $B$-configuration, $a$ and $b$ need to be connected via a half-edge, but they are not on the path. Hence, the chosen blue vertex cannot touch an internal vertex of a flexible path.

We proceed similarly for the remaining 11 cases.
In each case, we consider a vertex of a $B$- or $W$-configuration (real or dummy). In some cases, the mapping of the adjacent edges is forced as in Case 1, in others, there are multiple options, handled in separate cases.

We derive a contradiction in each case as follows:
In Cases 1, 3, and 6, the vertices $a$ and $b$ are not adjacent, but need to be in the $B$- or $W$-configuration. 
In Cases 2, 4, 5, and 9--11, both $a$ and $b$ must lie in the region bounded by the $B$- or $W$-configuration, 
but the mapping forces that only one is. 
Finally, in Cases 7, 8 and 12, the vertex $a$ should have two adjacent half-edges, 
but instead it has only one (together with one uncrossed edge).
\qed\end{proof}

We now justify the ``base case'' of our kernelization. Intuitively, if the shortest path is sufficiently long, we have a trivial yes-instance, as then, we can draw each path as a straight line with the path's endpoints in convex position.
\begin{lemma}\label{lemma:trivial_yes_instance}
Let $G$ be a graph partitioned into $f$ degree-2 paths, each of length at least $f-1$.
Then, $G$ is geometric 1-planar.
\end{lemma}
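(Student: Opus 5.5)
Place the endpoint vertices of the paths (the branch vertices of $G$, i.e., all vertices that are endpoints of some degree-2 path) in convex position, and draw each path by routing it along the straight segment between its two endpoints. Since a convex-position drawing of the "skeleton" segments has no three collinear points, each segment may cross every other segment at most once, so each path's segment is crossed by at most $f-1$ other segments. The internal vertices of the path then let us split that segment into enough subsegments that each is crossed at most once.

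\textbf{Construction.} Let $U$ be the set of vertices that are endpoints of at least one of the $f$ paths, and place $U$ at distinct points on a circle. Self-loops of the skeleton cannot occur, because a maximal degree-2 path whose two endpoints coincide would be a cycle. For such a cycle path with both ends at $u$, we instead draw a thin triangle near $u$ that bulges slightly outside the circle, so it crosses nothing. For a path with distinct endpoints $u,v$, take the chord $uv$. If several paths share the same endpoint pair, their chords would coincide. We therefore route all but one of them as very thin triangles: two straight pieces meeting at an internal vertex placed just outside the circle near the chord. Such a triangle crosses only chords that the original chord crosses, hence still at most $f-1$ other curves, and each other curve at most once. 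The triangles of parallel paths can be nested so that they do not cross one another. Similarly, slight perturbation ensures that no three chords are concurrent and that no vertex lies on a foreign chord.

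\textbf{Subdividing.} Each path $P$ is now a polygonal curve of at most two straight pieces, crossed at most $f-1$ times in total, with all crossings at distinct points. Order the crossing points along $P$. Since $P$ has at least $f-1$ edges, we can place its internal vertices on the curve so that between any two consecutive crossing points there is at least one vertex. Any bend vertex needed for the thin triangles must also be among these vertices. Any extra internal vertices are placed anywhere on the curve away from crossings. Each edge of $P$ is then a straight segment, since vertices sit at any bend, and it contains at most one crossing point.

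\textbf{Main obstacle.} The delicate point is the counting: $f-1$ crossings need only $f-2$ separating vertices, but a triangle bend costs one vertex. I expect the vertex budget to still suffice, because a triangle with a bend needs its bend vertex only as a separator, which the bend itself can provide. If the counting turns out tight, a fallback is to choose the bend so it separates the crossings, or to place parallel triangles outside the circle, where they cross nothing at all.

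Finally, crossings occur only between edges of distinct paths or chords, and all crossings are proper. The drawing is therefore a geometric 1-planar embedding of $G$.
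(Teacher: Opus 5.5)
Your construction is the paper's: put the path endpoints in convex position, draw each path along the chord of its endpoints, and use its internal vertices to separate the at most $f-1$ crossings. The paper's proof is only this construction. It silently ignores that paths with the same endpoint pair would get coinciding chords, and that three chords may be concurrent. Your thin-triangle rerouting and your perturbation are therefore genuine repairs, but two points in the repair are not yet right.

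\textbf{Bend placement.} The bend $w$ must sit at tiny distance from an interior point of the chord $uv$, i.e.\ inside the disk, not ``just outside the circle''. Suppose $w$ is outside the disk but not beyond both tangents at $u$ and $v$. Then one of $uw,wv$ re-enters the disk and meets chords that avoid $uv$. For example, with $w$ just outside the midpoint of the arc cut off by $uv$, a chord with both ends on that arc is crossed twice. So your claim that the triangle meets only chords crossing $uv$, each once, fails. Your fallback of placing triangles outside the circle is not crossing-free either. It needs $w$ beyond both tangents, which is impossible for antipodal $u,v$, and two such outer triangles over interleaving arcs cross each other. With $w$ just off the chord and away from its crossing points, the bent path crosses exactly the paths whose chords cross $uv$, each once. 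Nesting the bends keeps each parallel class free of internal crossings.

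\textbf{Vertex budget.} The budget you leave as an expectation closes at once. Members of one parallel class never cross each other. So if $m\ge 2$ paths join $u$ and $v$, each of them is crossed only by the $c\le f-m\le f-2$ paths whose chords cross $uv$, once each. A bent path needs its bend plus, if $c\ge 1$, at most $c-1$ further separators, i.e.\ at most $\max(c,1)$ internal vertices. Draw the shortest member of each class straight. Since $G$ is simple, every other member has length at least $\max(f-1,2)$, hence at least $\max(f-2,1)\ge\max(c,1)$ internal vertices.

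With these two amendments your argument is a complete, and more careful, version of the paper's proof.
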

\begin{proof}
    To obtain a certifying drawing, place $G$'s vertices in convex position (e.g., on the boundary of a circle), and draw a straight line between the first and last vertex of each path.
    It remains to insert the subdivision vertices of each path to obtain a proper drawing of $G$.
    Observe that each line crosses every other line at most once. Hence, we have enough segments per path ($\geq f-1$) to tolerate the at most $f-1$ crossings, and can place the subdivision vertices accordingly.
\qed\end{proof}

Solving one recurrence per kernel allows us to bound their size. We analyze the worst case, where the shortest path is too short for trivial rejection and no path is (very) long, so no shortening occurs.
\begin{lemma}\label{lemma:fen_kernel_size}
    Graph $G'$ has at most $\mathcal{O}(\ell \cdot 8^{\ell})$ edges, and
    $\overline{G'}$ at most $\mathcal{O}(\ell \cdot 27^{\ell})$.
\end{lemma}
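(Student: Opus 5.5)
We bound the size of $G'$ by a recurrence on the path lengths; the analysis of $\overline{G'}$ is identical with ``very long'' in place of ``long''. Write $L_i := |E(P_i)|$ and $S_i := \sum_{j=1}^{i} L_j$, so that $p \le 3\ell-3$. There are three cases.

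\emph{Case 1: $L_1 \ge p-1$.} Then $G' = K_2$. This is correct by \Cref{lemma:trivial_yes_instance}, and the size bound is trivial.

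\emph{Case 2: some path is long.} Let $j$ be the smallest index with $P_j$ long. Then $G'$ consists of $P_1,\dots,P_{j-1}$ together with $p-j+1$ paths of length $L_j$. Choose any $L_j$ with $L_j \ge p-1+S_{j-1}$ for which shortening the long paths preserves 1-planarity; by \Cref{lemma:global_crossing_elimination}, with the $S_{j-1}$ edges as static edges and the $f = p-j+1$ long paths as flexible, we may take $L_j = S_{j-1}+f-1 \le p-1+S_{j-1}$. In the worst case we may therefore assume $L_j \le p-1+S_{j-1}$, since the kernel shortens to exactly this threshold.

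\emph{Case 3: no path is long.} Then $L_i < p-1+S_{i-1}$ for every $i>1$.

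In both Case 2 and Case 3, every path kept at full length satisfies $L_i \le p-1+S_{i-1}$. This gives the recurrence
\[
S_i = S_{i-1} + L_i \le 2S_{i-1} + p - 1 .
\]
Since Case 1 does not apply, the base is $S_1 = L_1 \le p-2$. Setting $T_i := S_i + p - 1$ turns the recurrence into $T_i \le 2T_{i-1}$ with $T_1 \le 2p$. Hence $S_i \le 2^{i} p$.

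In Case 2, the total number of edges is
\[
S_{j-1} + (p-j+1)\,L_j \le (p+1)\,(2S_{j-1} + p) = \mathcal{O}(p\, 2^{p} p).
\]
In Case 3, it is simply $S_p \le 2^{p} p$.

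Now substitute $p \le 3\ell - 3$. This gives $2^{p} \le 8^{\ell}$, so the bound is $\mathcal{O}(\ell\cdot 8^\ell)$, up to the extra factor $p$ from Case 2. That factor is the one delicate point. I would handle it by bounding more carefully: in Case 2, $S_{j-1} \le 2^{j-1}p$, and the $p-j+1$ shortened paths contribute $(p-j+1)\cdot 2^{j}p$. The product $(p-j+1)\,2^{j}$ is maximized near $j=p$, where it is $\mathcal{O}(2^{p})$, since $\sum (p-j+1) 2^{j} = \mathcal{O}(2^{p})$. So the total is $\mathcal{O}(p\,2^{p}) = \mathcal{O}(\ell\cdot 8^\ell)$ after all.

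For the geometric kernel the threshold doubles. Write $T_i := S_i + 2(p-1)$. The recurrence becomes $S_i \le 3S_{i-1} + 2(p-1)$, so $T_i \le 3T_{i-1}$. The base uses the non-trivial-instance condition $L_1 < p-1$, giving $T_1 = \mathcal{O}(p)$ and hence $S_i = \mathcal{O}(3^{i}p)$. The same weighted-sum argument bounds the shortened paths by $\mathcal{O}(p\,3^{p})$. With $p \le 3\ell - 3$ this is $\mathcal{O}(\ell\cdot 27^{\ell})$.

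The main obstacle is the bookkeeping that avoids an extra factor of $p$ from the shortened paths. The key observation is that in Case 2 we have $(p-j+1)\,2^{j} \le 2^{p+1}$, because $x\,2^{-x} \le 1/2$ for $x \ge 1$; this removes the extra factor of $p$.
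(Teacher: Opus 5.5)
Your proof is correct and follows essentially the paper's argument: both solve the prefix-sum recurrence $S_i \le 2S_{i-1}+p-1$ (tripling in the ``very long'' case) with base $S_1\le p-2$ and then substitute $p\le 3\ell-3$. The paper simply asserts that the unshortened configuration is the worst case, and your bound $(p-j+1)2^{j}\le 2^{p}$ makes that step explicit. One cleanup in Case~2: the bound on the shortened paths should come from the kernel's definition (shortening to the threshold $p-1+S_{j-1}$), not from \Cref{lemma:global_crossing_elimination}, and you should not reuse $L_j$ for that shortened length, since $L_j$ is the original length of $P_j$ and can be arbitrarily large.
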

\begin{proof}
First, we bound $|E(G')|$.  
The worst case is achieved when each $P_i$ is one edge too short to be considered long (i.e., no shortening of long paths takes place), and the first path $P_1$ is one edge too short for the procedure to replace $G$ with a trivial yes-instance. 
Then, we have 
\begin{align*}
    |E(P_i)| &= p - 2 + \sum_{j=1}^{i-1} |E(P_j)| \quad \text{for } 2 \leq i \leq p, \\
    |E(P_1)| &= p - 2 .
\end{align*}
Set 
\[
    S_i \coloneqq \sum_{j=1}^{i} |E(P_j)|,
\]
so that $S_p = |E(G')|$.  
Then, rearranging, we obtain
\begin{align*}
    S_i &= 2 S_{i-1} + p - 2 \quad \text{for } 2 \leq i \leq p, \\
    S_1 &= p - 2 .
\end{align*}
Solving this recurrence yields
\[
    S_i = (2^i - 1)(p - 2) \quad \text{for } 1 \leq i \leq p .
\]
As the number of paths $p$ is at most $3\ell-3$, we obtain
\[
    |E(G')| = S_p \leq \frac{(8^\ell - 8)(3\ell - 5)}{8} = \mathcal{O}(\ell \cdot 8^\ell).
\]

To bound $|E(\overline{G'})|$, we proceed symmetrically, but use the predicate ``very long path'' instead of ``long path''.
This way, we obtain
\begin{align*}
    |E(\overline{G'})|
      &\leq \tfrac{9}{2} \;-\; \tfrac{19}{2}\,3^{3\ell-4}
           \;+\; \left(-3 + 2 \cdot 27^{\,\ell-1}\right) \ell \\
      &= \mathcal{O}(\ell \cdot 27^\ell) \tag*{\qed}
\end{align*}
\end{proof}

Combining the above lemmas yields that our kernelization is correct:
\begin{lemma}\label{lemma:fen_correctness}
$G$ is (geometric) 1-planar if and only if $G'$ (resp. $\overline{G'}$) is.
\end{lemma}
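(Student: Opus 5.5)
We treat the three cases of the kernel construction separately, in both the topological and the geometric setting; the geometric case is identical with ``very long'' in place of ``long'' and the doubled thresholds of \Cref{lemma:global_crossing_elimination}.

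\emph{Trivial cases.} If $P_1$ has length at least $p-1$, every path $P_i$ has length at least $p-1$. By \Cref{lemma:trivial_yes_instance}, applied with $f=p$, the graph $G$ is geometric 1-planar, and hence also 1-planar. So $G$ and the yes-instance $K_2$ agree. If no $P_i$ is long (resp.\ very long), then $G'=G$ (resp.\ $\overline{G'}=G$), and there is nothing to prove.

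\emph{Main case, forward direction.} Let $j$ be the smallest index with $P_j$ long (resp.\ very long). Declare the edges of $P_1,\dots,P_{j-1}$ static, so $s=\sum_{i<j}|E(P_i)|$. Declare $P_j,\dots,P_p$ flexible, so $f=p-j+1$. Since the paths are sorted by length, every flexible path has length at least $|E(P_j)|$, which is at least $p-1+s \ge s+f-1$ (resp.\ at least $2(s+f-1)$).

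Now \Cref{lemma:global_crossing_elimination} turns a (geometric) 1-planar drawing of $G$ into one of the graph $G''$ in which every flexible path is shortened to exactly $s+f-1$ (resp.\ $2(s+f-1)$). The kernel instead shortens these paths to $|E(P_j)|$, which is at least as long. We close this gap by subdividing: extra subdivision vertices can be inserted on any edge without creating crossings. In the geometric case they are placed on the straight segment, so the drawing stays geometric and 1-planar. Hence $G'$ (resp.\ $\overline{G'}$) is (geometric) 1-planar.

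Two technical points need checking here. First, the flexible paths must be maximal degree-2 paths in $G$; they are, since they come from the decomposition of \cite[Lemma~8]{1planar_parameterized}. Second, the lemma's threshold must hold for every flexible path. This follows from monotonicity of the lengths together with $f\le p-j+1$.

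\emph{Main case, backward direction.} Conversely, $G$ is obtained from the kernel by subdividing the paths $P_i$, $i\ge j$, back to their original lengths, which are at least $|E(P_j)|$. Subdivision preserves (geometric) 1-planarity by the same argument as above.

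I expect no real obstacle, since the heavy lifting is done by \Cref{lemma:global_crossing_elimination}. The one step I will verify carefully is that the lemma's parameters $s$ and $f$ are matched by the ``long'' threshold $p-1+\sum_{i<j}|E(P_i)|$. This holds because $f\le p$, so $s+f-1\le p-1+s$.
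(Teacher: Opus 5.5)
Your proof is correct and matches the paper's argument: the same case split, \Cref{lemma:trivial_yes_instance} for the $K_2$ case, \Cref{lemma:global_crossing_elimination} with $P_1,\dots,P_{j-1}$ static and $P_j,\dots,P_p$ flexible for the forward direction, and re-subdivision for the backward direction. Your explicit subdivision step, which lengthens the flexible paths from $s+f-1$ (resp.\ $2(s+f-1)$) to $|E(P_j)|$, together with the check $s+f-1\le p-1+s$, fills in a detail the paper leaves implicit when it applies that lemma directly to obtain $G'$.
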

\begin{proof}
$(\Rightarrow):$
Assume $G$ is 1-planar (resp. geometric 1-planar).
If $G' = K_2$, which if the case if $P_1$ has length at least $p - 1$, $G'$ is trivially 1-planar (resp. geometric 1-planar).
Next, if no $P_i$ is long (resp. very long), we have $G' = G$ (resp. $\overline{G'} = G$), and the statement holds trivially.
Otherwise, let $j$ be the minimum index of a long (resp. very long) path.
Then, we can apply \Cref{lemma:global_crossing_elimination} to $G$ with static edges $\bigcup_{i=1}^{j-1} E(P_i)$
and flexible paths $P_j, \dots, P_p$, and obtain that $G'$ is 1-planar (resp. geometric 1-planar).

$(\Leftarrow):$
Assume $G'$ is 1-planar (resp. $\overline{G'}$ is geometric 1-planar).
If $P_1$ has length at least $p - 1$,
all $p$ paths are at least this long. Hence, by \Cref{lemma:trivial_yes_instance}, $G$ is (geometric) 1-planar.
Next, if no $P_i$ is long (resp. very long), we have $G = G'$ (resp. $G = \overline{G'}$), and the statement holds trivially.
Otherwise, observe that we can obtain $G$ from $G'$ (resp. $\overline{G'}$) by adding subdivision vertices.
Since we can add subdivision vertices to a witness embedding of $G'$ (resp. $\overline{G'}$) while maintaining (geometric) 1-planarity, $G$ is 1-planar (resp. geometric 1-planar).
\qed\end{proof}

We now have all prerequisites to derive \Cref{theorem:fen_fpt}.
\textsc{1-Planarity} (and, more generally, \textsc{$k$-Planarity}) is trivially in~\NP.
\textsc{Geometric 1-Planarity} is also in~\NP, as shown by Hong et~al.~\cite{rediscovered}.
Hence all three problems are decidable.
In \Cref{lemma:fen_correctness} we proved the correctness of our kernelization, 
and in \Cref{lemma:fen_kernel_size}, we bounded the size of the resulting instances $G'$ and~$\overline{G'}$.
Thus, we have:

\fenfpt*

A graph~$G$ is $k$-planar if and only if the graph~$G_k$, obtained by replacing each edge of~$G$ with a path of~$k$ edges, is 1-planar~\cite{k_planarity_gd}.
Since $G_k$ has the same feedback edge number as~$G$, we immediately obtain the following.

\begin{corollary}\label{corollary:fen_k_planarity}
    \textsc{$k$-Planarity}, parameterized by the feedback edge number~$\ell$, admits a kernel with $\mathcal{O}(\ell \cdot 8^\ell)$ edges.
\end{corollary}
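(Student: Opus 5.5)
The plan is to reduce to \Cref{theorem:fen_fpt} through the subdivision equivalence quoted just above: $G$ is $k$-planar if and only if $G_k$, obtained by replacing every edge of $G$ by a path with $k$ edges, is 1-planar~\cite{k_planarity_gd}. Given an instance $(G,k)$ of \textsc{$k$-Planarity} with feedback edge number $\ell$, I would first discard degree-1 vertices as before, and then form $G_k$. Subdividing edges preserves the cyclomatic number $|E|-|V|+c$: each subdivision adds one vertex and one edge. Hence the feedback edge number of $G_k$ is again $\ell$.

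Next I would run the \textsc{1-Planarity} kernelization of \Cref{section:fen} on $G_k$. By \Cref{lemma:fen_correctness}, this yields $G_k'$ with $G_k$ 1-planar if and only if $G_k'$ is 1-planar. By \Cref{lemma:fen_kernel_size}, $G_k'$ has $\mathcal{O}(\ell\cdot 8^\ell)$ edges. Composing equivalences, $G$ is $k$-planar if and only if $G_k'$ is 1-planar.

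To obtain an instance of the original problem (\textsc{$k$-Planarity}), I would output $(G_k',1)$: a 1-planarity instance is a $k$-planarity instance with $k=1$, of the stated size. Alternatively, if one insists on keeping $k$ fixed, note that the degree-2 path decomposition of $G_k$ consists of the paths of $G$ with their lengths multiplied by $k$. The ``long'' threshold depends only on $p$ and the lengths, so one can run the same procedure directly on $G$ via \Cref{lemma:global_crossing_elimination} with the threshold scaled appropriately.

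The main obstacle is this parameter issue. A kernel formally needs its size bounded by a function of $\ell$ alone, and $G_k$ itself is large when $k$ is. This is handled because the kernel $G_k'$ is bounded independently of $k$, and its construction, which sorts paths and truncates them, runs in time polynomial in $|G|$ and $k$. The trivial cases produce $K_2$, which is a valid instance, so the edge bound $\mathcal{O}(\ell\cdot 8^\ell)$ follows immediately.
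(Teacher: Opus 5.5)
Your proposal is correct and follows the paper's argument. You use the equivalence ``$G$ is $k$-planar iff $G_k$ is 1-planar'', observe that subdividing edges preserves the feedback edge number, and then apply the \textsc{1-Planarity} kernel of \Cref{theorem:fen_fpt}. Your extra remarks go beyond what the paper spells out but are consistent with it: outputting $(G_k',1)$, and avoiding an explicitly large $G_k$ either by scaling the thresholds or by noting that $k \ge p-1$ already yields the trivial yes-instance.
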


\iflong
\subsection{Geometric $k$-Planarity}

We now apply our technique to \textsc{Geometric $k$-Planarity}.
We define the kernel exactly as in the beginning of \Cref{section:fen}, with the only difference being the notion of a ``long path.'' (In what follows, all symbols—including the paths $P_1, \dots, P_p$ and the feedback edge number of the input graph~$\ell$—are defined as in the beginning of \Cref{section:fen}.)

Instead of the long (respectively, very long) paths defined there, we construct the kernel with respect to \emph{very, very long paths}.  
Let $s(i) \coloneqq \sum_{j=1}^{i-1} |E(P_j)|$ for $i \in [p]$.  
We say that $P_i$ for $i > 1$ is \emph{very, very long} if  
$
|E(P_i)| \geq (s(i)^2 + 3s(i) + 1) \cdot (2 + p - 1) + 1.
$

To complement this notion, we need a new redrawing strategy, which is to partition a solution drawing into triangles where the interior of each triangle contains only very, very long paths, and then redraw the very, very long paths, essentially as straight lines, inside each triangle.
Formally, a \emph{constrained triangulation} of a geometric planar embedding 
is a geometric super-embedding with the same vertex set where every face is a triangle~\cite{computational_geometry_book}.

First, we give a bound telling us how many triangles are required.

\begin{lemma}\label{lemma:triangulations}
Let $G$ be a $k$-plane geometric graph with $m$ edges, drawn inside a fixed triangular region $\Delta$, and let $P$ be the planarization of this drawing (including $\Delta$). Define $t(m)$ as the maximum number of triangles in a constrained triangulation of $P$. Then
\[
t(m) \le m^2 + 3m + 1.
\]
\end{lemma}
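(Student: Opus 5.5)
The plan is to bound the number of vertices of the planarization $P$ and then use the standard Euler-formula count for triangulations. A constrained triangulation of a plane straight-line graph with $V$ vertices, whose outer boundary is the triangle $\Delta$, has exactly $2V - 5$ bounded triangular faces. This follows from $V - E + F = 2$ together with $3(F-1) + 3 = 2E$, where the outer face is also a triangle. So $t(m)$ is maximized when $V$ is maximized, and it suffices to show $2V - 5 \le m^2 + 3m + 1$.

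To bound $V$, I would count three kinds of vertices of $P$:
\begin{itemize}
\item the three corners of $\Delta$;
\item the real vertices of $G$, of which there are at most $2m$, since isolated vertices can be assumed away or only increase the count harmlessly, and I would handle them by noting each contributes at most a constant;
\item the dummy (crossing) vertices, of which there are at most $\binom{m}{2}$, since two straight segments cross at most once.
\end{itemize}
This gives
\[
V \le 3 + 2m + \tfrac{m(m-1)}{2},
\]
hence
\[
2V - 5 \le m^2 - m + 4m + 1 = m^2 + 3m + 1,
\]
which is exactly the claimed bound. I would also note that $k$-planarity is not needed here; only the straight-line property bounding pairwise crossings matters. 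The planarization is a plane straight-line graph, since crossings become vertices and the segments between consecutive points are straight. Hence a constrained triangulation exists and has straight edges.

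The main subtlety I expect is the treatment of degenerate situations:
\begin{itemize}
\item vertices of $G$ lying on the boundary of $\Delta$, or coinciding with its corners. These only reduce the count.
\item collinear vertex sets, where faces in a "triangulation" must still be non-degenerate. The face count $2V-5$ assumes every bounded face is a proper triangle, and with collinear points on $\Delta$'s sides the identity becomes $2V - 2 - h$ with $h \ge 3$ boundary vertices, which is still at most $2V - 5$.
\item isolated vertices of $G$, which the bound $2m$ on real vertices does not cover. I would argue that isolated vertices can be removed without loss, or that the intended $G$ has no isolated vertices, as in the application where $G$ consists of paths.
\end{itemize}
I would state the inequality $2V - 2 - h \le 2V - 5$ explicitly to cover all these cases uniformly.
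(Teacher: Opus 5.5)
Your proposal is correct and follows essentially the same route as the paper: bound the number of vertices of $P$ by $3 + 2m + \binom{m}{2}$ (corners, edge endpoints, at most one crossing per pair of segments) and plug this into the Euler-formula count of $2N-5$ bounded triangles. Your $2V-2-h \le 2V-5$ remark is slightly more careful than the paper's claim of exactly $2N-4$ faces, which needs adjusting when points lie on the sides of $\Delta$.

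One small correction: isolated vertices do not increase the count harmlessly, since each adds two triangles. The bound therefore genuinely relies on the convention that the vertices of $G$ are edge endpoints. The paper uses this convention implicitly, and it holds in the application to the static edges.
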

\begin{proof}
The quantity $t(m)$ is well-defined, as every plane geometric graph admits a constrained triangulation, see e.g.,~\cite{constrained_delaunay}.  
If $P$ has $N \ge 3$ vertices (comprising the original endpoints, crossing points, and the three corners of $\Delta$) then any maximally planar straight-line graph on these vertices has exactly $2N - 4$ faces (see, e.g.,~\cite{diestel}), and hence $2N - 5$ triangular faces.  
In a geometric drawing, each pair of edges crosses at most once. Hence, $N \le 2m + \binom{m}{2} + 3$.
Thus
\[
t(m) \le 2N - 5 
   \le 2\!\left(2m + \tbinom{m}{2} + 3\right) - 5 
   = m^2 + 3m + 1.
\]
\end{proof}

We are now ready to present the redrawing argument.
\begin{lemma}\label{lemma:triangle_redrawing}
Let $G$ be a graph, and let $\varepsilon$ be a geometric $k$-planar embedding of~$G$.
Partition $E(G)$ into $s$ edges, which we call \emph{static edges}, and $f$ maximal degree-2 paths, which we call \emph{flexible paths}, each of length at least $(s^2 + 3s + 1)\cdot(2 + f - 1) + 1$.
Let $G'$ be obtained from $G$ by shortening each flexible path to length $(s^2 + 3s + 1)\cdot(2 + f - 1) + 1$ while preserving its endpoints.
Then there exists a geometric $k$-planar embedding of $G'$.
\end{lemma}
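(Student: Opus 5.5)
Starting from $\varepsilon$, we keep only the static edges and their straight-line drawing, which we call $\varepsilon_s$. Its restriction is geometric and $k$-planar. We enclose $\varepsilon_s$ in a large triangle $\Delta$ and take the planarization $P$ of $\varepsilon_s$ together with $\Delta$. We then fix a constrained triangulation of $P$. By \Cref{lemma:triangulations}, applied with $m=s$, it has at most $t:=s^2+3s+1$ triangles. Every static edge is then a union of triangulation edges (segments between consecutive crossings or endpoints). Every vertex of $G$ that is an endpoint of a flexible path is a vertex of $P$, since each flexible path is maximal and therefore ends at vertices of degree $\neq 2$, which lie on static edges. The exception is an endpoint not incident to any static edge; such an endpoint we insert as an extra vertex of $P$ before triangulating, which costs only a constant per vertex and is absorbed in the bound.

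Next we reroute each flexible path. For each flexible path $Q$ with endpoints $u,v$, we take its original curve in $\varepsilon$ and record the sequence of triangles of the triangulation it passes through. We then replace it by a polygonal route that stays in the same homotopy-irrelevant sense, but is simpler. Concretely, we pick in the dual graph of the triangulation (vertices are triangles, with adjacency across edges) a path from a triangle at $u$ to a triangle at $v$. We use a shortest such path, possibly restricted to cross only static-edge-segments that $Q$ originally crossed, so that no static edge gains crossings. This path visits each triangle at most once, hence at most $t$ triangles. Inside each triangle the route is a straight segment from an entry point to an exit point on the triangle's boundary, with a bend (a subdivision vertex of $Q$) placed slightly inside at each entry and exit. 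So $Q$ needs at most about $2t$ segments to traverse the triangles.

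The remaining budget handles crossings between flexible paths. Within one triangle, two segments cross at most once. Each segment of $Q$ lying inside a triangle is crossed by at most $f-1$ other flexible segments there, and possibly once by a static edge at the boundary. To keep every edge crossed at most $k\ge 1$ times, we further subdivide each within-triangle segment, placing a bend between consecutive crossings so that each resulting straight edge has at most one crossing. This needs at most $(2+f-1)$ edges per triangle: $f-1$ crossings with other flexible paths plus the two boundary transitions. Over at most $t$ triangles this totals $t(2+f-1)+1$ edges, matching the prescribed length. Any surplus subdivision vertices we place on an uncrossed piece. Because the route crosses only static segments that $Q$ crossed before, and only as often, the static edges keep their crossing counts. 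More precisely, we route $Q$ through a triangle boundary only where it lies on the triangulation's own diagonals, which are not edges of $G$, or where $Q$ crossed that static segment originally.

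The main obstacle is exactly this last point: guaranteeing that the rerouted flexible paths do not increase the number of crossings on any static edge. My first attempt would be to take the dual walk induced by $Q$'s original curve and shortcut repeated triangles, in the style of Rule~I from \Cref{lemma:global_crossing_elimination}. The shortcut walk then crosses a subset of the boundary segments the original curve crossed, each at most once, so static crossings do not increase. The remaining counting then goes through as above.
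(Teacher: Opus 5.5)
Your proposal follows the paper's proof. Both triangulate the static subdrawing via \Cref{lemma:triangulations}, make every flexible path cross each triangle at most once along a single straight segment without gaining static crossings, and charge $2+(f-1)$ subdivision vertices per triangle.

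The paper implements your ``shortcut repeated triangles'' step differently. For each triangle $T$ in turn, it replaces the part of a path between its first and last intersection with a shrunken copy $T'$; this copy is chosen so that the strip $T\setminus T'$ contains no vertex or crossing. The original straight pieces in that strip are kept. This preserves the original static crossings verbatim and avoids choosing new entry points. In particular, it sidesteps a degenerate case in your construction: a route starting at a corner $u$ whose next triangle lies across a side incident to $u$ would run along that side. You fix this by starting the route in the last triangle of the walk that has $u$ as a corner.

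Two small corrections are needed in your version:
\begin{itemize}
\item Drop, rather than ``absorb'', the insertion of flexible-only endpoints as extra triangulation vertices. Each such vertex adds up to two triangles, and the length $(s^2+3s+1)(f+1)+1$ in the statement is exact, not asymptotic. Instead, start the dual walk in the triangle containing such an endpoint.
\item The bounding triangle $\Delta$ must enclose all of $\varepsilon$, not just $\varepsilon_s$. Otherwise the dual walk of an original flexible curve is not defined.
\end{itemize}
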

\begin{proof}
Fix a triangulation in the sense of \Cref{lemma:triangulations} of $\varepsilon$ restricted to the static edges with an added triangle bounding all static and flexible edges.
To obtain the desired drawing of $G'$, we will redraw $\varepsilon$ inside each triangle of the triangulation. Note that by construction, the interior of each triangle can only be intersected by edges of flexible paths.

\begin{figure}[t]
    \centering
    \includegraphics[page=2]{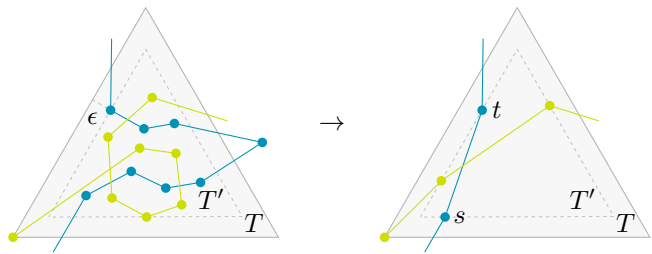}
    \caption{Example of the redrawing step of \Cref{lemma:triangle_redrawing}. Flexible paths are drawn in blue and green, triangulation triangle $T$ in gray, ``shrunken triangle'' $T'$ in dashed outline. On the right, $s$ and $t$ are labeled only for the blue path.}
    \label{figure:triangle_simplification}
\end{figure}

Order the triangles arbitrarily. 
For each triangle $T$, do the following (see \Cref{figure:triangle_simplification} for an example):
Let $\epsilon$ be the minimum distance from the boundary of $T$ to a vertex or crossing in the interior of $T$.
If there is no vertex or crossing inside $T$, we set $\epsilon \coloneqq \infty$ and proceed to the next triangle.
Otherwise, let $T'$ be the region obtained from $T$ by moving its boundary inward along the normal direction by distance $\epsilon$.

We redraw each flexible path $P$ that intersects with $T'$ as follows.
Traverse $P$ from one endpoint to the other. Let $s$ be the first intersection point with $T'$, and $t$ be the last intersection point with $T'$.
Subdivide $P$ at points $s$ and $t$, delete the segments between them, and join $s$ and $t$ via a straight line.

Next, we insert subdivision vertices inside $T'$ to ensure the drawing, restricted to $T'$, is 1-planar.
Since $T'$ is convex and each flexible path $P$ is drawn as a straight-line through $T'$, $P$ crosses at most each of the $f-1$ other flexible paths.
Add one subdivision vertex to $P$ inside $T'$ for each crossing to ensure 1-planarity inside $T'$.

The whole drawing, including the static edges, is still $k$-planar:
A static edge crosses with a fixed flexible path in the original drawing if and only if it does in the simplified drawing.
Flexible edges inside a ``shrunken triangle'' cross other edges at most once,
and (potentially shortened) flexible edges outside a ``shrunken triangle'' received no additional crossings by the redrawing, hence remain crossed at most $k$ times.

Next, we argue that after redrawing, each flexible path $P$ has at most $(s^2 + 3s + 1) \cdot (2 + f - 1)$ internal vertices.
Applying \Cref{lemma:triangulations}, inside each of the at most $t(s) \leq s^2 + 3s + 1$ triangles $T$ of the triangulation,
a flexible path has, by construction, at most two internal vertices on the border of $T'$ plus at most $f-1$ internal vertices inside $T'$ if it intersects $T'$, and no internal vertices otherwise.  

Hence, in total, in the current drawing, a flexible path has at most $(s^2 + 3s + 1) \cdot (2 + f - 1)$ internal vertices, and thus length at most $(s^2 + 3s + 1) \cdot (2 + f - 1) + 1$.

Finally, to obtain a proper drawing of $G'$, we insert the required number of subdivision vertices for each flexible path that is too short arbitrarily.
\qed\end{proof}

Our result now follows analogously to the previous kernels.

\begin{theoremreused}{theorem:geom_k_planar_fen_fpt}
    \textsc{Geometric $k$-Planarity}, parameterized by the feedback edge number~$\ell$,
admits a kernel with $\mathcal{O}\big(2^{\mathcal{O}(3^{\ell}\log \ell)}\big)$ edges.
\end{theoremreused}

\begin{proof}
The correctness of this kernelization follows exactly as in the proof of \Cref{lemma:fen_correctness}, except that we now use the notion of ``very, very long paths'' and apply \Cref{lemma:triangle_redrawing} instead of \Cref{lemma:global_crossing_elimination} to obtain a drawing of the kernel from a drawing of the original graph.

Clearly, the kernel can be computed in polynomial time, and moreover,  \textsc{Geometric $k$-Planarity} is in $\exists\mathbb{R}$ and thus decidable \cite{geometric_k_planar_complexity}\footnote{The authors of \cite{geometric_k_planar_complexity} do not explicitly show membership, but rather implicitly state the claim. However, it is straightforward to show membership by providing a suitable $\exists\mathbb{R}$-sentence as was done for related problems such as \textsc{Geometric Thickness} in~\cite{geometric_thickness}.  }.
It remains to bound the kernel size.
Define $S_i \coloneqq \sum_{j=1}^{i} |E(P_j)|$, so that the kernel has $S_p$ edges.  
Note that $S_1 = \mathcal{O}(\ell)$, $p = \mathcal{O}(\ell)$, and  
$
S_i = \mathcal{O}( S_{i-1}^2 \cdot p ) = \mathcal{O}( S_{i-1}^3 )
$
for $1 < i \leq p$.  
Hence, the kernel has at most $\mathcal{O}(\ell^{ \mathcal{O}(3^p) }) = \mathcal{O}(2^{ \mathcal{O}(3^\ell \log \ell) })$ edges.
\end{proof}

\fi
\section{Lower Bounds}
\iflong
In this section we prove \Cref{theorem:npc_by_fvs,theorem:npc_by_bandwidth}.
\else
In this section we prove \Cref{theorem:npc_by_fvs}. The proof of \Cref{theorem:npc_by_bandwidth}, which establishes the \NP-completeness of \textsc{Geometric 1-Planarity} for bounded bandwidth, is deferred to the appendix. There, we also discuss why the known bandwidth-hardness for the topological case lifts to the geometric setting, whereas the hardness for feedback vertex set and pathwidth does not readily lift.
\fi
\iflong
For bandwidth, we lift the known \NP-hardness from the topological case~\cite{1planar_parameterized} to the geometric setting.
It is standard that bandwidth upper-bounds pathwidth, so hardness under bounded bandwidth already implies hardness under bounded pathwidth.
However, the existing bounded-bandwidth hardness for \textsc{1-Planarity} does not specify a concrete constant.
In contrast, our construction certifies a \emph{concrete} and \emph{small} bound of pathwidth at most~15.

For feedback vertex number, the known hardness in the topological case~\cite{k_planarity_gd} does not transfer by the same route \iflong(the analogue of \Cref{lemma:bandwidth_blowup} for pathwidth does not hold)\else(pathwidth does not stay bounded under replacement of edges with gadgets of constant size)\fi, and neither is it clear that Thomassen's characterization can be applied.
Thus we give a novel reduction from \textsc{Bin Packing}.
\fi

We use the following notion for both results.
A \emph{two-terminal edge gadget} is a graph $H$ with two distinguished \emph{attachment vertices} $\alpha,\beta$.
Given an edge $uv$ of a graph, \emph{replacing $uv$ by $H$} means taking a fresh copy of $H$, call it $H_{uv}$, and identifying its two attachment vertices with $u$ and $v$ (in either order). We call vertices $V(H_{uv})\setminus\{u,v\}$ and edges $E(H_{uv}) \setminus \{uv\}$ \emph{gadget-internal}.

\subsection{Feedback Vertex Number and Pathwidth}

The \emph{feedback vertex number} of a graph is the least number of vertices whose deletion makes the graph acyclic. For a definition of pathwidth, we refer to~\cite{fptbook}.

\paragraph{Bin Packing.}
We reduce from the strongly NP-hard \textsc{Bin Packing} problem \cite{GareyJohnson}, which asks whether, given a finite set $U$ of items with sizes $s(u) \in \mathbb{Z}^+$ for each $u \in U$, a bin capacity $B \in \mathbb{Z}^+$, and an integer $K > 0$, the set $U$ can be partitioned into disjoint subsets $U_1, U_2, \ldots, U_K$ such that $\sum_{u \in U_i} s(u) \le B$ for all $i \in [K]$.

We assume without loss of generality that $K \ge 2$, since the case $K = 1$ is trivial. Moreover, we may assume that $\sum_{u \in U} s(u) = K \cdot B$, meaning that all bins must be exactly filled. This can be ensured without changing the instance’s feasibility: if $\sum_{u \in U} s(u) < K \cdot B$, we can add $K \cdot B - \sum_{u \in U} s(u)$ dummy items of size $1$ each to obtain an equivalent instance, whereas if $\sum_{u \in U} s(u) > K \cdot B$, the instance is trivially infeasible.
Furthermore, we may assume that the minimum item size satisfies $\min_{u \in U} s(u) \geq K + 1$, for if not, we can multiply all item sizes and the bin capacity $B$ by $(K+1)$ to ensure this while preserving equivalence of the instance.

\paragraph{Reduction.}
Fix an instance of \textsc{Bin Packing} with the properties described above.
We construct an instance of \textsc{Geometric 1-Planarity} as follows.
We begin with the triconnected graph shown in \Cref{figure:hardness_example}, with distinguished vertices $s, t, r^\ell_1, r^\ell_2, r^r_1, r^r_2$, which we refer to as the \emph{frame}.
From a copy of $K_6$, we form a two-terminal edge gadget by selecting two distinct vertices, and we replace each edge of the frame with such a gadget.

Next, we add a path of length $|U| + K - 1$ between $r^\ell_1$ and $r^\ell_2$, referred to as the \emph{red left path}, and a path of length $K \cdot B$ between $r^r_1$ and $r^r_2$, referred to as the \emph{red right path}.
We then insert $K-1$ \emph{purple} edges from $s$ to vertices on the red right path so that the right path is subdivided into $K$ subpaths of length $B$ each.  
For each item $u \in U$, we create a fresh copy of $K_{2, s(u)}$, which we call the \emph{diamond} corresponding to $u$. We add an edge between $s$ and one vertex of the size-2 part of the diamond’s bipartition, call it the \emph{diamond vertex} $d_u$, and identify the other with $t$.
Call the union of $u$'s diamond and edge $sd_u$ the \emph{item gadget} of $u$. 
See \Cref{figure:hardness_example} for an illustration.

\looseness=-1
Intuitively, the triconnected frame reinforced with attached $K_6$-gadgets provides a rigid barrier that can essentially only be drawn as shown in the figure, and cannot be crossed by \Cref{lemma:uncrossable_k6}.
Hence, the remaining gadgets are forced to be drawn inside the frame.
The red right path, subdivided by the purple edges, models the bins (each is a subpath of length $B$). The red left path, together with the purple edges that must cross it, encodes the choice of bin for each item gadget.

\begin{figure}[t]
    \centering
    \includegraphics[page=1]{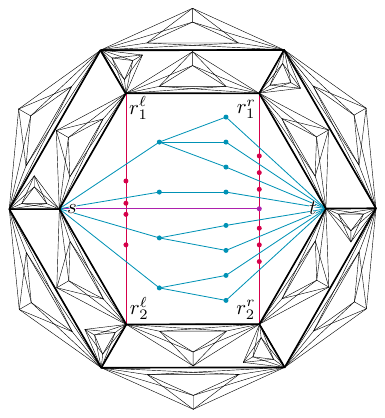}
    \caption{Example of our reduction for a \textsc{Bin Packing} instance $(U, K, B)$ with item sizes $3, 1, 2, 2$, number of bins $K = 2$, and bin capacity $B = 4$. The shown drawing of $G$ induces a solution with one bin containing items of sizes $3,1$, and the other bin with items of sizes $2,2$. 
    For illustrative purposes, we do not have $\min_{u \in U} s(u) \geq K + 1$\iflong, and the instance is not drawn in the exact style used in the backwards direction of the correctness proof\fi.
    The frame graph with distinguished vertices $s, t, r_1^\ell, r_2^\ell, r_1^r, r_2^r$ is drawn in bold black, the $K_6$ gadgets in black, left and right red paths are drawn in red, purple edges in purple, and item gadgets in blue.
    }
    \label{figure:hardness_example}
\end{figure}

\begin{lemma}[\textnormal{Adapted from \protect\cite[Lemma~5.1]{original_np_hardness_1_planar}}]\label{lemma:uncrossable_k6}
In any 1-planar embedding of $K_6$, between any two vertices there exists a path such that all edges in that path are crossed.
\end{lemma}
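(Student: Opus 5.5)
We plan to show that the crossed edges of any 1-planar embedding $\varepsilon$ of $K_6$ form a connected spanning subgraph $C$. Once this holds, any two vertices are joined by a path in $C$, and every edge of that path is crossed, which is exactly the statement of \Cref{lemma:uncrossable_k6}. We argue by contradiction. Suppose $C$ is not connected and spanning. Then the six vertices split into a nonempty set $S$ and its complement $\bar S$ such that no crossed edge runs between them. Since $K_6$ is complete, this means every edge of the complete bipartite graph $K_{|S|,|\bar S|}$ between $S$ and $\bar S$ is uncrossed in $\varepsilon$. By symmetry we may assume $|S|\le 3$, and we treat the three possible sizes in turn.

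\emph{Case $|S|=3$.} The uncrossed edges contain a copy of $K_{3,3}$. This copy is then drawn without any crossing, which is impossible because $K_{3,3}$ is not planar.

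\emph{Case $|S|=2$.} Let $S=\{a,b\}$ and $\bar S=\{w_1,\dots,w_4\}$, labelled in the clockwise order of the edges $aw_i$ around $a$. The uncrossed edges then include a plane drawing of $K_{2,4}$. This drawing splits the plane into four regions, each bounded by a cycle $a w_i b w_{i+1}$ (indices taken mod~4). Every other edge must stay inside a single region, because it cannot cross these uncrossed edges. An edge $w_iw_j$ therefore needs $w_i$ and $w_j$ on the boundary of a common region. This rules out $w_1w_3$ and $w_2w_4$, which do not share a region. Both of these are edges of $K_6$, so we have a contradiction. The one step to verify is that a vertex $w_i$ lies on the boundary of exactly the two regions adjacent to it. This follows from the rotation at $a$ and at $b$ in the plane $K_{2,4}$.

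\emph{Case $|S|=1$.} Let $S=\{v\}$. All five edges at $v$ are uncrossed. Consider the subdrawing of the remaining $K_5$, and look at the faces of its planarization. The point $v$ lies in one face $F$ of this planarization, and $v$ reaches every vertex of the $K_5$ by an arc that stays in $F$. Hence all five vertices lie on the boundary of $F$. After a projective change that makes $F$ the outer face, we obtain an outer-1-planar drawing of $K_5$. We then invoke the known bound that outer-1-planar graphs on $n$ vertices have at most $2.5n-4$ edges. For $n=5$ this allows at most $8$ edges, but $K_5$ has $10$, a contradiction. If we prefer to avoid citing this bound, we can argue directly on the drawing. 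In an outer-1-planar drawing the five vertices appear in some cyclic order on the outer face. The five boundary-order edges can be taken uncrossed, and the five diagonals of a convex pentagon pairwise cross in a pattern that forces some diagonal to be crossed twice.

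We expect this last case to be the main obstacle, since it is the only one that needs more than a planarity argument. The first thing to try there is the direct counting on the pentagon's diagonals. Each diagonal crosses two others in any outer drawing with the vertices in cyclic order, so 1-planarity fails, and this gives the contradiction without outside citation.
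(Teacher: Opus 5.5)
Your proof is correct, but it does not follow the paper's route, because the paper gives no argument for this lemma and simply imports it from Grigoriev and Bodlaender (their Lemma~5.1). You recast the claim as connectivity of the spanning subgraph of crossed edges. You then exclude each possible cut $(S,\bar S)$ of uncrossed edges by its size. For $|S|=3$ the cut yields a plane $K_{3,3}$. For $|S|=2$ it yields a plane $K_{2,4}$ whose faces $aw_ibw_{i+1}$ leave $w_1$ and $w_3$ without a common face. For $|S|=1$ it puts all vertices of the remaining $K_5$ on one face, making $K_5$ outer-1-planar, which it is not. This gives a short, self-contained proof from elementary planarity facts, where the citation buys only brevity. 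It also makes explicit that the crossings on the path come from edges of the same $K_6$, which is exactly what the frame argument in \Cref{lemma:hardness_correctness} needs.

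Two small fixes are needed in the case $|S|=1$. First, making $F$ the outer face is done by a homeomorphism of the sphere (re-projecting from a point of $F$), not a projective transformation; read that way, the appeal to the $2.5n-4$ bound is fine. Second, your direct fallback need not say anything about boundary-order edges "being taken uncrossed." Order $w_1,\dots,w_5$ by the rotation of the uncrossed edges at $v$. Then $vw_1\cup w_1w_3\cup w_3v$ is a simple closed curve with $w_2$ on one side and $w_4,w_5$ on the other. Hence $w_1w_3$ is crossed by both $w_2w_4$ and $w_2w_5$, contradicting 1-planarity.
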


\begin{lemma}\label{lemma:hardness_correctness}
Graph $G$ is geometric 1-planar if and only if the given bin packing instance is positive.
\end{lemma}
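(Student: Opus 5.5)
We prove the two directions separately. For the forward direction (a feasible packing yields a drawing), we build a 1-planar embedding following \Cref{figure:hardness_example}, check that it has no $B$- or $W$-configuration, and then straighten it with Thomassen's theorem~\cite{straightening_characterization}. For the reverse direction, we start from a geometric 1-planar drawing and use the rigidity of the frame to read off a packing.

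\emph{Forward direction.} Let $U_1,\dots,U_K$ be a packing that fills every bin exactly.
\begin{itemize}
\item Draw the frame and its $K_6$-gadgets as in the figure. Each $K_6$ is placed in a 1-planar, $B/W$-free way beside its frame edge, outside the interior region.
\item Place the red left path between $r^\ell_1$ and $r^\ell_2$ and the red right path between $r^r_1$ and $r^r_2$. The $K-1$ purple edges leave $s$ and cut the right path into the $K$ bins.
\item For each $i$, route the item gadgets of $U_i$ into the region between the purple edges bounding bin $i$. Their edges $sd_u$ leave $s$ in a consecutive angular block.
\item The diamond of $u$ is drawn at $t$ with its $s(u)$ degree-2 vertices fanned out. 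Each path $d_u$--$x$--$t$ crosses the bin's subpath of the right red path at most once per edge. Since $\sum_{u\in U_i}s(u)=B$, the $B$ edges of bin $i$ are crossed exactly once each.
\item The left red path has length $|U|+K-1$. Each of its edges is crossed exactly once, by one of the $|U|$ edges $sd_u$ or one of the $K-1$ purple edges, in the angular order at $s$.
\end{itemize}
It then remains to verify $B/W$-freeness. The only crossing pairs that share endpoints are those crossing pairs incident to $s$ or $t$. For these we apply \Cref{lemma:clockwise_characterization}: the crossings at $s$ all occur on the same side, so their rotation word has the form $L^*R^*$. The rest of the embedding decomposes into locally geometric pieces, so no $B$- or $W$-configuration arises.

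\emph{Reverse direction.}
\begin{enumerate}
\item By \Cref{lemma:uncrossable_k6}, each frame edge is shadowed by a path of already-crossed edges, so no other edge can cross the frame. Hence the frame is drawn essentially planarly, and since it is triconnected, its embedding is unique up to the outer face. The outer face is fixed by geometric constraints, because straight-line edges cannot wrap around.
\item All red, purple and item gadgets must therefore lie in the interior face containing $s$, $t$ and the red endpoints.
\item Each edge $sd_u$ and each purple edge must cross the red left path, which separates $s$ from $t$ and from the right path. The left path has exactly $|U|+K-1$ edges, so each of its edges is crossed exactly once. This fixes a linear order at $s$ that interleaves items with purple edges, and so assigns every item to a bin, namely the region between two consecutive purple edges.
\item The right red path separates $t$ from the item side. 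Each of the $s(u)$ internally disjoint $d_u$--$t$ paths of the diamond must cross the right path, using distinct edges by 1-planarity. The assumption $s(u)\ge K+1$ prevents a diamond from spreading across bins: the purple edges, already crossed by the left path, cannot be crossed again, so each diamond stays between its two bounding purple edges.
\item Every bin has $B$ edges and receives $\sum_{u\in U_i}s(u)$ crossings, so it holds at most $B$ in total. This gives a valid packing.
\end{enumerate}

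\emph{Main obstacle.} The hardest part is the topological rigidity argument in the reverse direction: showing that the frame is drawn as in the figure and that every diamond is confined to a single bin. I would first set up a careful separating-cycle argument in the planarization. The cycles used are formed by the frame, the left path together with the purple edges, and the uncrossable $K_6$-shadows. The second most delicate point is the $B/W$ check at $s$ in the forward direction.
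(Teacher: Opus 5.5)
\textbf{Gap in step~3.} The gap is in step~3 of your drawing-to-packing direction (your ``reverse direction''). You assert two things: every edge $sd_u$ and every purple edge crosses the left red path, and the left path separates $s$ from the right path. Neither follows from the frame argument, which only shows that each red path separates $s$ from $t$ inside $R$. From that alone, an item gadget must cross the left path \emph{somewhere}, possibly on a diamond edge with $d_u$ on $s$'s side. Nothing you have shown prevents the two red paths from crossing each other either, and then a purple edge need not meet the left path at all.

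The missing ingredient is a counting argument, and this is exactly where $s(u)\ge K+1$ is used.
\begin{enumerate}
\item Since $sd_u$ is crossed at most once, some red path does not cross $sd_u$. That path has $d_u$ on $s$'s side, so it must be crossed by all $s(u)$ edge-disjoint $d_u$--$t$ paths of the diamond.
\item If this were the left path for some $u$, it would receive at least $(K+1)+(|U|-1)=|U|+K$ crossings from item gadgets. That is more than its $|U|+K-1$ edges allow.
\item Hence every diamond is crossed fully by the right path, i.e., at least $\sum_u s(u)=KB$ diamond edges cross it. So all $KB$ of its edges are used up.
\end{enumerate}
Only now can you conclude that the red paths do not cross, that every purple edge crosses the left path, and that the left path is saturated by the $K-1$ purple edges plus one crossing per item gadget. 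It also follows that each $d_u$ lies strictly between the red paths. Beyond the right path, $sd_u$ would need two crossings; on $s$'s side is excluded by step~2. So the left-path crossing of item $u$ is on $sd_u$.

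\textbf{Further points.} Your use of $s(u)\ge K+1$ in step~4 is misplaced. Confinement of a diamond to its bin follows from the purple edges and all left-path edges already being crossed, as you also say, and needs no size bound. Two smaller points:
\begin{itemize}
\item Fixing the outer face because ``straight-line edges cannot wrap around'' is unjustified and unnecessary. You only need that $R$ is the unique face of the planar shadow drawing containing both $s$ and $t$ (resp.\ both endpoints of each red path). This holds regardless of which face is unbounded.
\item In the packing-to-drawing direction, your $B/W$ check is not correct as written. For example, if two consecutive right-path edges are crossed by two diamond edges at $d_u$ (as happens when $d_u$ sits close to the right path), you get two $(d_u,r)$-crossing pairs involving neither $s$ nor $t$. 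Also, \Cref{lemma:clockwise_characterization} concerns the crossing pairs of one fixed pair $(a,b)$, not all crossings at $s$.
\end{itemize}
The detour through Thomassen is avoidable, as the paper shows. Draw both red paths as straight segments, and put each $d_u$ with $u\in U_i$ just left of the $i$-th bin segment. Draw its diamond inside a thin tunnel around the segment $d_ut$. Then place the subdivision vertices so that every red edge is crossed exactly once; this is already a geometric 1-planar drawing.
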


\begin{proof}

$(\Rightarrow):$
Assume there is a geometric 1-planar embedding $\varepsilon$ of $G$.

\smallskip
\emph{Properties ensured by the frame-construction.}
For each edge $uv$ of the frame,
we can by \Cref{lemma:uncrossable_k6}
find a path from $u$ to $v$ in the $K_6$ attached to $uv$, such that all edges of the path cross with edges of this copy of $K_6$.

The subembedding $\varepsilon'$ of $\varepsilon$ induced by taking one such path for each edge of the frame is thus planar.
Observe that the graph underlying $\varepsilon'$
is, by construction, a subdivison of a triconnected graph (the frame graph depicted in \Cref{figure:hardness_example}).
Therefore, as is well-known, the faces of $\varepsilon'$ are uniquely determined (up to mirror image).
In particular, vertices $s, t, r^\ell_1, r^\ell_2, r^r_1, r^r_2$ lie on a shared face in $\varepsilon'$, call it $R \subseteq \mathbb{R}^2$.
Further, $R$ is the only face in $\varepsilon'$ that contains both $s$ and $t$ (resp. $r^\ell_1$ and $r^\ell_2$, resp. $r^r_1$ and $r^r_2$).

In total, this gives us that all edges of $G$ that are not frame-edges or $K_6$ gadgets are entirely contained inside region $R$ in the original embedding $\varepsilon$.

\smallskip
\emph{Forced crossings.}
Fix an item $u \in U$ and consider its diamond.
At least one red path \emph{crosses the diamond fully}, by which we mean it crosses $\ge s(u)$ edges of the diamond:
If one red path crosses edge $sd_u$, the other red path cannot cross $sd_u$ as well and thus needs to fully cross the diamond.
If no red path crosses edge $sd_u$, both red paths need to cross the diamond fully.

We show that the right red path crosses all diamonds fully.
Suppose for one item $u \in U$, the left red path crosses $u$'s diamond fully.
Since $s(u) \geq K + 1$, this produces at least $K + 1$ crossings.
Each item gadget for items $u' \neq u$ clearly produces at least one crossing with the left red path as well.
Hence the left red path has at least $K + 1 + |U| - 1 = |U| + K$ crossings with item gadgets, which is impossible, since it consists of only $|U| + K - 1$ edges.
Hence for each $u \in U$, the left red path does not cross $u$'s diamond fully.
By the above, this means the right red path crosses all diamonds fully, i.e., it is crossed by at least $\sum_{u\in U} s(u)$ item-gadget edges.
Since it consists of $\sum_{u\in U} s(u)$ edges,
this means each of its edges is crossed by an item-gadget edge.

Thus the red paths cannot cross each other.
Hence each purple edge crosses the left path.
This also means the left red path is saturated with crossings, since there are $K - 1$ purple edges, each item gadget crosses the left red path at least once, and the left red path counts $|U| + K - 1$ edges.

\smallskip
\emph{Defining a bin packing.}
Divide the region $R$ into open regions $R^\ell, R^m, R^r$,
such that $R^\ell$, the \emph{left region}, has $s$ on the boundary, $R^m$, the \emph{middle region}, is bounded by the red paths, 
and $R^r$, the \emph{right region}, has $t$ on the boundary. This is well-defined as the red paths do not cross each other.

Each diamond vertex $d_u$ lies in $R^m$:
If diamond vertex $d_u$ were in $R^r$, edge $sd_u$ would cross both red paths, which is impossible.
If $d_u$ were in $R^\ell$, the left red path would cross $u$'s diamond fully, which we have seen to be impossible.

This also implies that not only is the left red path saturated with crossings from item-gadgets and purple edges, but, more specifically, each crossing is either with a purple edge or an edge of the form $sd_u$ for some $u \in U$.

\looseness=-1
Divide $R^m$ into open regions $R^m_1, \dots, R^m_K$ by subtracting the $K-1$ purple edge-segments from $R^m$. Since all purple edges cross the left red path, this is well-defined.
Hence, each diamond vertex $d_u$ for $u \in U$ lies in precisely one $R^m_i$.
Let $U_1, \dots, U_K$ be the partition of $U$ induced by assigning $u \in U$ to $U_i$ if $d_u$ is in $R^m_i$.

\smallskip
\emph{Showing the packing is a solution.}
We claim $U_1, \dots, U_K$ is a solution to the \textsc{Bin Packing} instance, i.e., we have $\sum_{u \in U_i} s(u) \le B$ for all $i \in [K]$.

Let $i \in [K]$.
We aim to show $\sum_{u \in U_i} s(u) \leq B$.
Let $u \in U_i$.
By construction, $d_u$ lies in $R^m_i$.
Region $R^m_i$ is bounded by two purple edges, the left red path, and the right red path.
The bounding purple edges cross the left red path.
The left red path crosses only with edges of the form $sd_{u'}$ for $u' \in U$.
Thus, the only way to route the edges of $u$'s diamond to $t$ is through the red right path, of which precisely $B$ edges lie on the boundary of $R^m_i$.
Since we know each diamond crosses the right red path fully, $u$'s diamond crosses this portion of the right red path at least $s(u)$ times.

Hence in total, the items in $U_i$ produce at least $\sum_{u \in U_i} s(u)$ crossings in the 
length-$B$ subpath of the right red path bounding $R^m_i$.
But since each of these $B$ edges can only be crossed at most once, we obtain, as required, 
$\sum_{u \in U_i} s(u) \leq B$.

$(\Leftarrow):$
Let $U_1, \dots, U_K$ be a solution to the bin packing instance.
We create a geometric 1-planar embedding $\varepsilon$ of $G$ as follows.
First, draw the frame and $K_6$ gadgets like is displayed in \Cref{figure:hardness_example}.
We draw both red paths as straight lines, but do not fix the position of the subdivision vertices for now, except for the $K-1$ subdivision vertices of the right red path that are adjacent to purple edges, which we position by dividing the line segment from $r^r_1$ to $r^r_2$ into $K$ equidistant segments.

Next, we insert the item gadgets.
For each $i \in [K]$, consider the $i$'th equidistant segment from above, and shift it an arbitrarily small distance to the left.

For each $u \in U_i$, place vertex $d_u$ at a distinct (inner) point of the shifted segment.
To draw the diamond of $u$, consider the line segment from $d_u$ to $t$. Observe that one can draw the diamond within an arbitrarily narrow tunnel around this segment, and can thus avoid all crossings except for with the right red path.

Finally, we can fix the positions of the subdivision vertices of the red paths.

The left red path is crossed by all $K-1$ purple edges and the edge $sd_u$ for each $u \in U$.
As the left red path has length $|U| + K - 1$, we can insert the paths subdivision vertices on the line such that each edge of the path is crossed exactly once.

For the right red path, we need to divide each equidistant segment into $B$ segments, ensuring the right red path has length $K \cdot B$ in total.
By construction, each such equidistant segment is crossed by $\sum_{u \in U_i} s(u) = B$ item gadget edges.
Thus, we can insert the subdivision vertices at appropriate positions such that each edge of the right red path is crossed exactly once.

In total, this gives the required geometric 1-planar embedding of $G$.
\qed\end{proof}

Since deleting the 12 frame vertices from $G$ yields a disjoint union of only $K_4$'s, paths, and stars, we have: 
\begin{lemma}\label{lemma:bounded_pathwidth_and_fvn}
$G$ has pathwidth $\leq 15$ and feedback vertex number $\leq 48$.
\end{lemma}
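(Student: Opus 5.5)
Both bounds follow from the observation stated just before the lemma. Let $F$ be the set of frame vertices. The frame in \Cref{figure:hardness_example} has 12 vertices, and these include $s,t,r^\ell_1,r^\ell_2,r^r_1,r^r_2$. The first step is to describe $G-F$ exactly, by removing $F$ from each type of component in turn.

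\begin{itemize}
\item Each $K_6$ edge gadget loses its two attachment vertices, which lie in $F$. What remains is a $K_4$.
\item The red left path loses its endpoints $r^\ell_1,r^\ell_2$, leaving a path.
\item The red right path loses $r^r_1,r^r_2$, leaving a path.
\item The purple edges all go to $s\in F$, so after deleting $s$ they contribute no edges. Their other endpoints already lie on the right path.
\item Each item gadget loses $s$ and $t$. Then $d_u$ together with the $s(u)$ remaining diamond vertices forms a star $K_{1,s(u)}$ centred at $d_u$.
\end{itemize}

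These pieces are pairwise vertex-disjoint, because all sharing between gadgets happens at frame vertices. Hence $G-F$ is a disjoint union of $K_4$'s, paths and stars.

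For pathwidth, note that a $K_4$ has pathwidth $3$ and paths and stars have pathwidth $1$. A disjoint union has pathwidth equal to the maximum over its components, so $G-F$ has pathwidth at most $3$. Adding all 12 vertices of $F$ to every bag of an optimal path decomposition of $G-F$ gives a path decomposition of $G$. Bags then have size at most $4+12=16$, so the width is at most $15$.

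For feedback vertex number, we delete $F$ and then break each $K_4$. Paths and stars are already acyclic. Deleting 2 vertices from each $K_4$ leaves a single edge, which is a forest. This costs $12+2e_F$ deletions, where $e_F$ is the number of frame edges, so the main obstacle is bounding $e_F$. We need $12+2e_F\le 48$, that is, $e_F\le 18$. Reading off from \Cref{figure:hardness_example}, the frame is a triconnected graph on 12 vertices with exactly 18 edges (it is cubic: $12\cdot 3/2=18$). This gives $12+36=48$. If the count instead turned out larger, I would first try to tighten the argument by deleting fewer vertices per $K_4$, but $K_4$ minus one vertex is a triangle, so 2 deletions per $K_4$ are genuinely needed, and the stated constant must come from $e_F=18$.
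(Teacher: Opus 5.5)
Your proposal is correct and follows the paper's proof. Both delete the 12 frame vertices to leave a disjoint union of 18 $K_4$'s, paths and stars. Adding the frame vertices to every bag then gives pathwidth at most $3+12=15$, and deleting two vertices per $K_4$ gives feedback vertex number at most $12+2\cdot 18=48$. Your cubicity remark (a triconnected 12-vertex graph with 18 edges) is a consistency check on the edge count, which the paper simply states.
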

\begin{proof}
    Deleting the 12 frame vertices from $G$ yields a disjoint union of stars (stemming from the item-gadgets), paths (stemming from the left and right red paths), and 18 $K_4$'s (stemming from the $K_6$ gadgets).
    This disjoint union has pathwidth 3, hence $G$ has pathwidth at most $12 + 3 = 15$.
    Moreover, after deleting the 12 frame vertices, deleting 2 vertices per $K_4$ yields an acyclic graph.
    Hence, the feedback vertex number of $G$ is at most $12 + 18 \cdot 2 = 48$.
\qed\end{proof}

The reduction can be computed in polynomial time, since \textsc{Bin Packing} is strongly \NP-hard~\cite{GareyJohnson}, that is, we can assume the numeric inputs to be encoded in unary.
Finally, recall that \textsc{Geometric 1-Planarity} is in \NP~\cite{rediscovered}.
Thus, by \Cref{lemma:hardness_correctness,lemma:bounded_pathwidth_and_fvn}, we obtain

\begin{theoremreused}{theorem:npc_by_fvs}

    \textsc{Geometric 1-Planarity} remains \NP-complete
    for instances of pathwidth at most~15 or feedback vertex number at most~48.
    
\end{theoremreused}

\iflong
\subsection{Bandwidth}

Let $G$ be a graph with $n$ vertices. The \emph{bandwidth} of $G$, denoted $\mathrm{bw}(G)$, is
\[
 \min\Big\{\max_{uv\in E(G)}\left|\sigma(u)-\sigma(v)\right|\ \mid\ \sigma:V(G)\to[n]\ \text{is a bijection}\Big\},
\]
where we call $\left|\sigma(u)-\sigma(v)\right|$ the \emph{span} of $uv$ in $\sigma$.
In other words, a graph has bandwidth $\leq b$ when we can arrange its vertices on a line with integer coordinates so that adjacent vertices have distance at most $b$.

The key structural insight underlying our hardness result is that bounded bandwidth remains stable under edge replacements by constant-size two-terminal edge gadgets.
\begin{lemma}\label{lemma:bandwidth_blowup}
Let $G$ be a graph with $\mathrm{bw}(G)=b$, and let $H$ be a two-terminal edge gadget on $t$ vertices.
Let $G_H$ be obtained by replacing every edge of $G$ by $H$.
Then
\[
\mathrm{bw}(G_H)\ \le\ (b+1)\,\bigl(1+(t-2)b\bigr).
\]
\end{lemma}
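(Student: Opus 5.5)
We will prove the bound by writing down an explicit linear layout of $G_H$, built from an optimal layout $\sigma$ of $G$, and then bounding the span of every edge. Fix a bijection $\sigma:V(G)\to[n]$ of bandwidth $b$ and list the vertices as $v_1,\dots,v_n$ in $\sigma$-order. Each gadget copy $H_{uv}$ has $t-2$ gadget-internal vertices. Assign every edge $uv$ of $G$, say with $\sigma(u)<\sigma(v)$, to its \emph{earlier} endpoint $u$. The new order $\tau$ is built as a sequence of \emph{blocks}. Block $i$ consists of $v_i$ followed by all gadget-internal vertices of the gadgets assigned to $v_i$, in arbitrary order. The blocks are concatenated in the order $i=1,\dots,n$.

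The first step is to bound the block size. Every edge assigned to $v_i$ goes to a later vertex within distance $b$ in $\sigma$, so $v_i$ has at most $b$ assigned edges. Hence block $i$ contains at most $1+(t-2)b$ vertices, and every vertex of $G_H$ lies in exactly one block.

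The second step is to bound the span of each edge of $G_H$. Every such edge lies inside one gadget copy $H_{uv}$ with $u=v_i$ and $v=v_j$, where $i<j\le i+b$. All of its endpoints therefore lie in blocks $i$ or $j$: the internal vertices lie in block $i$, $u$ is the first vertex of block $i$, and $v$ is the first vertex of block $j$.

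\begin{itemize}
\item If both endpoints lie in block $i$, the span is less than $1+(t-2)b$.
\item Otherwise one endpoint is $v_j$ and the other lies in block $i$. Their positions differ by at most the total size of blocks $i,i+1,\dots,j$, minus one. That is at most $(j-i+1)(1+(t-2)b)\le (b+1)(1+(t-2)b)$.
\end{itemize}

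Taking the maximum over all edges gives $\mathrm{bw}(G_H)\le (b+1)(1+(t-2)b)$. The edge $uv$ itself, if it is kept, has span at most this as well.

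The only delicate point is the choice of assignment. Assigning each gadget to its earlier endpoint is what bounds the block size by $1+(t-2)b$. If we instead used arbitrary degrees, the count would not be controlled by $b$. This works because every vertex has at most $b$ neighbours to its right.
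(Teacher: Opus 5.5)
Your proposal is correct and is essentially the paper's proof: your blocks are exactly the paper's columns $C_x$, each consisting of a vertex together with the gadget-internal vertices of the gadgets to its later neighbours, so each has size at most $1+(t-2)b$, and every cross-block edge is confined to at most $b+1$ consecutive blocks. The only cosmetic difference is that you place $v_i$ first in its block, whereas the paper orders each column arbitrarily; this does not affect the bound.
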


\begin{proof}
Fix a bandwidth-$b$ ordering $\sigma$ of $G$.
For each $x\in V(G)$, form a \emph{column}
\[
C_x\ \coloneqq\ \{x\}\ \cup\!\!\!\!\bigcup_{\substack{xy\in E(G)\\ \sigma(x)<\sigma(y)}}\!\!\!\!
\bigl(V(H_{xy})\setminus\{y\}\bigr).
\]
List the columns in the increasing order of $\sigma(x)$, and order vertices arbitrarily within each column to obtain an ordering $\sigma^\star$ of $V(G_H)$.

As $\sigma$ is a bandwidth-$b$ ordering of $G$, each $x\in V(G)$ has at most $b$ neighbors~$y$ with $\sigma(y)>\sigma(x)$.
Each such edge contributes at most $t-2$ gadget-internal vertices to~$C_x$.
Hence, for all $x \in V(G)$,
\[
|C_x|\ \le\ c\ \coloneqq\ 1+(t-2)b.
\]

Any edge of $G_H$ with both endpoints in one column (all gadget-internal edges not incident to the higher-index endpoint) has span at most $c$ in $\sigma^\star$.
The only inter-column edges of $G_H$ in $\sigma^\star$ are those in some $H_{uv}$ incident to the higher-index endpoint $v$.
Let $u'v$ be such an edge,
and let $C_x$ be the column containing~$u'$.
Since $|\sigma(v)-\sigma(x)|\le b$, the endpoints $u', v$ lie within a block of at most $b+1$ consecutive columns from $C_x$ through $C_v$, each consisting of at most~$c$ vertices.
Hence the span of $u'v$ in $\sigma^\star$ is at most $(b+1)c$.
In total, each edge of $G_H$ has span at most $\max \{c,(b+1)c \} = (b+1)c$ in $\sigma^\star$, and we obtain
\[
\mathrm{bw}(G_H)\ \le\ (b+1)c \ =\ (b+1)\bigl(1+(t-2)b\bigr). \tag*{\qed}
\]
\end{proof}

\Cref{lemma:bandwidth_blowup} allows us to lift the known \NP-hardness of \textsc{1-Planarity} for bounded-bandwidth graphs to the geometric setting, using the polynomial-time reduction from \textsc{1-Planarity} to \textsc{Geometric 1-Planarity} by Schaefer~\cite{geometric_1_planarity_np_hard}.
\npcbybandwidth*
\begin{proof}
    There is a polynomial-time reduction from \textsc{1-Planarity} to \textsc{Geometric 1-Planarity}, obtained by replacing each edge of the input graph with the fixed two-terminal edge gadget shown in \Cref{figure:bandwidth_gadget} \cite[Theorem~2]{geometric_1_planarity_np_hard}.

    \begin{figure}[t]
    \centering
    \includegraphics[page=5]{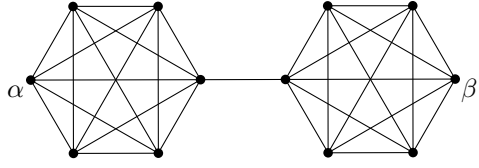}
    \caption{The two-terminal edge gadget with attachment vertices $\alpha, \beta$ used in the \NP-hardness proof of \textsc{Geometric 1-Planarity} by Schaefer~\cite{geometric_1_planarity_np_hard}.}
    \label{figure:bandwidth_gadget}
\end{figure}

    Moreover, it is known that \textsc{1-Planarity} remains \NP-complete even when restricted to graphs of bandwidth at most some fixed constant~$c_1$ \cite[Theorem~4]{1planar_parameterized}. 
    Applying the above reduction to this restricted fragment, and invoking \Cref{lemma:bandwidth_blowup}, we obtain instances of \textsc{Geometric 1-Planarity} whose bandwidth is bounded by another constant~$c_2$.
    Thus, \NP-hardness is established.
    Finally, recall that \textsc{Geometric 1-Planarity} is in \NP~\cite{rediscovered}.
\qed\end{proof}

\fi
\section{Conclusion}

We gave a comprehensive set of results: a fixed-parameter algorithm for \textsc{Geometric 1-Planarity} parameterized by treedepth; subfactorial kernels parameterized by the feedback edge number for \textsc{1-Planarity}, \textsc{$k$-Planarity}, and \textsc{Geometric 1-Planarity}; a kernel for \textsc{Geometric $k$-Planarity} under the same parameterization; and matching \NP-completeness for bounded pathwidth, feedback vertex number, and bandwidth.

We improved the kernel for \textsc{$k$-Planarity}
parameterized by the feedback edge number $\ell$ from $\mathcal{O}((3\ell)!)$ to $\mathcal{O}(\ell\cdot 8^{\ell})$. Is a polynomial kernel possible?

A natural next step is to consider the (parameterized) complexity of \textsc{Geometric $k$-Planarity}. In the topological case, the leap from $1$ to $k$ is trivial for treedepth+$k$ and  feedback edge number: topological $k$-planarity reduces to 1-planarity by replacing each edge with a path of length $k$, which does not blow up these parameters~\cite{k_planarity_gd}. In the geometric case, however, this is far from trivial, as for $k\ge 2$, no analogue of Thomassen’s characterization is known, and such a characterization is in a certain natural sense even impossible for $k \geq 3$~\cite{nagamochi2013straight}. For feedback edge number, we sidestepped this issue and gave an argument that does not rely on such a characterization.

For treedepth, this seems challenging. As a first step, for triconnected graphs, \textsc{Geometric $k$-Planarity} parameterized by treedepth+$k$ is easily seen to be \FPT: only processing akin to \Cref{rule:I} is necessary, and the underlying counting argument can be lifted easily. Does this tractability extend to general graphs in the \emph{uniform} sense? (Note that existence of a \emph{non-uniform} \FPT{} algorithm is always guaranteed for hereditary properties~\cite{sparsity}; in particular, (geometric) $k$-planarity is hereditary for each fixed $k$.)

More fundamentally, for $k\ge 2$, the complexity of \textsc{Geometric $k$-Planarity} is not well understood.
While \textsc{Geometric 1-Planarity} is in \NP~\cite{rediscovered}, it is known that at some point (unless $\NP=\exists\mathbb{R}$), the problem ceases to even be in \NP, as \textsc{Geometric 867-Planarity} is $\exists\mathbb{R}$-complete~\cite{geometric_k_planar_complexity}. When does this shift occur? Is \textsc{Geometric 2-Planarity} in \NP?

\bibliographystyle{splncs04}
\bibliography{refs}

\begin{thebibliography}{10}
\providecommand{\url}[1]{\texttt{#1}}
\providecommand{\urlprefix}{URL }
\providecommand{\doi}[1]{https://doi.org/#1}

\bibitem{rectilinear_crossing_number_bipartite}
{\'{A}}brego, B.M., Dondzila, K., Fern{\'{a}}ndez{-}Merchant, S., Lagoda, E.,
  Sajjadi, S., Sapozhnikov, Y.: On the rectilinear local crossing number of
  \emph{K\({}_{\mbox{m, n}}\)}. J. Inf. Process.  \textbf{25},  542--550
  (2017). \doi{10.2197/IPSJJIP.25.542}

\bibitem{rectilinear_crossing_number_complete}
{\'{A}}brego, B.M., Fern{\'{a}}ndez{-}Merchant, S.: The rectilinear local
  crossing number of k\({}_{\mbox{n}}\). J. Comb. Theory {A}  \textbf{151},
  131--145 (2017). \doi{10.1016/J.JCTA.2017.04.003}

\bibitem{grid_drawings}
Alam, M.J., Brandenburg, F.J., Kobourov, S.G.: Straight-line grid drawings of
  3-connected 1-planar graphs. In: Wismath, S.K., Wolff, A. (eds.) Graph
  Drawing - 21st International Symposium, {GD} 2013, Bordeaux, France,
  September 23-25, 2013, Revised Selected Papers. Lecture Notes in Computer
  Science, vol.~8242, pp. 83--94. Springer (2013).
  \doi{10.1007/978-3-319-03841-4\_8}

\bibitem{1planar_parameterized}
Bannister, M.J., Cabello, S., Eppstein, D.: Parameterized complexity of
  1-planarity. J. Graph Algorithms Appl.  \textbf{22}(1),  23--49 (2018).
  \doi{10.7155/JGAA.00457}

\bibitem{computational_geometry_book}
de~Berg, M., Cheong, O., van Kreveld, M.J., Overmars, M.H.: Computational
  geometry: algorithms and applications, 3rd Edition. Springer (2008).
  \doi{10.1007/978-3-540-77974-2}

\bibitem{ic_planar}
Brandenburg, F.J., Didimo, W., Evans, W.S., Kindermann, P., Liotta, G.,
  Montecchiani, F.: Recognizing and drawing ic-planar graphs. Theor. Comput.
  Sci.  \textbf{636},  1--16 (2016). \doi{10.1016/J.TCS.2016.04.026}

\bibitem{crossing_types}
Cabello, S., Dobler, A., Fijavz, G., Hamm, T., Wagner, M.H.: A dichotomy for
  1-planarity with restricted crossing types parameterized by treewidth. In:
  Chen, H., Hon, W., Tsai, M. (eds.) 36th International Symposium on Algorithms
  and Computation, {ISAAC} 2025, Tainan, Taiwan, December 7-10, 2025. LIPIcs,
  vol.~359, pp. 16:1--16:15. Schloss Dagstuhl - Leibniz-Zentrum f{\"{u}}r
  Informatik (2025). \doi{10.4230/LIPICS.ISAAC.2025.16},
  \url{https://doi.org/10.4230/LIPIcs.ISAAC.2025.16}

\bibitem{constrained_delaunay}
Chew, L.P.: Constrained delaunay triangulations. Algorithmica  \textbf{4}(1),
  97--108 (1989). \doi{10.1007/BF01553881}

\bibitem{fptbook}
Cygan, M., Fomin, F.V., Kowalik, L., Lokshtanov, D., Marx, D., Pilipczuk, M.,
  Pilipczuk, M., Saurabh, S.: Parameterized Algorithms. Springer (2015).
  \doi{10.1007/978-3-319-21275-3}

\bibitem{geometric_thickness}
Depian, T., Fink, S.D., Firbas, A., Ganian, R., N{\"{o}}llenburg, M.: Pathways
  to tractability for geometric thickness. In: Kr{\'{a}}lovic, R.,
  Kurkov{\'{a}}, V. (eds.) {SOFSEM} 2025: Theory and Practice of Computer
  Science - 50th International Conference on Current Trends in Theory and
  Practice of Computer Science, Bratislava, Slovak Republic, January 20-23,
  2025, Proceedings, Part {I}. Lecture Notes in Computer Science, vol. 15538,
  pp. 209--224. Springer (2025). \doi{10.1007/978-3-031-82670-2\_16}

\bibitem{density_geometric_1_planar}
Didimo, W.: Density of straight-line 1-planar graph drawings. Inf. Process.
  Lett.  \textbf{113}(7),  236--240 (2013). \doi{10.1016/J.IPL.2013.01.013}

\bibitem{diestel}
Diestel, R.: Graph Theory, 4th Edition, Graduate texts in mathematics,
  vol.~173. Springer (2012). \doi{10.1007/978-3-662-70107-2}

\bibitem{Eggleton1986}
Eggleton, R.B.: Rectilinear drawings of graphs. Utilitas Mathematica
  \textbf{29},  149--172 (1986)

\bibitem{extending1planar}
Eiben, E., Ganian, R., Hamm, T., Klute, F., N{\"{o}}llenburg, M.: Extending
  partial 1-planar drawings. In: Czumaj, A., Dawar, A., Merelli, E. (eds.) 47th
  International Colloquium on Automata, Languages, and Programming, {ICALP}
  2020, July 8-11, 2020, Saarbr{\"{u}}cken, Germany (Virtual Conference).
  LIPIcs, vol.~168, pp. 43:1--43:19. Schloss Dagstuhl - Leibniz-Zentrum
  f{\"{u}}r Informatik (2020). \doi{10.4230/LIPICS.ICALP.2020.43}

\bibitem{GareyJohnson}
Garey, M.R., Johnson, D.S.: Computers and Intractability: {A} Guide to the
  Theory of NP-Completeness. W. H. Freeman (1979)

\bibitem{k_planarity_gd}
Gima, T., Kobayashi, Y., Okada, Y.: Structural parameterizations of
  $k$-planarity. In: Graph Drawing and Network Visualization (GD 2025) (2025),
  to appear

\bibitem{original_np_hardness_1_planar}
Grigoriev, A., Bodlaender, H.L.: Algorithms for graphs embeddable with few
  crossings per edge. Algorithmica  \textbf{49}(1),  1--11 (2007).
  \doi{10.1007/S00453-007-0010-X}

\bibitem{PartiallyPredrawnCrossingNumber}
Hamm, T., Hlinen{\'{y}}, P.: Parameterised partially-predrawn crossing number.
  In: Goaoc, X., Kerber, M. (eds.) 38th International Symposium on
  Computational Geometry, SoCG 2022, June 7-10, 2022, Berlin, Germany. LIPIcs,
  vol.~224, pp. 46:1--46:15. Schloss Dagstuhl - Leibniz-Zentrum f{\"{u}}r
  Informatik (2022). \doi{10.4230/LIPICS.SOCG.2022.46}

\bibitem{rediscovered}
Hong, S., Eades, P., Liotta, G., Poon, S.: F{\'{a}}ry's theorem for 1-planar
  graphs. In: Gudmundsson, J., Mestre, J., Viglas, T. (eds.) Computing and
  Combinatorics - 18th Annual International Conference, {COCOON} 2012, Sydney,
  Australia, August 20-22, 2012. Proceedings. Lecture Notes in Computer
  Science, vol.~7434, pp. 335--346. Springer (2012).
  \doi{10.1007/978-3-642-32241-9\_29}

\bibitem{reembedding_crossing_pairs}
Hong, S., Nagamochi, H.: Re-embedding a 1-plane graph into a straight-line
  drawing in linear time. In: Hu, Y., N{\"{o}}llenburg, M. (eds.) Graph Drawing
  and Network Visualization - 24th International Symposium, {GD} 2016, Athens,
  Greece, September 19-21, 2016, Revised Selected Papers. Lecture Notes in
  Computer Science, vol.~9801, pp. 321--334. Springer (2016).
  \doi{10.1007/978-3-319-50106-2\_25}

\bibitem{survey}
Kobourov, S.G., Liotta, G., Montecchiani, F.: An annotated bibliography on
  1-planarity. Comput. Sci. Rev.  \textbf{25},  49--67 (2017).
  \doi{10.1016/J.COSREV.2017.06.002}

\bibitem{nagamochi2013straight}
Nagamochi, H.: Straight-line drawability of embedded graphs. Tech. rep.,
  Technical Reports 2013-005, Department of Applied Mathematics and Physics~…
  (2013)

\bibitem{sparsity}
Ne{\v{s}}et{\v{r}}il, J., de~Mendez, P.O.: Sparsity - Graphs, Structures, and
  Algorithms, Algorithms and combinatorics, vol.~28. Springer (2012).
  \doi{10.1007/978-3-642-27875-4}

\bibitem{reidemeister}
Reidemeister, K.: Elementare {B}egr{\"u}ndung der {K}notentheorie. In:
  Abhandlungen aus dem Mathematischen Seminar der Universit{\"a}t Hamburg.
  vol.~5, pp. 24--32. Springer (1927)

\bibitem{schaefer2012graph}
Schaefer, M.: The graph crossing number and its variants: A survey. The
  electronic journal of combinatorics  (2012). \doi{10.37236/2713}

\bibitem{geometric_1_planarity_np_hard}
Schaefer, M.: Picking planar edges; or, drawing a graph with a planar subgraph.
  In: Duncan, C.A., Symvonis, A. (eds.) Graph Drawing - 22nd International
  Symposium, {GD} 2014, W{\"{u}}rzburg, Germany, September 24-26, 2014, Revised
  Selected Papers. Lecture Notes in Computer Science, vol.~8871, pp. 13--24.
  Springer (2014). \doi{10.1007/978-3-662-45803-7\_2}

\bibitem{geometric_k_planar_complexity}
Schaefer, M.: Complexity of geometric k-planarity for fixed k. J. Graph
  Algorithms Appl.  \textbf{25}(1),  29--41 (2021). \doi{10.7155/JGAA.00548}

\bibitem{straightening_characterization}
Thomassen, C.: Rectilinear drawings of graphs. J. Graph Theory  \textbf{12}(3),
   335--341 (1988). \doi{10.1002/JGT.3190120306}

\end{thebibliography}

\end{document}